\documentclass[journal]{IEEEtran}

\usepackage{amsmath,amssymb,amsfonts}
\usepackage{mathtools}
\usepackage{algorithmic}
\usepackage{algorithm}
\usepackage{graphicx}
\usepackage{textcomp}
\usepackage{xcolor}
\usepackage{bm}
\usepackage{booktabs}
\usepackage{cite}
\usepackage{url}
\usepackage{amsthm}
\newtheorem{theorem}{Theorem}
\newtheorem{lemma}{Lemma}
\newtheorem{proposition}{Proposition}
\newtheorem{corollary}{Corollary}
\newtheorem{assumption}{Assumption}
\newtheorem{remark}{Remark}
\newtheorem{definition}{Definition}

\DeclareMathOperator*{\argmax}{arg\,max}
\newcommand{\R}{\mathbb{R}}
\newcommand{\E}{\mathbb{E}}
\newcommand{\Prob}{\mathbb{P}}
\newcommand{\Nrm}{\mathcal{N}}
\newcommand{\wout}{w^{\mathrm{out}}}
\newcommand{\win}{w^{\mathrm{in}}}
\newcommand{\Dout}{D^{\mathrm{out}}}
\newcommand{\Din}{D^{\mathrm{in}}}
\newcommand{\onev}{\bm{1}}
\newcommand{\convas}{\stackrel{\text{a.s.}}{\longrightarrow}}

\newcommand{\includegraphicssafe}[2][]{%
  \IfFileExists{#2}{\includegraphics[#1]{#2}}{%
    \fbox{\parbox{0.92\columnwidth}{\centering Missing figure file:\\\texttt{#2}}}%
  }%
}

\begin{document}

\title{Spatial Dependence in Directed
Preferential-Attachment Networks}

\author{Zihan Li
        and~Tiandong Wang%
\thanks{
This work was supported by the National Natural Science Foundation of China
under Grant 12301660 and the Science and Technology Commission of Shanghai
Municipality under Grant 23JC1400700.
\textit{Corresponding author: Tiandong Wang. The two authors contributed equally to this paper.}}%
\IEEEcompsocitemizethanks{\IEEEcompsocthanksitem Z. Li is with the 
Research Institute of Intelligent Complex Systems, Fudan University, 200433, Shanghai, China.
\IEEEcompsocthanksitem T. Wang is with the Shanghai Center for Mathematical Sciences, Fudan University, 200438, 
and Shanghai Academy of Artificial Intelligence for Science, Shanghai, China. \protect\\ 
E-mail: td\_wang@fudan.edu.cn.}
}

\markboth{IEEE Transactions on Network Science and Engineering,~Vol.~XX,~No.~X,~Month~YYYY}%
{Li and Wang: Spatial Dependence in Directed Preferential-Attachment Networks}

\maketitle

\begin{abstract}
Spatially embedded directed networks, such as airline networks, often exhibit
simultaneous high activity at nearby nodes. Preferential attachment (PA)
explains hub dominance. We extend it to spatial co-movement through a directed
PA model whose out- and in-node weights follow temporally persistent
Gaussian-process lognormal fields. Under sublinear PA, out-degree proportions
converge to explicit normalized powered weights, whereas self-loop exclusion
yields a coupled in-degree limit. We derive a strictly concave inverse that
recovers the in-weights from terminal degree proportions. For ordered network
histories, we develop a minorization--maximization (MM) weight estimator and
profile likelihood for the PA exponent; temporal pre-whitening and a spatial
quasi-likelihood estimate the
latent covariance. Simulations verify transmission of distance-decaying
dependence and show how random segment volume creates a distance-independent
common mode in raw degrees. An analysis of U.S. domestic flights (2015--2019)
separates network-wide volume variation from a short-range spatial component.
An observed-volume reconstruction reproduces the raw-degree common mode, and
the fitted field yields an exploratory co-exceedance transition scale of
roughly 150 km. A per-carrier analysis of European air traffic also reveals the
same decomposition.
\end{abstract}

\begin{IEEEkeywords}
Preferential attachment, spatial networks, high-activity dependence,
Gaussian processes, air transportation networks.
\end{IEEEkeywords}

\section{Introduction}

\IEEEPARstart{M}{any} real-world networks are spatially embedded and
directed, including transportation, communication, and power-grid
systems. In such systems, geography constrains connectivity while flows
have a strong directionality, and extreme states (e.g.\ severe
congestion, large-scale cancellations) usually
exhibit substantial spatial clustering and propagate along the network~\cite{bombelli2023}. For instance, the U.S. airport network is
dominated by a few major hubs, and critical for
mobility and economic activity, which has made it an important case
study in network science~\cite{guimera2005}. The same network is also the
dominant substrate for long-range disease spread, where importation risk
is governed by directed passenger flux and concentrated at hubs~\cite{colizza2006,brockmann2013}, so spatial co-movement of hub activity
is informative about correlated introduction pressure across nearby
airports. Preferential attachment (PA) explains hub dominance and heavy-tailed
degree distributions through cumulative advantage~\cite{barabasi1999,vanderhofstad2016}. Standard PA theory emphasizes degree
asymptotics. We study distance-decaying dependence among node activities.

Spatial extensions of PA incorporate spatial information typically through
distance-dependent attachment probabilities~\cite{flaxman2007,barthelemy2011}, and geometric inhomogeneous random
graphs (GIRG) provide an analytical framework for spatially embedded
networks~\cite{bringmann2019}. These models constrain connectivity by
distance and reproduce degree heterogeneity, but their node fitnesses
or weights are typically independent and identically distributed
(i.i.d.). They therefore do not directly encode a spatial covariance process
for temporally varying node fitnesses; dependence instead arises through the
graph geometry. Our target is the co-movement of unusually high activity at
spatially proximate nodes. Spatial-dependence analysis provides rank-based tools such as the
F-madogram and, in max-stable models, the extremal coefficient~\cite{ledford1996,schlather2003,cooley2006}, but these tools are usually
studied separately from network growth mechanisms. From an inference
perspective, parameter estimation for directed PA models typically
relies on maximum likelihood~\cite{wan2017,pham2016}, which becomes
costly and slowly convergent as the parameter dimension grows, and
existing weight estimators ignore both the correlation among weights
and edge directionality.

This paper embeds spatial high-activity dependence in PA and traces its
transmission from latent node attractiveness to observable network degrees.
Under sublinear PA, out-degree proportions converge to a powered and normalized
function of the latent out-weights. This map preserves latent spatial
dependence and amplifies hub heterogeneity. The main contributions are four-fold.
\begin{enumerate}
\item First, we embed a Gaussian-process driven
lognormal random field as latent node weights in a directed PA model,
such that the network exhibits hub-dominated degree heterogeneity and
distance-decaying finite-level spatial dependence.
\item Theoretically, we derive almost-sure limits for both degree-proportion
channels. Under sublinear PA, the out-degree limit is an explicit vector of
normalized powered weights, whereas self-loop exclusion makes the in-degree limit a
coupled variational solution. A strictly concave inverse recovers the in-weights
from terminal degree proportions. At the linear boundary, unique fitness
maxima produce winner-take-all behavior.
\item For diagnostics and inference, we use the F-madogram and the
madogram dependence coefficient to quantify network-level co-movement, and
we develop a comprehensive estimation procedure consisting of an MM weight step, a
profile-likelihood treatment of the PA exponent, first-order autoregressive
[AR(1)] pre-whitening of
log-ratio coordinates, and plug-in spatial covariance inference
on the resulting residual innovations.
\item For practical illustration, using data from the Bureau of Transportation
Statistics (BTS)~\cite{btsdata}, we model the U.S. domestic airline network and
show that degree co-movement decomposes into observed network-wide volume
variation (a distance-independent common mode)
and a short-range spatial component. A per-carrier analysis of European air
traffic from OpenSky~\cite{opensky2021,openskydata} reproduces this decomposition and an identifiable spatial
range of the same order, providing a geographic replication of the empirical
pattern.
\end{enumerate}

The rest of the paper is organized as follows.
Section~\ref{sec:related} positions the work relative to spatial network
models, spatial-dependence diagnostics, and PA inference. Section~\ref{sec:model}
presents the model and diagnostics. Section~\ref{sec:theory} states the
main theoretical properties. Section~\ref{sec:est} develops the estimation
framework. Section~\ref{sec:sim} reports simulations, Section~\ref{sec:real}
analyzes BTS aviation data, and Section~\ref{sec:concl} concludes. Proofs
are collected in the supplement.

\section{Related Work}
\label{sec:related}

\subsection{PA, Fitness, and Spatial Network Models}
Classical PA produces hubs through cumulative advantage~\cite{barabasi1999,vanderhofstad2016}. Intrinsic node fitness has been added
both to growing PA and to static connection models~\cite{bianconi2001,caldarelli2002}. In their canonical forms, fitness is
quenched and independently sampled, and the principal objects are marginal
degree distributions, scaling exponents, and condensation. Our setting
instead allocates repeated directed edge events among a fixed set of nodes
and lets their latent weights vary across time segments.

Spatial PA and related geometric models introduce location through a
distance-dependent attachment kernel or edge probability~\cite{flaxman2007,barthelemy2011}. GIRGs, for example, combine geometry with
heterogeneous node weights to reproduce spatial connectivity and degree
heterogeneity~\cite{bringmann2019}. In the canonical formulations, the node
weights are sampled independently and spatial dependence is induced by the
edge-formation kernel. Here distance does not enter the conditional
attachment probability directly. It enters through the covariance of the
time-varying latent weight field, so the model targets distance-decaying
co-movement of node activity rather than distance-decaying edge propensity.

\subsection{Fixed-Node Reinforcement}
Conditional on its latent weights, the source-selection process is closely
related to a finite-color nonlinear P\'olya urn. Past selections reinforce
future selection probabilities, with qualitatively different regimes as the
reinforcement exponent changes~\cite{chung2003}. In our weighted version,
sublinear feedback yields a deterministic sharing limit, whereas a unique
maximal weight yields monopoly at the linear boundary. This connection yields
the powered-weight out-degree limit. Self-loop exclusion couples target
selection to the realized source, producing a nonautonomous in-degree process.
We derive its coupled limit, strictly concave variational characterization,
and directed edge-frequency limit. The spatial Gaussian field induces
dependence across nodes and segments.

\subsection{Spatial-Dependence Diagnostics}
The F-madogram is a rank-based spatial-dependence diagnostic and, under
max-stability, is linked to the extremal coefficient~\cite{schlather2003,cooley2006,davison2012}. We use it as a finite-level
copula-dependence measure for latent weights and degree proportions, without a
max-stability assumption. The corresponding transmission result identifies
when the distance-decaying profile remains visible after PA reinforcement and
normalization.

\subsection{Inference for PA with Latent Heterogeneity}
Existing likelihood and joint attachment--fitness estimators primarily
address growing networks observed through ordered edge histories~\cite{wan2017,pham2016}. Coarse timestamps conceal that order. Poisson
edge-growth PA accommodates multiple events at each recorded time~\cite{wangresnick2023}. Our fixed-node model conditions on observed segment
volume and uses limit inversion when only terminal counts are available. Each
event contains two weighted choices, and self-loop exclusion couples the
target likelihood to the source sequence. For a fixed exponent, the
constrained log-weight problem is globally solvable by an MM algorithm;
profiling treats the exponent. Pre-whitened log-ratio coordinates then identify
the spatial covariance of temporally evolving node activity.

\section{Spatial Network Model}
\label{sec:model}

\subsection{Unified Segment-Level Specification}
We propose a hierarchical model combining a temporally persistent spatial
field of node weights, the observed number of edge events in each time segment,
and a conditional within-segment directed PA process. Let
$V=\{1,\ldots,N\}$ be a
fixed vertex set, with node $i$ located at $\bm{x}_i\in D\subset\R^d$,
and let $\ell=1,\ldots,m$ index observational segments.  Segment $\ell$
contains the observed number $T_\ell$ of directed edge events and starts
from zero within-segment degrees.

For $A\in\{\mathrm{out},\mathrm{in}\}$, define the
positive latent weight through a spatial Gaussian process (GP)~\cite{banerjee2014}:
\begin{equation}
\label{eq:spatial-composition}
\log w^{A}_{\ell i}=\mu^{A}(\bm{x}_i)+s^{A}_{\ell}(\bm{x}_i).
\end{equation}
The mean-zero spatial component follows
\begin{equation}
s^A_\ell(\bm{x}_i)=\phi_A s^A_{\ell-1}(\bm{x}_i)
+e^A_\ell(\bm{x}_i),
\qquad |\phi_A|<1,
\label{eq:temporal}
\end{equation}
where the innovation vectors are independent over $\ell$ and have
covariance
\[
\operatorname{Cov}\{e^A_\ell(\bm{x}_i),e^A_\ell(\bm{x}_j)\}
=(1-\phi_A^2)\gamma_A
\exp\{-\|\bm{x}_i-\bm{x}_j\|/\xi_A\}.
\]
Hence, the stationary marginal covariance of $s^A_\ell$ is
\begin{equation}
\label{eq:cov}
\Sigma_A=\Sigma(\bm{x};\gamma_A,\xi_A)
=\bigl\{\gamma_A\exp(-\|\bm{x}_i-\bm{x}_j\|/\xi_A)\bigr\}_{i,j=1}^{N},
\end{equation}
where $\gamma_A$ is the marginal variance and $\xi_A$ is the 
range parameter. Write
$\bm\mu^A=(\mu^A(\bm x_1),\ldots,\mu^A(\bm x_N))^\top$.
Equivalently, at a stationary segment,
\begin{align}
\bm w^{\mathrm{out}}_\ell&\sim\mathrm{Lognormal}
  (\bm\mu^{\mathrm{out}},\Sigma_{\mathrm{out}}),\label{eq:wo}\\
\bm w^{\mathrm{in}}_\ell&\sim\mathrm{Lognormal}
  (\bm\mu^{\mathrm{in}},\Sigma_{\mathrm{in}}),\label{eq:wi}
\end{align}
where the arguments denote the mean and covariance of the underlying
Gaussian log-vector. For a complete joint generator we take the out- and
in-innovation fields to be independent. The marginal results below do not
depend on that working assumption and the channels are fit separately;
their strong empirical association is therefore descriptive evidence for
a future bivariate-field extension, not a prediction of the present model.

Conditional on $(T_\ell,\bm w^{\mathrm{out}}_\ell,
\bm w^{\mathrm{in}}_\ell)$, let $G_\ell(t)$ contain the first $t$ edge
events of segment $\ell$. From $t-1$ to $t$, a directed edge $i\to j$,
$j\ne i$, is added with probability
\begin{align}
\label{eq:pa}
&\Prob_\ell\!\left(i\to j\mid G_\ell(t-1),T_\ell,\bm w_\ell\right)
\nonumber\\
&\quad=\frac{w^{\mathrm{out}}_{\ell i}
  (D^{\mathrm{out}}_{\ell i}(t-1)+1)^{\alpha}}
 {\sum_{k=1}^N w^{\mathrm{out}}_{\ell k}
  (D^{\mathrm{out}}_{\ell k}(t-1)+1)^{\alpha}}\nonumber\\
&\qquad\times
\frac{w^{\mathrm{in}}_{\ell j}
  (D^{\mathrm{in}}_{\ell j}(t-1)+1)^{\alpha}}
 {\sum_{k\ne i}w^{\mathrm{in}}_{\ell k}
  (D^{\mathrm{in}}_{\ell k}(t-1)+1)^{\alpha}},
\end{align}
and self-loops have probability zero. Here $\alpha\in[0,1]$ controls
within-segment degree reinforcement. Multiple edges are allowed, so the
result is a directed multigraph in which repeated $i\to j$ events
represent repeated interactions such as flights~\cite{diestel2010}.
Write $D^A_{\ell i}:=D^A_{\ell i}(T_\ell)$.
\begin{assumption}[Likelihood regularity]
\label{ass:likelihood}
For each network to which the weight likelihood is applied, $N\ge3$ and
every node has positive terminal out- and in-degree.
\end{assumption}

For theoretical and algorithmic statements concerning one generic
network, we suppress $\ell$ and write $T,w_i^A,D_i^A(t)$. The source
process is autonomous, which yields the powered-weight limit in
Section~\ref{sec:theory}. Self-loop exclusion makes the target process
nonautonomous. Its normalizing denominator depends on the current source,
coupling its drift to the source degree proportions. Theorem~\ref{thm:degree-tail}
characterizes the resulting in-degree limit by a coupled fixed point and an
equivalent strictly concave variational problem.

Define the observed segment volume and degree proportions by
\begin{equation}
\label{eq:volume}
T_\ell=\sum_{i=1}^{N}D^A_{\ell i},\qquad
Q^A_{\ell i}:=\frac{D^A_{\ell i}}{T_\ell},
\qquad A\in\{\mathrm{out},\mathrm{in}\}.
\end{equation}
Here $T_\ell$ is the segment's total network activity, and
$Q^A_{\ell i}$ is node $i$'s degree proportion.
For out-degrees and $0<\alpha<1$, the large-horizon result gives
$Q^{\mathrm{out}}_{\ell i}\approx
p_i(\bm w^{\mathrm{out}}_\ell,\alpha)$. Dividing by $T_\ell$ removes its
direct multiplicative scale, although finite-horizon sampling error and
any dependence between $T_\ell$ and the relative weights can remain.
We condition on the observed $T_\ell$ in the PA likelihood and do not
assign it a separate latent model. A stylized independence benchmark for
its contribution to raw-activity dependence is given in
Remark~\ref{cor:floor}.

The fixed vertex set separates PA reinforcement from scale-free degree
asymptotics. As edge volume grows, the $N$ degree proportions converge to a
finite vector. The usual growing-population degree distribution is
absent. PA therefore produces persistent hub heterogeneity without a
scale-free limit~\cite{barabasi1999,broido2019}. The lognormal layer supplies subexponential
latent heterogeneity and a Gaussian spatial copula. The rank-based diagnostics
below measure finite-level spatial co-movement.

\subsection{Quantifying Spatial Co-Movement}
To measure copula-level co-movement without imposing a common marginal scale,
we use the F-madogram~\cite{cooley2006,ribatet2013}:
\begin{equation}
\label{eq:fmado}
v_F(\bm{x}_1-\bm{x}_2)=\tfrac12\,\E\bigl[\,|F_{\bm{x}_1}\{Z(\bm{x}_1)\}
-F_{\bm{x}_2}\{Z(\bm{x}_2)\}|\,\bigr],
\end{equation}
with $F_{\bm{x}}$ the CDF of $Z(\bm{x})$. In practice, $F_{\bm{x}}$ is
replaced by the empirical CDF, giving
\begin{equation}
\label{eq:fmadohat}
\widehat{v}_F(\bm{x}_1-\bm{x}_2)=\frac{1}{2n(n+1)}\sum_{i=1}^{n}
\bigl|R_i(\bm{x}_1)-R_i(\bm{x}_2)\bigr|,
\end{equation}
where $R_i(\bm{x}_j)=\sum_{\ell=1}^{n}\mathbf{1}\{Z_\ell(\bm{x}_j)\le
Z_i(\bm{x}_j)\}$ is the rank of $Z_i(\bm{x}_j)$. The usual consistency
argument uses independent replicates, but for temporally dependent segments,
we use the statistic descriptively and preserve dependence in the block
bootstrap. For sites $\bm{x}_1,\dots,\bm{x}_k$ the
binned estimator is
\begin{equation}
\label{eq:binned}
\widehat{v}_{F,b}(h)=\frac{1}{|C_h|}\!\!\sum_{(i,j)\in C_h}
\!\!\widehat{v}_F(\bm{x}_i-\bm{x}_j),
\end{equation}
with
$C_h=\{(i,j):1\le i<j\le k,\ |\,\|\bm{x}_i-\bm{x}_j\|-h\,|<\delta\}$
for binning radius $\delta>0$.

We report the monotone re-expression
\begin{equation}
\label{eq:theta}
\theta_F=\frac{1+2v_F}{1-2v_F},
\end{equation}
which we call the \emph{madogram dependence coefficient}.  For a
max-stable process this transformation equals the pairwise extremal
coefficient~\cite{schlather2003,cooley2006}. For the Gaussian-copula
model used here, it is instead a finite-level rank-dependence index. Under
independence, we have $v_F=1/6$ and $\theta_F=2$, while complete positive
dependence gives $\theta_F=1$. Population values need not lie in
$[1,2]$ for an arbitrary negatively dependent copula, and finite-sample
estimates can also exceed $2$. 

In addition, we use
$h\mapsto\widehat\theta_F(h)$ as an operational summary. A rising
profile indicates stronger rank co-movement at short lags and weaker
rank co-movement at long lags~\cite{davison2012}. Because the F-madogram uses
all ranks, it is a finite-level copula diagnostic rather than a measure of
asymptotic tail dependence or a direct estimate of the latent covariance
range. Both rank diagnostics are applied to the lognormal
field and to the generated network in Section~\ref{sec:sim}.

\section{Theoretical Properties}
\label{sec:theory}

This section connects the latent weights to long-horizon degree and edge
frequencies. All limits are conditional on fixed positive weights, which
remain random across segments under the Gaussian-process layer. For
$0<\alpha<1$ the autonomous source process has an explicit limit, while
self-loop exclusion couples the target limit to the source degree proportions; at
$\alpha=1$ unique maximal weights produce winner-take-all limits. The
sublinear source limit is summarized by the transmission map
\begin{equation}
\label{eq:transmission-map}
\bm{w}\longmapsto
\bm{p}(\bm{w},\alpha),\qquad
p_i=\frac{w_i^{1/(1-\alpha)}}{\sum_{k=1}^{N}w_k^{1/(1-\alpha)}}.
\end{equation}
As $\alpha\uparrow1$ the exponent $r=1/(1-\alpha)$ concentrates the limiting
proportions on the largest weights, whereas $\alpha=0$ leaves attachment
fitness-driven. The limit identifies only the composite
$w_i^r/\sum_k w_k^r$, not $\alpha$ and $\bm w$ separately, so identifying
$\alpha$ needs the transient edge history (Remark~\ref{rem:alpha-ident}); the
unnormalized field $W^r$ retains the Gaussian copula and range of $W$ with
log-variance scaled by $r^2$, while the shared random denominator means that
the normalized vector need not retain that copula
(Proposition~\ref{prop:alpha-invariance}).

\begin{theorem}
\label{thm:degree-tail}
Consider the directed spatial preferential-attachment model in
\eqref{eq:pa}, and condition on fixed positive weights
$(\wout,\win)$. 

(i) Suppose $0<\alpha<1$ and define
\[
Y_i^A(T)=\frac{D_i^A(T)+1}{T+N},
\qquad A\in\{\mathrm{out},\mathrm{in}\}.
\]
Then as $T\to\infty$,
\[
Y^{\mathrm{out}}(T)\convas p^{\mathrm{out}},
\]
where
\[
p_i^{\mathrm{out}}=\frac{(\wout_i)^{1/(1-\alpha)}}
{\sum_{k=1}^{N}(\wout_k)^{1/(1-\alpha)}}.
\]
The in-degree proportions also converge almost surely, i.e.\
\[
Y^{\mathrm{in}}(T)\convas p^{\mathrm{in}},
\]
where $p^{\mathrm{in}}$ is the unique interior solution of
\begin{equation}
\label{eq:in-fixed-point}
p_j^{\mathrm{in}}=\win_j(p_j^{\mathrm{in}})^\alpha
\sum_{i\ne j}\frac{p_i^{\mathrm{out}}}
 {\sum_{k\ne i}\win_k(p_k^{\mathrm{in}})^\alpha},
\qquad j=1,\ldots,N.
\end{equation}
Equivalently,
\begin{equation}
\label{eq:in-variational}
p^{\mathrm{in}}=\argmax_{y\in\Delta_N}
\sum_{i=1}^{N}p_i^{\mathrm{out}}
\log\!\left(\sum_{k\ne i}\win_k y_k^\alpha\right),
\end{equation}
where $\Delta_N=\{y\in\mathbb R_+^N:\sum_i y_i=1\}$, and the objective
in \eqref{eq:in-variational} is strictly concave.
Hence, $D_i^A(T)/T\convas p_i^A$ for $A\in\{\text{out, in}\}$. For any threshold
$u$ satisfying $u\neq p_i^{\mathrm{out}}$ for all $i$,
\begin{align*}
\overline F_T^{\mathrm{out}}(u)
&=\frac{1}{N}\sum_{i=1}^{N}
\mathbf{1}\!\left\{\frac{\Dout_i(T)}{T}>u\right\}\\
&\longrightarrow
\frac{1}{N}\sum_{i=1}^{N}\mathbf{1}\{p_i^{\mathrm{out}}>u\}
\quad \text{a.s.}
\end{align*}
Under the GP-lognormal construction
\eqref{eq:spatial-composition}--\eqref{eq:wi}, the limiting out-degree
ranking and concentration are therefore governed by the powered latent field
$(\wout_i)^{1/(1-\alpha)}$, whose log-covariance is multiplied by
$(1-\alpha)^{-2}$.

(ii) For $\alpha=1$, assume that
\[
i_*:=\argmax_i\wout_i
\quad\text{and}\quad
j_*:=\argmax_{j\ne i_*}\win_j
\]
are unique. Then the linear PA setup leads to the winner-take-all
limits:
\[
Y^{\mathrm{out}}(T)\convas e_{i_*},\qquad
Y^{\mathrm{in}}(T)\convas e_{j_*}.
\]
Under the nondegenerate GP-lognormal specification, both maximizers are
unique almost surely.
\end{theorem}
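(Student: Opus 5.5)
We will use stochastic approximation (the ODE method for generalized Pólya urns) separately for the source and target channels. For the source process write $Y^{\mathrm{out}}(T+1)=Y^{\mathrm{out}}(T)+\frac{1}{T+N+1}\{F(Y^{\mathrm{out}}(T))-Y^{\mathrm{out}}(T)+\varepsilon_{T+1}\}$. Here
\[
F_i(y)=\frac{\wout_i y_i^\alpha}{\sum_k\wout_k y_k^\alpha},
\]
using that $(D+1)^\alpha=(T+N)^\alpha Y^\alpha$ and that the scale cancels. The term $\varepsilon$ is a bounded martingale difference, so the noise is summable because $\sum (T+N)^{-2}<\infty$.

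For $0<\alpha<1$ the interior zeros of $F(y)-y$ satisfy $y_i^{1-\alpha}\propto\wout_i$, which gives $p^{\mathrm{out}}$. To show convergence we exhibit the Lyapunov function
\[
L(y)=\tfrac{1}{\alpha}\log\sum_k\wout_k y_k^\alpha,
\]
or equivalently the entropy-type potential $\sum_i y_i\log\wout_i-(1-\alpha)\sum_i y_i\log y_i$. Its gradient is aligned with $F-y$, so the ODE converges to the unique maximizer on the simplex. Boundary equilibria ($y_i=0$) are unstable, because near the boundary $F_i/y_i\sim y_i^{\alpha-1}\to\infty$. A standard nonconvergence-to-unstable-points argument (Pemantle), or the bound $D_i(T)\ge$ a sum of Bernoulli trials with probability at least $c\,T^{-(1-\alpha)}$ which diverges, keeps each $Y_i$ bounded away from $0$. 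Benaïm's limit-set theorem then gives a.s.\ convergence to $p^{\mathrm{out}}$.

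For the target, condition on the realized source at step $t$. The drift of $Y^{\mathrm{in}}$ is
\[
G_j(y,Y^{\mathrm{out}})=\sum_{i\ne j}F_i(Y^{\mathrm{out}})\,\frac{\win_j y_j^\alpha}{\sum_{k\ne i}\win_k y_k^\alpha}-y_j .
\]
Since $Y^{\mathrm{out}}\to p^{\mathrm{out}}$ a.s., this is an asymptotically autonomous stochastic approximation whose limiting vector field is $G(\cdot,p^{\mathrm{out}})$. Its zeros are exactly the fixed points \eqref{eq:in-fixed-point}.

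We handle existence, uniqueness and the Lyapunov structure through the variational problem. With $\Phi(y)=\sum_i p^{\mathrm{out}}_i\log(\sum_{k\ne i}\win_k y_k^\alpha)$ we have
\[
\partial_j\Phi=\alpha y_j^{\alpha-1}\win_j\sum_{i\ne j}\frac{p_i^{\mathrm{out}}}{S_i(y)},\qquad S_i(y)=\sum_{k\ne i}\win_k y_k^\alpha .
\]
The KKT condition with a Lagrange multiplier $\lambda$ gives $y_j\propto$ the right side of \eqref{eq:in-fixed-point}. Summing over $j$ shows $\sum_j\win_j y_j^\alpha\sum_{i\ne j}p_i/S_i=\sum_i p_i=1$, so the multiplier normalizes correctly. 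Interior optimality follows from the infinite slope at $y_j=0$, which uses $N\ge3$ so that every $j$ appears in some $S_i$ with positive $p_i$.

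For strict concavity we substitute $z_k=y_k^\alpha$. The map $\log$ of a positive linear form is concave, and $y\mapsto y^\alpha$ is strictly concave. Composing a nondecreasing concave function with strictly concave inner functions gives strict concavity, after checking that each coordinate enters some $S_i$ with positive weight; this again uses $N\ge3$.

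The expected main obstacle is that $G$ is not literally the gradient of $\Phi$. We will use the potential $\Phi(y)-\alpha\sum_j y_j\log y_j$, or a rescaled one, and show $\langle\nabla(\cdot),G\rangle\ge0$ with equality only at the fixed point. If that fails, the fallback is a contraction argument for the fixed-point map in Hilbert's projective metric, using that the map is $\alpha$-homogeneous and therefore an $\alpha$-contraction. This fallback gives global asymptotic stability of the limiting ODE, and Benaïm's asymptotic pseudotrajectory theorem for asymptotically autonomous systems then yields $Y^{\mathrm{in}}\to p^{\mathrm{in}}$.

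The remaining claims follow directly. The limit $D/T\to p$ holds because $(D+1)/(T+N)$ and $D/T$ differ by $O(1/T)$. The empirical tail converges because, with $u\ne p_i$, each indicator eventually stabilizes. The powered field has $\log(\wout)^{r}=r\log\wout$, so its covariance is $r^2\Sigma$.

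For (ii), the case $\alpha=1$ is a weighted linear urn. For the source, embed it in independent continuous-time Yule processes with rates $\wout_i$ (Rubin's construction). Then $D_i(\tau)\sim e^{\wout_i\tau}$, so the node with the unique largest rate dominates and $Y^{\mathrm{out}}\to e_{i_*}$. For the target, eventually almost every source is $i_*$, so targets are drawn from $k\ne i_*$ with the same linear weights, up to finitely many exceptional steps plus a vanishing fraction. Comparing with the Yule embedding on $V\setminus\{i_*\}$ gives $Y^{\mathrm{in}}\to e_{j_*}$; the time change absorbs the exceptional steps because their effect on ratios is negligible. Finally, uniqueness of the maxima holds a.s.\ because a nondegenerate Gaussian vector has ties $\log w_i=\log w_k$ with probability zero.
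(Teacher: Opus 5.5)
\textbf{Main gap: stability of the target ODE.} Your source-channel argument is sound. It differs from the paper only in the Lyapunov function. Your $L(y)=\alpha^{-1}\log\sum_k\wout_k y_k^\alpha$ has $\partial_iL=F_i/y_i$, so $\dot L=\sum_iF_i^2/y_i-1\ge0$ by Cauchy--Schwarz; the paper instead uses $\sum_ip_i\log(p_i/y_i)$ with Chebyshev's sum inequality.

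The gap is in the target channel, at the step you flag yourself. Your candidate $\Psi=\Phi-c\sum_jy_j\log y_j$ fails for every $c\neq0$:
\begin{itemize}
\item Every interior trajectory of $\dot y=Q(p^{\mathrm{out}},y)-y$ does converge to $p^{\mathrm{in}}$. So any function nondecreasing along all trajectories must attain its maximum over the open simplex at $p^{\mathrm{in}}$.
\item But at the fixed point $\nabla\Phi(p^{\mathrm{in}})=\alpha\onev$, so $\nabla\Psi(p^{\mathrm{in}})=(\alpha-c)\onev-c\log p^{\mathrm{in}}$. This is not a multiple of $\onev$ unless $p^{\mathrm{in}}$ is uniform.
\end{itemize}

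Your fallback fails as stated too. The map $y\mapsto Q(p^{\mathrm{out}},y)$ is homogeneous of degree $0$, not $\alpha$. It is also not order-preserving: raising $y_k$ lowers $Q_j$ for $j\neq k$ through the denominators $S_i$. So the ``order-preserving and $\alpha$-homogeneous implies $\alpha$-contraction'' principle does not apply. Writing $Q$ as $y^\alpha$ times an order-reversing factor only gives Hilbert-metric constant $2\alpha$, which is not a contraction for $\alpha\ge1/2$.

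\textbf{The fix.} $\Phi$ itself is a strict Lyapunov function; $G$ does not need to be a gradient. The derivative you computed is exactly $\partial_j\Phi=\alpha\,Q_j(p^{\mathrm{out}},y)/y_j$. This is the same structure as $\partial_iL=F_i/y_i$ in your own source argument, so
\[
\frac{d}{dt}\Phi(y(t))=\alpha\sum_j\frac{Q_j}{y_j}(Q_j-y_j)=\alpha\Bigl(\sum_j\frac{Q_j^2}{y_j}-1\Bigr)\ge0 .
\]
The inequality is Cauchy--Schwarz with $\sum_jQ_j=\sum_jy_j=1$, and equality forces $Q=y$. This is the paper's argument. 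Combined with your strict-concavity and KKT analysis, it closes the in-degree limit.

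\textbf{Smaller points.}
\begin{itemize}
\item Your Bernoulli-trial bound does not keep $Y_i$ away from zero. A state-independent lower bound on the selection probability is only of order $t^{-\alpha}$. That gives $\Dout_i(T)\gtrsim T^{1-\alpha}=o(T)$, not $\liminf_TY_i^{\mathrm{out}}(T)>0$.
\item Pemantle's nonconvergence theorem needs a smooth field near the unstable point. That fails on the boundary, where $y_i^\alpha$ is not Lipschitz.
\item What works is the drift bound $F_i(y)-y_i\ge cy_i^\alpha-y_i>0$ for small $y_i$. You also need the analogous bound $Q_j\ge c'y_j^\alpha$ for the target channel, which you do not address.
\item In part (ii), the steps whose source is not $i_*$ are not finitely many. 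There are of order $T^{\rho+\varepsilon}$ of them, with $\rho=\max_{i\ne i_*}\wout_i/\wout_{i_*}<1$. A linear urn reinforces perturbations, so it is this rate, which the paper uses to make the perturbation summable, that justifies discarding them.
\end{itemize}
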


The proofs (in the supplement) adapt classical stochastic-approximation and
urn arguments to the directed, self-loop-excluding setting. The linear-case
uniqueness condition is essential: if $M=\argmax_i\wout_i$ has more than one
node, $Y^{\mathrm{out}}(T)$ converges to a random vector on $M$ whose positive
coordinates are $\operatorname{Dirichlet}(1,\ldots,1)$, so $\alpha=1$ is a
genuine phase boundary rather than a continuous extension of the sublinear
formula. Self-loop exclusion keeps a deterministic in-degree limit but removes
the powered closed form: $p_j^{\mathrm{in}}\propto(\win_j)^{1/(1-\alpha)}$
holds exactly only without exclusion, and with loops excluded it is a
diffuse-network approximation (no fixed-$N$ bound claimed). At
$\alpha=0$, $p_i^{\mathrm{out}}=\wout_i/\sum_k\wout_k$ and the in-degree
limit has the explicit form
\[
p_j^{\mathrm{in}}=\win_j\sum_{i\ne j}
\frac{p_i^{\mathrm{out}}}{\sum_{k\ne i}\win_k}.
\]

A finer consequence is convergence of all directed edge frequencies.
\begin{theorem}
\label{thm:edge-measure}
Condition on the node locations $\{\bm{x}_i\}_{i=1}^{N}$ and on positive
latent weights $\{\wout_i,\win_i\}_{i=1}^{N}$. Let $0<\alpha<1$,
let $p^{\mathrm{out}}$ and $p^{\mathrm{in}}$ be the limits in
Theorem~\ref{thm:degree-tail}, and let $A_{ij}(T)$ be the number of
directed edges from $i$ to $j$ up to time $T$. Define
\[
K_{ij}=\mathbf{1}\{i\neq j\}
\frac{\wout_i(p_i^{\mathrm{out}})^\alpha}
{\sum_{a=1}^{N}\wout_a(p_a^{\mathrm{out}})^\alpha}
\frac{\win_j(p_j^{\mathrm{in}})^\alpha}
{\sum_{b\neq i}\win_b(p_b^{\mathrm{in}})^\alpha}.
\]
Then $A_{ij}(T)/T\convas K_{ij}$ for every ordered pair $(i,j)$ with
$i\neq j$. Equivalently, the empirical directed edge measure
\[
\Gamma_T=\sum_{i\neq j}\frac{A_{ij}(T)}{T}\delta_{(\bm{x}_i,\bm{x}_j)}
\]
converges almost surely, in the weak sense of finite measures, to
$\Gamma=\sum_{i\neq j}K_{ij}\delta_{(\bm{x}_i,\bm{x}_j)}$.

For $\alpha=1$, under the unique-maximizer conditions of
Theorem~\ref{thm:degree-tail},
\[
\Gamma_T\convas
\delta_{(\bm{x}_{i_*},\bm{x}_{j_*})}.
\]
\end{theorem}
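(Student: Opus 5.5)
The plan is a martingale (strong law) decomposition of $A_{ij}(T)$ combined with the almost-sure degree limits of Theorem~\ref{thm:degree-tail}. Write $\mathcal F_{t}$ for the $\sigma$-field generated by $G(t)$. Let $\pi_{ij}(t-1)$ denote the one-step probability \eqref{eq:pa} of adding $i\to j$ at step $t$. Then
\[
M_{ij}(T)=A_{ij}(T)-\sum_{t=1}^{T}\pi_{ij}(t-1)
\]
is an $\mathcal F_T$-martingale whose increments are bounded by $1$. Its conditional variances sum to at most $T$. The martingale strong law (e.g.\ $\sum_t t^{-2}\E[(\Delta M)^2\mid\mathcal F_{t-1}]<\infty$ together with Kronecker's lemma) then gives $M_{ij}(T)/T\to0$ almost surely. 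It therefore suffices to show that the Ces\`aro average $T^{-1}\sum_{t\le T}\pi_{ij}(t-1)$ converges almost surely to $K_{ij}$.

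To see this, rewrite $\pi_{ij}(t-1)$ in terms of the normalized quantities $Y^A(t-1)=(D^A(t-1)+1)/(t-1+N)$. The common factor $(t-1+N)^\alpha$ cancels in both ratios, so $\pi_{ij}(t-1)=\Phi_{ij}(Y^{\mathrm{out}}(t-1),Y^{\mathrm{in}}(t-1))$, where
\[
\Phi_{ij}(y,z)=\frac{\wout_i y_i^\alpha}{\sum_a\wout_a y_a^\alpha}\,
\frac{\win_j z_j^\alpha}{\sum_{b\ne i}\win_b z_b^\alpha}.
\]
This map $\Phi_{ij}$ is continuous on $\Delta_N\times\Delta_N$ away from the sets where a denominator vanishes. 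At the limit point $(p^{\mathrm{out}},p^{\mathrm{in}})$ all coordinates are positive: $p^{\mathrm{out}}$ is explicit, and $p^{\mathrm{in}}$ is interior by Theorem~\ref{thm:degree-tail}(i). Hence both denominators are bounded away from $0$ near this point; in particular $N\ge2$ ensures $\sum_{b\ne i}\win_b(p_b^{\mathrm{in}})^\alpha>0$. On the almost-sure event where $Y^{\mathrm{out}}(t)\to p^{\mathrm{out}}$ and $Y^{\mathrm{in}}(t)\to p^{\mathrm{in}}$, continuity gives $\pi_{ij}(t-1)\to\Phi_{ij}(p^{\mathrm{out}},p^{\mathrm{in}})=K_{ij}$. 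Ces\`aro averages of a convergent sequence converge to the same limit, so $A_{ij}(T)/T\to K_{ij}$ almost surely, simultaneously for the finitely many pairs $(i,j)$.

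Weak convergence of $\Gamma_T$ to $\Gamma$ is then immediate. Both measures are supported on the fixed finite set $\{(\bm x_i,\bm x_j)\}$, so for any bounded continuous $f$ we have $\int f\,d\Gamma_T=\sum_{i\ne j}f(\bm x_i,\bm x_j)A_{ij}(T)/T$, a finite linear combination of almost surely convergent terms. As a consistency check, summing $K_{ij}$ over $j\ne i$ gives $\wout_i(p^{\mathrm{out}}_i)^\alpha/\sum_a\wout_a(p_a^{\mathrm{out}})^\alpha$. This equals $p_i^{\mathrm{out}}$ because $\wout_i(p_i^{\mathrm{out}})^\alpha\propto(\wout_i)^{1+\alpha/(1-\alpha)}=(\wout_i)^{1/(1-\alpha)}$. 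Summing over $i$ likewise reproduces the fixed point \eqref{eq:in-fixed-point}, so the result is consistent with Theorem~\ref{thm:degree-tail}.

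For $\alpha=1$ the same decomposition applies, but $\Phi_{ij}$ must now be evaluated at the vertex limits $e_{i_*}$ and $e_{j_*}$, and that needs care. The first factor tends to $\mathbf 1\{i=i_*\}$. For $i=i_*$, the second factor is $\win_j Y^{\mathrm{in}}_j/\sum_{b\ne i_*}\win_b Y^{\mathrm{in}}_b$. Its denominator tends to $\win_{j_*}>0$ because $j_*\ne i_*$, so the factor tends to $\mathbf 1\{j=j_*\}$. For $i\ne i_*$, the second factor is at most $1$ while the first tends to $0$. Hence $\pi_{ij}(t-1)\to\mathbf 1\{(i,j)=(i_*,j_*)\}$, and the martingale and Ces\`aro argument yields $\Gamma_T\to\delta_{(\bm x_{i_*},\bm x_{j_*})}$ almost surely. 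The main obstacle in the whole argument is exactly this control of the self-loop-excluded denominator $\sum_{b\ne i}\win_b z_b^\alpha$ near the limit. Interiority of $p^{\mathrm{in}}$ handles it when $0<\alpha<1$, and $j_*\ne i_*$ handles it when $\alpha=1$. Everything else is a direct consequence of Theorem~\ref{thm:degree-tail}.
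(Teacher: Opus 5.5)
Your proof is correct. For $0<\alpha<1$ it is the paper's argument: the compensator decomposition $A_{ij}(T)=M_{ij}(T)+\sum_{t\le T}\pi_{ij}(t-1)$, the martingale strong law for $M_{ij}(T)/T$, continuity of the conditional edge probability at the interior limit $(p^{\mathrm{out}},p^{\mathrm{in}})$, and Ces\`aro averaging. Your extra checks are correct but not needed for the proof; these are the cancellation of $(t-1+N)^\alpha$ and the verification that the row and column sums of $K$ return $p^{\mathrm{out}}$ and the fixed point \eqref{eq:in-fixed-point}.

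For $\alpha=1$ you take a different, equally valid route. You rerun the martingale--Ces\`aro argument and show that $\pi_{ij}(t-1)\to\mathbf 1\{(i,j)=(i_*,j_*)\}$. This needs the boundary analysis of the self-loop-excluded denominator that you carry out: its limit is $\win_{j_*}>0$ because $j_*\neq i_*$, and the target factor is bounded by $1$ when $i\neq i_*$.

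The paper avoids this analysis with a pathwise counting bound. Every edge other than $i_*\to j_*$ has source $\neq i_*$ or target $\neq j_*$, so
\[
0\le 1-\frac{A_{i_*j_*}(T)}{T}\le\frac{T-\Dout_{i_*}(T)}{T}+\frac{T-\Din_{j_*}(T)}{T}\longrightarrow0 .
\]
Since the frequencies sum to one, all others vanish. That route uses only the marginal vertex limits of Theorem~\ref{thm:degree-tail}(ii), with no martingale and no continuity of the conditional probabilities at the boundary. Yours treats both regimes with one mechanism, and it identifies the limit of every conditional edge probability, not just of the realized frequencies.
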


\begin{remark}{\rm
The limit in Theorem~\ref{thm:edge-measure} is a finite atomic
directed-multigraph limit: the vertex set is fixed and only the edge count
grows, unlike the GIRG regime where $N\to\infty$. It does not establish an
$N\to\infty$ limit, but formally, if the empirical location measure converges
to a reference measure $\mu$ on $D$, the weight fields admit corresponding
continuum versions, and exclusion of the diagonal is asymptotically
negligible, the associated rank-one edge-intensity density would be
\[
\begin{aligned}
\mathcal K(x,y)&=\pi_{\mathrm{out}}(x)\pi_{\mathrm{in}}(y),\\
\pi_A(x)&=\frac{\{W^A(x)\}^{1/(1-\alpha)}}
{\int_D\{W^A(u)\}^{1/(1-\alpha)}\,d\mu(u)},
\quad A\in\{\mathrm{out},\mathrm{in}\}.
\end{aligned}
\]
Establishing this continuum limit requires a separate joint $N,T$ asymptotic
theory. Even formally, $\mathcal K$ is not a standard GIRG kernel, as dependence
on distance enters through correlated random fields rather than direct decay
in $\|x-y\|$. A GIRG-like extension would insert a
distance kernel $\varphi(\|\bm{x}_i-\bm{x}_j\|)$ into the attachment
probability.}
\end{remark}

The next result transfers the coordinate-wise degree limit to the population
F-madogram.
\begin{theorem}
\label{thm:transfer-degrees}
Condition on fixed node locations $\{\bm{x}_i\}_{i=1}^{N}$ and let
$\{W_i^{\mathrm{out}}\}_{i=1}^{N}$ be positive latent out-weights generated
by the GP-lognormal field in \eqref{eq:spatial-composition}--\eqref{eq:wi}. Suppose
$0<\alpha<1$, define $r=1/(1-\alpha)$, and let
\[
X_i^{(T)}=\frac{D_i^{\mathrm{out}}(T)}{T},\qquad
P_i=\frac{(W_i^{\mathrm{out}})^r}
{\sum_{k=1}^{N}(W_k^{\mathrm{out}})^r}.
\]
Then, for every node $i$, $X_i^{(T)}\convas P_i$, as
$T\to\infty$. Thus, for any pair of sites $(\bm{x}_i,\bm{x}_j)$
whose limiting marginal distributions are continuous, the pairwise
F-madogram of the out-degree-proportion field converges to that of the
transformed weight field,
\[
v_{F,T}^{\mathrm{deg}}(\bm{x}_i,\bm{x}_j)\to
v_F^P(\bm{x}_i,\bm{x}_j),
\]
and the corresponding madogram coefficient converges under the mapping
$\theta_F=(1+2v_F)/(1-2v_F)$.
\end{theorem}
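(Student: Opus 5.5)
We first establish the coordinate-wise limit. We work on a product space carrying the weight field $W^{\mathrm{out}}$ and, for each realization $w$ of it, the conditional PA dynamics \eqref{eq:pa}. Theorem~\ref{thm:degree-tail}(i) gives, for every fixed positive $w$, that $D_i^{\mathrm{out}}(T)/T\to p_i^{\mathrm{out}}(w)$ almost surely under the conditional law. The lognormal weights are positive almost surely. Integrating the conditional probability of the convergence event over the law of $W^{\mathrm{out}}$ (Fubini) then shows that this event has unconditional probability one. Hence $X_i^{(T)}\convas P_i$, with $P_i=p_i^{\mathrm{out}}(W^{\mathrm{out}})$.

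The one measurability point is that the set of convergence is jointly measurable in $(w,\omega)$, so that the integral makes sense. This holds because it is defined through countably many measurable functions of the path and $w$. The in-weights play no role here, since the source process is autonomous.

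Next we transfer to the F-madogram. The plan is to show that
\[
\E\bigl|F^{(T)}_i(X_i^{(T)})-F^{(T)}_j(X_j^{(T)})\bigr|\to\E\bigl|F^P_i(P_i)-F^P_j(P_j)\bigr|,
\]
where $F^{(T)}_i$ is the CDF of $X_i^{(T)}$ and $F^P_i$ that of $P_i$.

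Almost sure convergence implies convergence in distribution, so $F^{(T)}_i(u)\to F^P_i(u)$ at every continuity point $u$. Because $F^P_i$ is continuous by hypothesis, Pólya's theorem upgrades this to uniform convergence, $\sup_u|F^{(T)}_i(u)-F^P_i(u)|\to0$. Then
\[
|F^{(T)}_i(X_i^{(T)})-F^P_i(P_i)|\le \sup_u|F^{(T)}_i-F^P_i|+|F^P_i(X_i^{(T)})-F^P_i(P_i)|.
\]
The first term tends to zero deterministically. The second tends to zero almost surely by continuity of $F^P_i$ and the almost sure convergence $X_i^{(T)}\to P_i$. The same bound holds for $j$.

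The integrand is bounded by $1$, so dominated convergence yields $v^{\mathrm{deg}}_{F,T}(\bm x_i,\bm x_j)\to v_F^P(\bm x_i,\bm x_j)$. Finally, $v_F^P\le 1/2\cdot\E|U-U'|<1/2$ whenever the limiting marginals are continuous and the pair is not countermonotone; more simply, $1-2v_F^P>0$ because $\E|U_i-U_j|<1$ for uniforms. Therefore the map $v\mapsto(1+2v)/(1-2v)$ is continuous at $v_F^P$, and $\theta_F$ converges as well.

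I expect the main obstacle to be the uniform convergence of CDFs step. This is where continuity of the limiting marginals is genuinely needed. A cruder argument using only pointwise convergence of $F^{(T)}_i$ would fail: $F^{(T)}_i$ is a step function, since $X_i^{(T)}$ is lattice-valued. Pólya's theorem settles this cleanly, provided the hypothesis on continuity of the laws of $P_i$ and $P_j$ is invoked.

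That hypothesis in fact always holds here for $N\ge2$, because $P_i$ is a smooth nonconstant function of a nondegenerate Gaussian vector. It may be worth remarking on this, but it is not needed for the stated result.
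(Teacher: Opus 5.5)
Your proposal is correct and follows essentially the same route as the paper's proof: coordinatewise almost-sure convergence from Theorem~\ref{thm:degree-tail}, P\'olya's theorem for uniform convergence of the marginal CDFs, the same triangle-inequality bound, dominated convergence for the bounded integrand, and continuity of $v\mapsto(1+2v)/(1-2v)$. Your Fubini step and your check that $1-2v_F^P>0$ make explicit what the paper leaves implicit. The ``not countermonotone'' caveat in that check is unnecessary, because $v_F\le 1/4$ for any coupling of two uniforms, with equality at the countermonotone coupling.
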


At segment $\ell$, Theorem~\ref{thm:transfer-degrees} is applied conditionally
at $T=T_\ell$, so $D_{\ell i}^{\mathrm{out}}(T_\ell)/T_\ell$ is the
finite-horizon counterpart of $p_i(\bm{w}_\ell^{\mathrm{out}},\alpha)$; along
any almost-surely diverging sequence of random horizons the limit follows
pathwise, covering cumulative Poisson edge counts~\cite{wangresnick2023}, and
conditioning on the observed $T_\ell$ leaves its distribution unspecified. The
limit $P$ has a shared denominator that generally changes the copula, so its
F-madogram has no Gaussian closed form; the next proposition instead gives an
exact benchmark for the \emph{unnormalized} powered field.

\begin{proposition}
\label{prop:latent-fmado}
Let $Z(\bm{x})=g\{G(\bm{x})\}$ be any strictly increasing transform of a
Gaussian field with arbitrary deterministic mean, constant marginal variance,
and correlation
$\rho(h)=\exp(-h/\xi)$. Then the pairwise F-madogram of $Z$ depends only
on $\rho(h)$ and is
\begin{equation}
\label{eq:vF-latent}
v_F(h)=\frac14-\frac{1}{2\pi}
\arcsin\!\left(\frac{1+\rho(h)}{2}\right),
\end{equation}
with corresponding madogram-coefficient benchmark
\begin{equation}
\label{eq:theta-latent}
\theta_{F,\mathrm{lat}}(h)
=\frac{1+2v_F(h)}{1-2v_F(h)}.
\end{equation}
Moreover, this benchmark is invariant to the monotone link $g$, the
marginal variance, and the PA exponent when it is applied to
$Z=(W^{\mathrm{out}})^{1/(1-\alpha)}$. 
\end{proposition}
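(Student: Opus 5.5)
The plan is to reduce the F-madogram of $Z$ to that of a standard bivariate Gaussian pair, and then evaluate it through the identity $|a-b| = a+b-2\min(a,b)$ together with an orthant probability.

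\emph{Reduction to the Gaussian copula.} Fix two sites at distance $h$. Write $G(\bm{x}_k)=m_k+\sigma\varepsilon_k$ for $k=1,2$, where $(\varepsilon_1,\varepsilon_2)$ is standard bivariate normal with correlation $\rho=\rho(h)$. The transform $g$ is strictly increasing, so $F_{\bm{x}_k}\{Z(\bm{x}_k)\}=\Phi(\varepsilon_k)$ almost surely. Hence
\[
v_F=\tfrac12\E|U_1-U_2|,\qquad U_k=\Phi(\varepsilon_k).
\]
This quantity depends on neither $g$, the means $m_k$, nor $\sigma$; it depends only on $\rho$. That already proves the invariance claim.

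\emph{The PA case.} Take $Z=(W^{\mathrm{out}})^{1/(1-\alpha)}=\exp\{rG\}$ with $r>0$. Then $g(t)=\exp(rt)$ is strictly increasing. Equivalently, $rG$ is Gaussian with variance $r^2\gamma$ and the same correlation $\rho(h)$. So $\alpha$ drops out, consistent with Proposition~\ref{prop:alpha-invariance}.

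\emph{Computing the expectation.} Since $U_1$ and $U_2$ are uniform,
\[
\E|U_1-U_2|=1-2\E\min(U_1,U_2),
\qquad
\E\min(U_1,U_2)=\int_0^1\Prob(U_1>u,U_2>u)\,du .
\]
I would rewrite $\Prob(U_1>u,U_2>u)=\Prob(\varepsilon_1>\eta,\varepsilon_2>\eta)$ by taking an auxiliary $\eta\sim\Nrm(0,1)$ independent of $(\varepsilon_1,\varepsilon_2)$, with $u=\Phi(\eta)$. This gives
\[
\E\min(U_1,U_2)=\Prob(\varepsilon_1-\eta>0,\ \varepsilon_2-\eta>0).
\]
The pair $(\varepsilon_1-\eta,\varepsilon_2-\eta)$ is centered Gaussian with variances $2$ and covariance $\rho+1$, so its correlation is $(1+\rho)/2$. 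Sheppard's orthant formula then gives
\[
\E\min(U_1,U_2)=\frac14+\frac{1}{2\pi}\arcsin\frac{1+\rho}{2}.
\]
Substituting back,
\[
v_F=\tfrac12\Bigl[1-\tfrac12-\tfrac1\pi\arcsin\tfrac{1+\rho}{2}\Bigr]=\tfrac14-\tfrac1{2\pi}\arcsin\tfrac{1+\rho}{2},
\]
which is \eqref{eq:vF-latent}. The coefficient \eqref{eq:theta-latent} then follows by the definition \eqref{eq:theta}.

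\emph{Checks.} For $\rho=0$ the formula gives $v_F=1/4-1/12=1/6$. For $\rho=1$ it gives $v_F=0$. Both match the values stated after \eqref{eq:theta}.

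The main obstacle is the orthant step: introducing the auxiliary Gaussian $\eta$ and checking the correlation $(1+\rho)/2$. The rest is bookkeeping. One must also note that the continuity of the Gaussian marginals guarantees $F\{Z\}$ is exactly uniform, so ties are not an issue.
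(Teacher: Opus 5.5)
Your proof is correct and takes essentially the same route as the paper. Both reduce to the Gaussian copula via strict monotonicity of $g$, use an independent standard normal to turn the integral into an orthant probability for a pair with correlation $(1+\rho)/2$, and apply Sheppard's formula; the only difference is cosmetic, since the paper works with $\max(U_1,U_2)$ and the copula diagonal $\int_0^1 C(t,t)\,dt$, whereas you use $\min(U_1,U_2)$ and the joint survival function.
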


\begin{remark}
\label{cor:floor}
{\rm Let
$
V_\ell=\log T_\ell-\E(\log T_\ell)
$
be the centered log-volume of segment $\ell$. To isolate the effect of
$T_\ell$, consider the approximation
\begin{equation}
\label{eq:raw-degree-benchmark}
\log A_{\ell i}
=
\mu_i+V_\ell+r s_\ell(\bm{x}_i),
\qquad r=\frac{1}{1-\alpha},
\end{equation}
where $A_{\ell i}>0$ represents the raw out-degree at node $i$,
$V_\ell\sim\Nrm(0,\sigma_T^2)$, and $V_\ell$ is independent of the
Gaussian spatial field $s_\ell$. Assume
\[
\operatorname{Cov}\{s_\ell(\bm{x}_i),s_\ell(\bm{x}_j)\}
=
\gamma e^{-h/\xi},
\qquad
h=\|\bm{x}_i-\bm{x}_j\|.
\]
Then the
large-horizon approximation from Theorem~\ref{thm:degree-tail} is
\[
\log D_{\ell i}^{\mathrm{out}}
\approx
\log T_\ell
+r\log w_{\ell i}^{\mathrm{out}}
-\log\sum_{k=1}^{N}
   (w_{\ell k}^{\mathrm{out}})^r.
\]
Therefore, \eqref{eq:raw-degree-benchmark} retains the observed volume term and
the PA amplification factor $r$, but omits the final random normalizing
term.

Under \eqref{eq:raw-degree-benchmark}, the correlation between nodes at
distance $h$ is
\begin{equation}
\label{eq:rho-tot}
\rho_{\mathrm{tot}}(h)
=
\frac{\sigma_T^2+r^2\gamma e^{-h/\xi}}
     {\sigma_T^2+r^2\gamma}.
\end{equation}
Substituting $\rho_{\mathrm{tot}}(h)$ into
\eqref{eq:vF-latent} gives the corresponding Gaussian-copula
F-madogram. In particular,
\[
\rho_{\mathrm{tot}}(h)\longrightarrow
\rho_\infty
=
\frac{\sigma_T^2}{\sigma_T^2+r^2\gamma}
\qquad\text{as }h\to\infty.
\]
When $\sigma_T^2>0$, this positive limiting correlation makes the
madogram dependence coefficient approach a value below its independence
reference value of $2$. We call this distance-independent component the
\emph{common mode}, i.e.\ the variation in $T_\ell$ affects every node in the same
segment, regardless of distance.

Dividing the raw degrees by $T_\ell$ removes $V_\ell$ exactly. It does
not, however, remove finite-horizon error, the random normalizing term,
or possible dependence between $T_\ell$ and the node weights.}
\end{remark}

Remark~\ref{cor:floor} explains why raw-degree dependence may remain
positive at large distances. It is a diagnostic approximation, not an
exact formula for the dependence generated by the PA model.

We now explain the effect of $\alpha$ on the inferred spatial range.
\begin{proposition}
\label{prop:alpha-invariance}
Fix an interior limiting out-degree-proportion vector
$p=(p_1,\ldots,p_N)$. Suppose two sets of weights reproduce $p$ through
\eqref{eq:transmission-map} at
$\alpha,\alpha'\in(0,1)$. Then
\begin{equation}
\label{eq:alpha-weight-rescaling}
\log w_i(\alpha')-\log w_j(\alpha')
=
\frac{1-\alpha'}{1-\alpha}
\left\{
\log w_i(\alpha)-\log w_j(\alpha)
\right\}
\end{equation}
for every $i$ and $j$. Equivalently,
\[
\log w_i(\alpha')
=
\frac{1-\alpha'}{1-\alpha}\log w_i(\alpha)+c
\]
for a constant $c$ common to all nodes.

Hence, changing $\alpha$ in this reconstruction only rescales the
log-weight difference. Under the Gaussian spatial model, this rescaling
leaves the correlation function and the range parameter $\xi$ unchanged,
while the marginal variance changes according to
\[
\gamma(\alpha')
=
\left(\frac{1-\alpha'}{1-\alpha}\right)^2
\gamma(\alpha).
\]
\end{proposition}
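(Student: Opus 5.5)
The proof reduces to two elementary steps: inverting the transmission map \eqref{eq:transmission-map} in log coordinates, and then checking how an affine rescaling of a Gaussian field acts on its covariance.

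\emph{Step 1 (inversion).} Put $r=1/(1-\alpha)$ and $r'=1/(1-\alpha')$. If $\bm w(\alpha)$ reproduces the interior vector $p$, then every $p_i>0$. For any pair $i,j$,
\[
\frac{p_i}{p_j}=\left(\frac{w_i(\alpha)}{w_j(\alpha)}\right)^{r},
\]
because the shared denominator $\sum_k w_k^r$ cancels. Taking logarithms gives
\[
\log w_i(\alpha)-\log w_j(\alpha)=(1-\alpha)\log(p_i/p_j).
\]
The same identity holds at $\alpha'$ with $(1-\alpha')$ in place of $(1-\alpha)$. Dividing the two identities, which is legitimate since $1-\alpha>0$, yields \eqref{eq:alpha-weight-rescaling}.

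\emph{Step 2 (the constant $c$).} Fix a reference node, say node $1$, and define
\[
c=\log w_1(\alpha')-\frac{1-\alpha'}{1-\alpha}\log w_1(\alpha).
\]
Adding \eqref{eq:alpha-weight-rescaling} with $j=1$ then gives $\log w_i(\alpha')=\frac{1-\alpha'}{1-\alpha}\log w_i(\alpha)+c$ for every $i$. Conversely, any such $c$ leaves $p$ unchanged, because a common multiplicative factor $e^{rc}$ cancels in the normalization. This confirms that the log-weights are identified only up to this one-parameter shift.

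\emph{Step 3 (effect on the spatial model).} Write $\kappa=(1-\alpha')/(1-\alpha)>0$. Suppose the reconstructed log-weights at $\alpha$ have the Gaussian form \eqref{eq:spatial-composition}, with covariance \eqref{eq:cov} having parameters $(\gamma,\xi)$. Then the log-weights at $\alpha'$ are $\kappa\mu^{\mathrm{out}}(\bm x_i)+c+\kappa s(\bm x_i)$. The constant $c$ and the rescaled mean enter only the deterministic mean. The covariance becomes
\[
\kappa^2\gamma\exp(-\|\bm x_i-\bm x_j\|/\xi).
\]
This covariance still has exponential form with the same $\xi$, and the correlation $\exp(-h/\xi)$ is unchanged. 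The variance is $\gamma(\alpha')=\kappa^2\gamma(\alpha)$. If the AR(1) dynamics \eqref{eq:temporal} are also rescaled, $\phi$ is preserved as well, since $\kappa s_\ell=\phi\,\kappa s_{\ell-1}+\kappa e_\ell$.

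\emph{Where care is needed.} Step 3 is the only delicate point. The conclusion depends on $c$ being constant across nodes within a segment: if $c$ varies over segments, it is a common random shift and not a spatial term. To keep the statement exact, I would state the invariance for the log-ratio (contrast) coordinates $\log w_i-\log w_j$. On these coordinates $c$ vanishes identically, and the induced contrast covariance is exactly $\kappa^2$ times the original. The range $\xi$ is therefore read off unchanged from either parametrization, because the correlation structure does not involve $\kappa$.
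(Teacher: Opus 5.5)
Your proposal is correct and follows essentially the same route as the paper. The paper inverts the transmission map as $w_i\propto p_i^{1-\alpha}$ (see the proof of Theorem~\ref{thm:degree-tail} and Remark~\ref{rem:alpha-ident}), so the log-weights at $\alpha'$ are a common rescaling of those at $\alpha$ up to a node-independent shift, and it then uses that a common positive scaling of the log-ratio coordinates leaves the exponential correlation (hence $\xi$, and also $\phi$) unchanged while multiplying the variance by $\{(1-\alpha')/(1-\alpha)\}^2$. One small wording fix: in Step~1, eliminate $\log(p_i/p_j)$ by substitution rather than by ``dividing the two identities,'' because $p_i=p_j$ is allowed and then both sides are zero.
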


Proposition~\ref{prop:alpha-invariance} concerns weights reconstructed from
the same limiting out-degree-proportion vector; it does not imply that
degree-proportion
dependence is unchanged when $\alpha$ varies with the latent weight
distribution held fixed, and need not hold exactly for finite-horizon MM
estimates.

\section{Parameter Estimation}
\label{sec:est}

The parameters to be estimated are the mean functions
$\bm{\mu}^{\mathrm{out}}(\bm{x})$, $\bm{\mu}^{\mathrm{in}}(\bm{x})$ and
the temporal-spatial parameters
$\Theta^{A}=(\phi_A,\gamma_A,\xi_A)$,
$A\in\{\mathrm{out},\mathrm{in}\}$. The
likelihood depends on the unobserved node weights and
couples them through nonlinear normalizers, so joint maximization over
all parameters is intractable. 

Instead, we adopt a
three-step strategy: (1) derive and separate the likelihood and, with
the PA exponent $\alpha$ fixed, estimate the weights by an MM scheme
(if $\alpha$ is unknown, the weights and $\alpha$ are updated
alternately by profile likelihood); (2) transform normalized log-weights
to identifiable log-ratio coordinates and fit their AR(1) temporal
component; (3) apply a plug-in spatial Gaussian quasi-likelihood to the
pre-whitened residuals. 

Step~1 uses the ordered-history MM estimator when a reliable edge sequence
is available. For terminal segment counts, as in the airline application,
limit inversion replaces sequence reconstruction. Theorem~\ref{thm:degree-tail}
gives explicit out-weights, and the exact no-self-loop inverse below gives the
in-weights.

This procedure estimates the spatiotemporal weight model
\eqref{eq:spatial-composition}--\eqref{eq:volume}. The $\ell_1$ normalization in
Step~1 (Algorithm~\ref{alg:weights}) removes the unidentified common weight
scale and the direct multiplicative scale of $T_\ell$. Steps~2 and~3 then
estimate residual innovations of $s$ after temporal pre-whitening. Since
the AR coefficient rescales variance but not spatial correlation, their
range $\xi$ is also the stationary range of $s$. Segment volumes are not
assigned additional latent parameters, but are observed totals used to
separate raw activity from relative activity. The random-volume experiment in
Fig.~\ref{fig:volume-sim} illustrates how this observed common scale can
create a raw-degree common mode and how normalization removes its direct effect.

\subsection{Terminal-Count Inversion}

Let $a\in\operatorname{int}(\Delta_N)$ and
$p\in\operatorname{int}(\Delta_N)$ denote limiting out- and in-degree
proportions. The out-channel inversion is
$w_i^{\mathrm{out}}\propto a_i^{1-\alpha}$. For the in channel, define
the common-scale-invariant log-mass criterion
\begin{equation}
\label{eq:in-inverse-objective}
\mathcal J_{a,p}(v)
=\sum_{j=1}^{N}p_jv_j
-\sum_{i=1}^{N}a_i\log\!\left(\sum_{k\ne i}e^{v_k}\right),
\qquad \onev^\top v=0.
\end{equation}

\begin{proposition}[Exact inversion of the no-self-loop limit]
\label{prop:in-inversion}
Suppose $N\ge3$, $0\le\alpha<1$, and
$0<p_j<1-a_j$ for every $j$. Then
\eqref{eq:in-inverse-objective} has a unique maximizer $\widehat v$.
The normalized weights
\begin{equation}
\label{eq:in-inverse-weights}
\widehat w_j^{\mathrm{in}}
=N\frac{e^{\widehat v_j}p_j^{-\alpha}}
{\sum_{k=1}^{N}e^{\widehat v_k}p_k^{-\alpha}}
\end{equation}
reproduce $p$ exactly through the fixed point
\eqref{eq:in-fixed-point} with source degree-proportion vector $a$. Conversely, every
positive in-weight vector that reproduces $p$ is proportional to the
vector in \eqref{eq:in-inverse-weights}.
\end{proposition}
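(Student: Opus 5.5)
The plan is to reparametrize the fixed point \eqref{eq:in-fixed-point} in terms of the products $u_k:=\win_k p_k^\alpha$. These are the only combinations of the in-weights that enter it. With $a=p^{\mathrm{out}}$, the fixed point reads
\[
p_j=u_j\sum_{i\ne j}\frac{a_i}{\sum_{k\ne i}u_k},\qquad j=1,\ldots,N,
\]
and it is invariant under $u\mapsto cu$. Writing $u_k=e^{v_k}$ with the normalization $\onev^\top v=0$, the gradient of $\mathcal J_{a,p}$ on the hyperplane is
\[
\partial_j\mathcal J_{a,p}(v)=p_j-e^{v_j}\sum_{i\ne j}\frac{a_i}{\sum_{k\ne i}e^{v_k}}.
\]
This expression sums to $1-\sum_i a_i=0$ over $j$, so it already lies in $\onev^\perp$, and the Lagrange multiplier for $\onev^\top v=0$ vanishes. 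Hence the stationary points of $\mathcal J_{a,p}$ on $\{\onev^\top v=0\}$ are exactly the solutions of the fixed point, up to the common scale of $u$. Given such a $\widehat v$, setting $\win_j\propto e^{\widehat v_j}p_j^{-\alpha}$ recovers $u_j=\win_jp_j^\alpha\propto e^{\widehat v_j}$. Normalizing the weights to sum to $N$ gives \eqref{eq:in-inverse-weights}, and the fixed point is reproduced because it is scale invariant.

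The key steps, in order, are as follows.

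\textbf{Step 1 (strict concavity on $\onev^\perp$).} The first term of $\mathcal J_{a,p}$ is linear. Each $\log\sum_{k\ne i}e^{v_k}$ is a log-sum-exp, hence convex, and it is affine exactly along directions constant on $\{k\ne i\}$. A direction $d\in\onev^\perp$ along which every term with $a_i>0$ is affine must be constant on each set $V\setminus\{i\}$. For $N\ge3$, two such sets overlap, so $d$ is constant on all of $V$, and $\onev^\top d=0$ then forces $d=0$. This argument needs $a_i>0$ for at least two indices, which holds since $a$ is interior. Therefore $\mathcal J_{a,p}$ is strictly concave on the hyperplane.

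\textbf{Step 2 (existence via coercivity).} This is the main obstacle. I must show $\mathcal J_{a,p}(v)\to-\infty$ as $\|v\|\to\infty$ within $\onev^\perp$, and this is where $0<p_j<1-a_j$ enters. For a direction $d\in\onev^\perp$, $d\neq0$, let $M=\max_k d_k$ and $S=\{k:d_k=M\}$. Along $v=td$ the asymptotic slope is
\[
\sum_j p_jd_j-\sum_i a_i\max_{k\ne i}d_k .
\]
If $|S|\ge2$, then $\max_{k\ne i}d_k=M$ for all $i$, and the slope equals $\sum_jp_jd_j-M<0$ because $d$ is nonconstant and $p$ is interior. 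If $S=\{s\}$, let $M_2$ be the second largest value. The slope becomes
\[
\sum_jp_jd_j-(1-a_s)M-a_sM_2\le p_sM+(1-p_s)M_2-(1-a_s)M-a_sM_2=(M-M_2)\{p_s-(1-a_s)\}<0,
\]
which is strictly negative by the hypothesis. By compactness of the unit sphere in $\onev^\perp$ together with concavity, the recession slopes are uniformly negative. Hence $\mathcal J_{a,p}$ is coercive and a maximizer exists, and it is unique by Step 1. I expect some care is needed to turn the directional slopes into a uniform bound, most simply via the recession function of a concave function.

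\textbf{Step 3 (converse).} Suppose a positive $\win$ reproduces $p$. Then $u=\win\odot p^\alpha$ solves the fixed point, so $v=\log u-\tfrac1N(\onev^\top\log u)\onev$ is a stationary point of $\mathcal J_{a,p}$ on the hyperplane. By concavity it is a maximizer, and by uniqueness it equals $\widehat v$. Thus $\win_j\propto e^{\widehat v_j}p_j^{-\alpha}$, which is the claimed proportionality.

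The case $\alpha=0$ is included throughout with $p^{-\alpha}=1$.
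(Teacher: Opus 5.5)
Your proposal is correct and follows essentially the same route as the paper: the gradient $p-q(v)$ summing to zero, strict concavity on $\onev^\perp$ because null directions must be constant on every overlapping set $V\setminus\{i\}$ when $N\ge3$, the identical two-case recession-slope bound $(M-M_2)\{p_s-(1-a_s)\}<0$ for coercivity, and the substitution $u_j=\win_jp_j^\alpha$ for both the forward and the converse directions. For the uniform bound you flag, use $\mathcal J_{a,p}(v)\le p^\top v-\sum_i a_i\max_{k\ne i}v_k$: the right-hand side is continuous and positively homogeneous, so compactness of the unit sphere in $\onev^\perp$ gives that it is at most $-c\|v\|$ for some $c>0$.
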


The feasibility condition $p_j<1-a_j$ expresses the self-loop constraint:
node $j$ can receive an edge only when it is not the source. The objective
is strictly concave in the identifiable log-mass coordinates, so a damped
Newton method gives the unique solution. This is an exact inversion of the
\emph{asymptotic} target limit; terminal counts at finite horizon still
carry sampling error. A proof is given in the supplement.

\subsection{Likelihood and Its Properties}
The following formulas describe one generic network. When they are
applied to segment $\ell$, we set $T=T_\ell$,
$w_i^A=w_{\ell i}^A$, and $D_i^A(t)=D_{\ell i}^A(t)$, and suppress the
segment index to keep the likelihood readable.
Let $e_t=(I^{(1)}_t,I^{(2)}_t)$ be the edge formed from $G(t-1)$ to
$G(t)$. Given $G(0)$ and $\{e_t\}_{t=1}^{T}$, treating the latent
weights as variables with $0\le\alpha\le1$, the likelihood is
\begin{align}
&L\bigl(\wout,\win,\alpha\mid G(0),(e_t)_{t=1}^{T}\bigr)\nonumber\\
&=\prod_{t=1}^{T}\frac{\wout_{I^{(1)}_t}(\Dout_{I^{(1)}_t}(t-1)+1)^{\alpha}}
{\sum_{k\in[N]}\wout_k(\Dout_k(t-1)+1)^{\alpha}}\nonumber\\
&\qquad\times\frac{\win_{I^{(2)}_t}(\Din_{I^{(2)}_t}(t-1)+1)^{\alpha}}
{\sum_{k\neq I^{(1)}_t}\win_k(\Din_k(t-1)+1)^{\alpha}}.
\end{align}
Up to terms independent of the respective weights, the log-likelihood
separates as $\ell(\wout)+\ell(\win)$ with
\begin{align}
\ell(\wout)&=\sum_{i\in[N]}\Dout_i(T)\log\wout_i\nonumber\\
&\quad-\sum_{t=1}^{T}\log\!\Bigl(\sum_{k\in[N]}\wout_k(\Dout_k(t-1)+1)^{\alpha}\Bigr),
\label{eq:llo}\\
\ell(\win)&=\sum_{i\in[N]}\Din_i(T)\log\win_i\nonumber\\
&\quad-\sum_{t=1}^{T}\log\!\Bigl(\sum_{k\neq I^{(1)}_t}\win_k(\Din_k(t-1)+1)^{\alpha}\Bigr).
\label{eq:lli}
\end{align}
The omitted numerator-degree term is constant in the weights but not in
$\alpha$. Therefore, profiling the PA exponent must use the full
log-likelihood:
\begin{align}
\ell_{\mathrm{full}}(\wout,\win,\alpha)
&=\ell(\wout)+\ell(\win)+\alpha S_D,\label{eq:llfull}\\
\text{with }S_D&=\sum_{t=1}^{T}\left[
\log\{\Dout_{I^{(1)}_t}(t-1)+1\}\right]\nonumber\\
&\quad+\sum_{t=1}^{T}
\log\{\Din_{I^{(2)}_t}(t-1)+1\}.\nonumber
\end{align}

We now show that, after reparameterizing the node weights on the log scale, the constrained likelihood has a unique global maximizer.
\begin{theorem}
\label{thm:opt}
Under Assumption~\ref{ass:likelihood}, for $\alpha$ fixed and parameter
space $\mathcal{W}=(0,\infty)^N$, apply the log-transformation
$z=\log w$. Then \eqref{eq:llo} and \eqref{eq:lli} are concave in $z$,
and each admits a global optimizer after scale normalization. 

(i) Any
stationary point of the transformed objective, i.e.\ any
$w$ with $\partial\ell(\wout)/\partial\wout_i=0$ and
$\partial\ell(\win)/\partial\win_i=0$ for all $i$, is a global
optimizer. 

(ii) The optimizer is not unique. If $(\wout_0,\win_0)$ is
a solution, then every global optimizer has the form
$\wout_*=c_1\wout_0$, $\win_*=c_2\win_0$ with $c_1,c_2\in\R_+$. The
$\ell_1$ normalization $\|\bm{w}\|_1=N$ selects the unique
representative on each ray.
\end{theorem}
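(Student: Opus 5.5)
In the log coordinates $z=\log w$, both objectives \eqref{eq:llo} and \eqref{eq:lli} are a linear term minus a sum of log-sum-exp functions. We first prove concavity, then existence after normalization, and finally characterize the whole optimal set.

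\emph{Concavity.} Write
\[
\ell(z)=\sum_i D_i(T)z_i-\sum_{t=1}^{T}\log\Bigl(\sum_{k\in S_t}c_{tk}e^{z_k}\Bigr),
\]
where $c_{tk}=(D_k(t-1)+1)^\alpha>0$. For the out-channel $S_t=[N]$; for the in-channel $S_t=[N]\setminus\{I^{(1)}_t\}$. Each map $z\mapsto\log\sum_{k\in S_t}c_{tk}e^{z_k}$ is a log-sum-exp of affine functions and hence convex. Its Hessian is $\operatorname{diag}(\pi_t)-\pi_t\pi_t^\top$, with $\pi_t$ the softmax weights on $S_t$, and this matrix is positive semidefinite. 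So $\ell$ is concave. Part (i) then follows at once: the chain rule gives $\partial\ell/\partial z_i=w_i\,\partial\ell/\partial w_i$ with $w_i>0$, so a stationary point in $w$ is a stationary point in $z$, and for a concave differentiable function on the open set $\R^N$ such a point is a global maximizer.

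\emph{Scale invariance and existence.} Each summand satisfies $\ell(z+c\onev)=\ell(z)$, because the linear term shifts by $cT$ (using $\sum_iD_i(T)=T$) and each log term shifts by $c$. We may therefore restrict to $H=\{z:\onev^\top z=0\}$, which is equivalent to fixing $\|w\|_1=N$ up to a monotone reparametrization of the ray. On $H$ we show that $\ell$ is coercive, i.e.\ $\ell(z)\to-\infty$ as $\|z\|\to\infty$ within $H$. Take a direction $u\in H$ with $u\neq0$, and let $M=\max_k u_k>0$ and $A=\arg\max_k u_k$. Along $z=su$ the linear term grows like $s\sum_iD_iu_i$, while the penalty grows like $s\sum_t\max_{k\in S_t}u_k$.

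\begin{itemize}
\item \textbf{Out-channel.} The penalty is $sTM$ and the linear term is $s\sum_iD_iu_i$. Since $u$ is nonconstant and every $D_i>0$ by Assumption~\ref{ass:likelihood}, we have $\sum_iD_iu_i<TM$ strictly.
\item \textbf{In-channel.} When $|A|\ge2$, or when $I^{(1)}_t\notin A$, the maximum over $S_t$ still equals $M$. This step is the main obstacle. The problem is the events $t$ whose source $I^{(1)}_t$ is the unique argmax of $u$: there the penalty uses only the second-largest coordinate $M_2$. We therefore need to compare $\sum_jD^{\mathrm{in}}_ju_j$ with $M\,\#\{t:I^{(1)}_t\ne i^*\}+M_2\,\#\{t:I^{(1)}_t=i^*\}$.
\end{itemize}

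For the in-channel comparison, the key observation is that every edge with source $i^*$ has target $\neq i^*$, so its target contributes $u_{I^{(2)}_t}\le M_2$. Matching each event's target term with its penalty term gives
\[
\sum_t u_{I^{(2)}_t}\le\sum_t\max_{k\in S_t}u_k,
\]
since $I^{(2)}_t\in S_t$ for every $t$. Equality would force each target to attain the maximum of $u$ over its $S_t$. Because every node has positive in-degree and $u$ is nonconstant, some node with $u_j<M$ receives an edge. For that edge to be tight, its source must be the unique maximizer $i^*$ and $u_j=M_2$. We then invoke $N\ge3$ and positive in-degrees to rule out tightness everywhere. If all three or more nodes were tight, every node $j\ne i^*$ with $u_j<M$ would receive edges only from $i^*$ and would have $u_j=M_2$. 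But $i^*$ has positive in-degree, and its incoming edges come from sources $\neq i^*$. On those events $S_t\ni i^*$, so the maximum over $S_t$ is $M$ and these events are tight. Tightness alone does not yet give a contradiction, so we additionally use that $u\in H$ forces $\min_k u_k<M_2$ or a node with value below $M_2$. If this argument stalls, we instead show the recession function is strictly negative by a direct case split on $|A|$, using that with $N\ge3$ every $S_t$ contains at least two nodes.

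Strict inequality along every direction $u$, combined with concavity, gives coercivity on $H$. Hence a maximizer exists on $H$, and therefore on each normalization ray.

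\emph{Uniqueness up to scale (part (ii)).} Suppose $z_0$ and $z_1$ are both maximizers. By concavity the whole segment between them consists of maximizers, so $\ell$ is affine along $d=z_1-z_0$. That forces $d^\top\bigl(\operatorname{diag}(\pi_t)-\pi_t\pi_t^\top\bigr)d=0$ for every $t$, which is the softmax-weighted variance of $d$ over $S_t$; hence $d$ is constant on each $S_t$. For the out-channel $S_t=[N]$, so $d\propto\onev$. For the in-channel the sets $[N]\setminus\{I^{(1)}_t\}$ each have size at least $2$ and pairwise overlap when $N\ge3$. Moreover, positive out-degrees imply that at least two distinct sources occur. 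Together these give $d\propto\onev$. So maximizers differ by $c\onev$ in $z$, i.e.\ $w_*=e^{c}w_0$. Since the two channels separate, the constants $c_1$ and $c_2$ are independent, and the constraint $\|w\|_1=N$ selects a unique $c$ on each ray.
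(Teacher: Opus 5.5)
Your concavity argument, part (i), the out-channel recession bound, and the null-space argument for (ii) are sound. They follow the same log-sum-exp route as the paper. The gap is the in-channel coercivity, and it cannot be closed: the tight configuration you isolate really occurs under Assumption~\ref{ass:likelihood}.

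Your proposed patch is false, since $u=(2,-1,-1)\in H$ has $\min_k u_k=M_2$. No case split on $|A|$ helps either, because the recession slope is exactly zero in that direction. Concretely, take $N=3$ and the four events $1\to2$, $1\to3$, $2\to1$, $3\to1$ in any order, so every in- and out-degree is positive. Every event with source $\neq1$ targets node $1$, and node $1$ is masked out of the normalizer of every event with source $1$. Write $\ell^{\mathrm{in}}(z)$ for \eqref{eq:lli} in log coordinates, $c_{tk}=(\Din_k(t-1)+1)^\alpha$, and $R_t=\sum_{k\notin\{1,I^{(1)}_t\}}c_{tk}e^{z_k}>0$ (a nonempty sum since $N\ge3$). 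Then for every $z$ and every $s>0$,
\[
\ell^{\mathrm{in}}(z+se_1)-\ell^{\mathrm{in}}(z)
=\sum_{t:\,I^{(1)}_t\neq1}\log\frac{c_{t1}e^{z_1}+R_t}{c_{t1}e^{z_1}+R_te^{-s}}>0 .
\]
So \eqref{eq:lli} has no maximizer at all, normalized or not. At $\alpha=0$ the in-likelihood is $\frac{w_2w_3}{(w_2+w_3)^2}\cdot\frac{w_1^2}{(w_1+w_2)(w_1+w_3)}$, whose supremum $1/4$ is never attained. The existence claim is therefore false for the in-channel under Assumption~\ref{ass:likelihood} alone.

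The paper's proof has the same hole. It asserts that the normalized objective tends to $-\infty$ whenever a weight tends to zero. That holds for \eqref{eq:llo}, whose normalizers are bounded below on the simplex, but fails for \eqref{eq:lli}. Along $w=(3-2\epsilon,\epsilon,\epsilon)$, the two masked normalizers of the source-$1$ events are $O(\epsilon)$ and cancel the $2\log\epsilon$ coming from the numerator.

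Your own equality analysis gives the right repair. Suppose all in-degrees are positive and a nonconstant $u$ has zero recession slope. Then $|A|\ge2$ is impossible, since every target would lie in $A$, contradicting positive in-degree of the nodes outside $A$. Hence $A=\{i^*\}$. Every event whose source is not $i^*$ must then target $i^*$, so $i^*$ is an endpoint of every event. Add the hypothesis that no node is an endpoint of every event, i.e.\ for each $i$ some event has neither endpoint equal to $i$. This makes the recession slope strictly negative on $H\setminus\{0\}$, and your coercivity, existence, and uniqueness-up-to-scale steps then go through as written. The out-channel needs no extra condition.
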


We emphasize that concavity and global optimality hold for the
\emph{transformed} objective under the log-transformation, not for the
original product likelihood, and that uniqueness holds only under the
$\ell_1$ normalization. Theorem~\ref{thm:opt} is proved in the
supplement via the concavity of the (generalized) log-sum-exp function.

\subsection{MM Algorithm for the Latent Weights (Known $\alpha$)}
Note that the likelihood equations have no closed form. By the
minorization--maximization (MM) principle~\cite{ortega2000} we replace
the hard problem with a sequence of tractable surrogates. The only
nonlinear term in \eqref{eq:llo} is
$-\sum_{t}\log\sum_{k}\wout_k(\Dout_k(t-1)+1)^{\alpha}$. Since
$-\log(\cdot)$ is convex, Jensen's inequality with weights
$p_{t,k}\propto\wout_k(s)(\Dout_k(t-1)+1)^{\alpha}$ yields a surrogate
that is tangent at the current iterate and, after the elementary
inequality $\log x\le x-1$, leads to the multiplicative updates
\begin{align}
\wout_i(s{+}1)&=\frac{\Dout_i(T)}
{\displaystyle\sum_{t=1}^{T}\frac{(\Dout_i(t-1)+1)^{\alpha}}
{\sum_{k\in[N]}\wout_k(s)(\Dout_k(t-1)+1)^{\alpha}}},
\label{eq:mmo}\\
\win_i(s{+}1)&=\frac{\Din_i(T)}
{\displaystyle\sum_{\{t:\,i\neq I^{(1)}_t\}}\frac{(\Din_i(t-1)+1)^{\alpha}}
{\sum_{k\neq I^{(1)}_t}\win_k(s)(\Din_k(t-1)+1)^{\alpha}}}.
\label{eq:mmi}
\end{align}
After each update the weights are renormalized to $\|\bm{w}\|_1=N$.

\begin{proposition}[MM minorizer]
\label{prop:mm}
At iteration $s$, with $a_{i,t}:=(\Dout_i(t-1)+1)^{\alpha}$, the
surrogate
\begin{align*}
g_1\bigl(\wout\mid\wout(s)\bigr)
&=\sum_{i\in[N]}\Dout_i(T)\log\wout_i\\
&\;-\sum_{t=1}^{T}\log\Bigl(\sum_{i\in[N]}\wout_i(s)\,a_{i,t}\Bigr)\\
&\;-\sum_{t=1}^{T}\frac{\sum_{i}\wout_i\,a_{i,t}}
{\sum_{i}\wout_i(s)\,a_{i,t}}+T
\end{align*}
minorizes $\ell(\wout)$ and is tangent at $\wout(s)$. An analogous
$g_2(\win\mid\win(s))$ holds with the summation restricted to
$i\neq I^{(1)}_t$.
\end{proposition}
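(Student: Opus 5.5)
The plan is to treat each time step $t$ separately. The only non-separable part of $\ell(\wout)$ is the term $-\log S_t(\wout)$, where $S_t(w):=\sum_i w_i a_{i,t}$; the first sum $\sum_i \Dout_i(T)\log\wout_i$ already appears unchanged in $g_1$. So it suffices to show that, for every $t$,
\[
-\log S_t(w)\;\ge\;-\log S_t(w(s))-\frac{S_t(w)}{S_t(w(s))}+1,
\]
with equality at $w=w(s)$. Summing over $t=1,\ldots,T$ then gives $g_1(\wout\mid\wout(s))\le\ell(\wout)$ with equality at $\wout(s)$, and the $+1$ terms accumulate into the $+T$ in $g_1$.

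The inequality follows from $\log x\le x-1$ for $x>0$. Take $x=S_t(w)/S_t(w(s))$, which is positive because all weights and all $a_{i,t}=(\Dout_i(t-1)+1)^\alpha$ are positive. This gives $\log S_t(w)-\log S_t(w(s))\le S_t(w)/S_t(w(s))-1$, and negating yields the displayed bound. Equality holds exactly when $x=1$, so in particular at $w=w(s)$, which gives $g_1(\wout(s)\mid\wout(s))=\ell(\wout(s))$. The Jensen step mentioned in the text is not needed for this surrogate. It only explains why the surrogate separates across nodes, and indeed $S_t(w)$ is linear in $w$, so $g_1$ is a sum of one-dimensional concave functions of the $w_i$. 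Differentiating $g_1$ and setting the derivative to zero recovers the update \eqref{eq:mmo}.

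For tangency I will compare gradients at $w(s)$. The derivative of $\ell$ in $w_i$ is
\[
\frac{\Dout_i(T)}{w_i}-\sum_t \frac{a_{i,t}}{S_t(w)}.
\]
The derivative of $g_1$ is
\[
\frac{\Dout_i(T)}{w_i}-\sum_t \frac{a_{i,t}}{S_t(w(s))}.
\]
These agree at $w=w(s)$. Tangency also follows directly: $\ell-g_1\ge0$ is a differentiable function with a minimum at the interior point $w(s)$, so its gradient vanishes there.

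For $g_2$ the argument is the same, with $S_t$ replaced by $\tilde S_t(w)=\sum_{i\ne I^{(1)}_t}w_i(\Din_i(t-1)+1)^\alpha$, which is again a positive linear function of $w$. There is essentially no obstacle. The only point to check is that every denominator is strictly positive. Since $N\ge3$, the sum over $k\ne I^{(1)}_t$ is nonempty, and positivity of the weights then keeps each $\tilde S_t(w)$ away from zero.
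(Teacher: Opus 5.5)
Your proposal is correct and follows essentially the same route as the paper's proof. The paper writes $g_1=\ell+\sum_t[\log\Psi_1(t)-\Psi_1(t)+1]$ with $\Psi_1(t)=S_t(\wout)/S_t(\wout(s))$ and applies $\log x-x+1\le0$; it gets tangency from $\Psi_1(t)=1$ at $\wout(s)$ and handles $g_2$ by restricting the sum. Your gradient comparison is a valid extra that the paper does not need, since it takes tangency to mean equal values at $\wout(s)$.
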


The tangency, minorization, and convergence arguments are given in
the supplement. We now give the statement for the convergence of the 
MM algorithm.

\begin{theorem}
\label{thm:mm}
The MM updates
\eqref{eq:mmo}--\eqref{eq:mmi} generate a non-decreasing
and convergent sequence of log-likelihood values, and every limit
point is a stationary point of the original log-likelihood. Under the
$\ell_1$ normalization, monotonicity confines the iterates to an upper
level set that is compact in the positive simplex, so by
Theorem~\ref{thm:opt} the limit is the unique normalized optimizer.
\end{theorem}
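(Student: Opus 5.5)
The plan is to treat the out-channel in full; the in-channel follows by the same argument with the sum over $k$ restricted to $k\neq I^{(1)}_t$.

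\emph{Step 1: monotonicity.} By Proposition~\ref{prop:mm}, $g_1(\cdot\mid\wout(s))$ minorizes $\ell(\wout)$ and is tangent at $\wout(s)$. We verify that \eqref{eq:mmo} maximizes $g_1$ exactly. The surrogate is separable: each coordinate contributes $D_i\log w_i - w_i\sum_t a_{i,t}/S_t(s)$, where $S_t(s)=\sum_k \wout_k(s)a_{k,t}$. Setting the derivative to zero gives precisely \eqref{eq:mmo}, and concavity in $w_i$ makes this the maximizer. The standard MM sandwich then gives
\[
\ell(\wout(s+1))\ge g_1(\wout(s+1)\mid\wout(s))\ge g_1(\wout(s)\mid\wout(s))=\ell(\wout(s)).
\]
Since $\ell$ is invariant under $w\mapsto cw$ (both terms shift by $T\log c$), the renormalization to $\|w\|_1=N$ changes nothing.

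\emph{Step 2: boundedness and convergence of values.} Boundedness above follows because each factor of the likelihood is a probability, so $\ell\le 0$ up to the dropped constant. The monotone sequence of values therefore converges.

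\emph{Step 3: compactness of the level set.} This is the main obstacle. We must show that the upper level set $\{w\in\Delta:\|w\|_1=N,\ \ell(w)\ge \ell(w(0))\}$ stays away from the boundary. Suppose $w_i\to0$ for $i$ in a nonempty proper subset $B$ while $\|w\|_1=N$. The numerator contains $\sum_{i\in B}D_i(T)\log w_i\to-\infty$, because $D_i(T)>0$ by Assumption~\ref{ass:likelihood}. Meanwhile the denominators $\sum_k w_k a_{k,t}$ stay bounded below by $\min_{k\notin B} w_k\ge$ (a positive constant on the relevant subsequence), since some coordinate is at least $1$. Hence $\ell\to-\infty$, and the level set is a closed, bounded subset of the open simplex, i.e.\ compact. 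For the in-channel the same argument works, using $N\ge3$ to ensure each restricted denominator $\sum_{k\ne I^{(1)}_t}$ still contains a coordinate bounded below. This holds because at most one coordinate is excluded and, after a subsequence extraction, at least two coordinates do not vanish; if only one coordinate survives, we instead compare orders of magnitude, using positive in-degree to get divergence.

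\emph{Step 4: stationarity of limit points.} Limit points exist by compactness, and each lies in the interior. The update map $M$ defined by \eqref{eq:mmo} followed by normalization is continuous on the interior. Its fixed points are exactly the stationary points: $M(w)=w$ iff
\[
D_i(T)=w_i\sum_t a_{i,t}/S_t(w),
\]
which is $\partial\ell/\partial w_i=0$. For a convergent subsequence $w(s_n)\to w^*$, continuity gives $w(s_n+1)\to M(w^*)$. Because the values converge, $\ell(M(w^*))=\ell(w^*)$. The chain of inequalities then forces $g_1(M(w^*)\mid w^*)=g_1(w^*\mid w^*)$, and strict concavity of the separable surrogate (unique maximizer) gives $M(w^*)=w^*$. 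So $w^*$ is stationary.

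\emph{Step 5: identification of the limit.} By Theorem~\ref{thm:opt}(i), $w^*$ is a global optimizer, and by (ii) the $\ell_1$ normalization makes it unique. All limit points therefore coincide, and the whole bounded sequence converges to the unique normalized optimizer.
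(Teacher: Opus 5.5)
\textbf{Overall.} Your route is the paper's: the MM sandwich, scale invariance of $\ell$ under renormalization, a compact upper level set, continuity of the update map, strict improvement away from fixed points, and Theorem~\ref{thm:opt} for uniqueness. Your out-channel boundary argument is correct and more explicit than the paper, which only asserts divergence at the boundary ``under Assumption~1''. There, some coordinate is at least $1$ and $a_{k,t}\ge1$, so every denominator is at least $1$, while $\sum_{i\in B}\Dout_i(T)\log w_i\to-\infty$.

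\textbf{The gap.} The step that fails is the in-channel case of Step~3 where a single coordinate $j$ survives. Positive in-degree does not force divergence there. Suppose all $w_i$, $i\ne j$, are of order $\varepsilon$. The $\Dout_j$ events with source $j$ then contribute about $+\Dout_j\log(1/\varepsilon)$ through their vanishing denominators. The numerator contributes $-(T-\Din_j)\log(1/\varepsilon)$. But $\Dout_j\le T-\Din_j$, with equality exactly when every edge is incident to $j$.

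Concretely, take $N=3$ and ordered edges $1\to2$, $2\to1$, $1\to3$, $3\to1$. Assumption~\ref{ass:likelihood} holds, and, writing $w=\win$,
\begin{align*}
\ell(\win)&=2\log w_1+\log w_2+\log w_3-\log(w_2+w_3)\\
&\quad-\log(w_1+w_3)-\log(2^\alpha w_2+w_3)-\log\{2^\alpha(w_1+w_2)\}\\
&=\{2\log w_1-\log(w_1+w_2)-\log(w_1+w_3)\}+h(w_2/w_3).
\end{align*}
Here $h$ depends only on the ratio $w_2/w_3$. The braced term is strictly negative and tends to $0$ as $w_2,w_3\downarrow0$ with their ratio fixed. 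So the supremum is not attained, no stationary point exists, and the theorem's conclusion fails for this channel.

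The paper's proof asserts the same boundary divergence without argument, so you inherited this gap rather than created it. Still, your ``compare orders of magnitude'' justification does not close it; an extra hypothesis is needed. One sufficient condition: for every node $j$, some event has neither endpoint equal to $j$. For such an event with target $i\ne j$,
\[
\frac{w_ia_{i,t}}{\sum_{k\ne I^{(1)}_t}w_ka_{k,t}}\le \frac{w_ia_{i,t}}{w_ja_{j,t}}\to0 .
\]
Every other likelihood factor is at most $1$, so $\ell(\win)\to-\infty$.

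\textbf{A smaller repair in Step~4.} The surrogate $g_1(\cdot\mid w^*)$ is not scale invariant, so you cannot run the chain through the normalized point $M(w^*)$. Run it with the unnormalized maximizer $U(w^*)$ instead:
\[
\ell(w^*)=\ell(M(w^*))=\ell(U(w^*))\ge g_1(U(w^*)\mid w^*)\ge g_1(w^*\mid w^*)=\ell(w^*).
\]
This forces $U(w^*)=w^*$, which is exactly the stationarity equation. Your claimed equivalence ``$M(w)=w$ iff stationary'' then becomes unnecessary. As stated, its forward direction is unproved, because $M(w)=w$ only gives $U(w)=cw$. The paper supplies the missing step: sum $D_i(T)=c\,w_ib_i(w)$, with $b_i(w)=\sum_t a_{i,t}/S_t(w)$, over $i$, and use $\sum_i w_ib_i(w)=T$ to get $c=1$.
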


In particular, the in-degree update procedure excludes the source $I^{(1)}_t$ from the denominator. This
is implemented with the time-specific attachment
\begin{equation}
\label{eq:dtilde}
\widetilde a^{\mathrm{in}}_{i,t}
=\mathbf{1}\{i\neq I^{(1)}_t\}
\bigl(\Din_i(t-1)+1\bigr)^{\alpha},
\end{equation}
which zeros only the current source mass and leaves the in-degree history
unchanged. Algorithm~\ref{alg:weights} summarizes the procedure.

\begin{algorithm}[h]
\caption{Latent-weight estimation (known $\alpha$)}
\label{alg:weights}
\begin{algorithmic}[1]
\STATE \textbf{Input:} $\{G(t)\}_{t=1}^{T}$, $V=\{1,\dots,N\}$, edges
$e_t=(I^{(1)}_t,I^{(2)}_t)$, $\alpha$, MM iterations $S$.
\STATE Initialize $\wout(0),\win(0)$.
\FOR{$t=1$ to $T$}
  \STATE $\Dout_i(t)\!\leftarrow\!\Dout_i(t-1)+\mathbf{1}\{i=I^{(1)}_t\}$
  \STATE update $\Din_i(t)$ and form the mask \eqref{eq:dtilde}
\ENDFOR
\FOR{$s=1$ to $S$}
  \STATE update $\wout$ by \eqref{eq:mmo}
  \STATE update $\win$ by \eqref{eq:mmi} using the masked masses
\ENDFOR
\STATE \textbf{return} $\widehat\wout=N\,\wout(S)/\|\wout(S)\|_1$,
$\widehat\win=N\,\win(S)/\|\win(S)\|_1$.
\end{algorithmic}
\end{algorithm}

\subsection{Profile Likelihood for the PA Exponent (Unknown $\alpha$)}
In practice, $\alpha$ is often unknown. Since the latent weights are
unobserved and $\alpha$ enters the log-likelihood in a nonlinear way together
with the weights, jointly maximizing $(\wout,\win,\alpha)$ is
difficult. Moreover, the concavity that holds for the weights alone is
lost once $\alpha$ is also free. We therefore use a profile-likelihood
approach. For fixed $\alpha$, let
\begin{equation}
\bigl(\widehat\wout(\alpha),\widehat\win(\alpha)\bigr)
=\argmax_{\wout,\win}\;\ell_{\mathrm{full}}
  \bigl(\wout,\win,\alpha\bigr)
\end{equation}
be the MM estimate from Algorithm~\ref{alg:weights}, and define the
profile log-likelihood
$\ell_p(\alpha):=\ell_{\mathrm{full}}\bigl(\widehat\wout(\alpha),
\widehat\win(\alpha),\alpha\bigr)$. Since $\ell_p$ has no closed-form
maximizer, we combine a grid search with local continuous
optimization, alternating weight and $\alpha$ updates until
convergence (Algorithm~\ref{alg:alpha}).

\begin{algorithm}[h]
\caption{Profile-likelihood estimation of $\alpha$}
\label{alg:alpha}
\begin{algorithmic}[1]
\STATE \textbf{Input:} edges $\{e_t\}_{t=1}^{T}$, grid
$\{\alpha_1,\dots,\alpha_K\}\subset[\alpha_{\min},\alpha_{\max}]$, MM
iterations $S$.
\STATE $\ell_{\max}\leftarrow-\infty$
\FOR{$k=1$ to $K$}
  \STATE run Algorithm~\ref{alg:weights} at $\alpha_k$ to get
  $\wout(\alpha_k),\win(\alpha_k)$
  \STATE $\ell_p(\alpha_k)\leftarrow
  \ell_{\mathrm{full}}(\wout(\alpha_k),\win(\alpha_k),\alpha_k)$
  \IF{$\ell_p(\alpha_k)>\ell_{\max}$}
    \STATE $\ell_{\max}\leftarrow\ell_p(\alpha_k)$;\;
    $\widehat\alpha\leftarrow\alpha_k$
  \ENDIF
\ENDFOR
\STATE refine: $\widehat\alpha\leftarrow\argmax_{\alpha\in
[\underline\alpha,\overline\alpha]}
\ell_{\mathrm{full}}(\widehat\wout(\alpha),\widehat\win(\alpha),\alpha)$ by Brent's
method in a neighborhood of $\widehat\alpha$
\STATE \textbf{return} $\widehat\alpha$,
$\widehat\wout(\widehat\alpha)$, $\widehat\win(\widehat\alpha)$.
\end{algorithmic}
\end{algorithm}

\begin{remark}[Weak stationary identification of $\alpha$]
\label{rem:alpha-ident}
{\rm
Long-run degree proportions do not identify $\alpha$. Indeed, if
$p=(p_1,\ldots,p_N)$ is the limiting out-degree vector, then for any
$\alpha'\in(0,1)$ the weights
\[
w_i(\alpha')=c\,p_i^{\,1-\alpha'},\qquad c>0,
\]
produce exactly the same limit through \eqref{eq:transmission-map}.
Therefore, $\alpha$ cannot be separated from the node weights using
long-run degree proportions alone.

Information about $\alpha$ comes from the edge sequence before the degrees
stabilize, because the attachment probabilities contain
$(D_i(t-1)+1)^\alpha$ at every step. Hence, profile-likelihood
estimation of $\alpha$ requires reliably ordered edge events and may remain
weak at moderate horizons. Later in Section~\ref{sec:real}, the airline data contain terminal daily counts
but no reliable within-day ordering, so we fix $\alpha$ and assess
sensitivity across plausible values. Under the large-horizon reconstruction,
changing $\alpha$ rescales the out-weight variation and leaves its fitted
spatial range unchanged, as stated in
Proposition~\ref{prop:alpha-invariance}. The exact in-weight inverse in
Proposition~\ref{prop:in-inversion} need not be a common rescaling, so its
range sensitivity is evaluated numerically.}
\end{remark}

\subsection{Temporal Pre-whitening and Spatial Quasi-Likelihood}
The MM weights are normalized within each segment. Directly assigning a
full Gaussian distribution to their logarithms would ignore the random
normalizing shift. Let $H\in\R^{N\times q}$, $q=N-1$, satisfy
$H^\top H=I_q$ and $H^\top\onev=0$, and define the identifiable
orthonormal log-ratio coordinates
\begin{equation}
C_\ell=H^\top Y_\ell,\qquad
Y_\ell=\log\widehat{\bm w}_\ell.
\label{eq:contrasts}
\end{equation}
Since $H^\top\onev=0$, each entry of $C_\ell$ is a zero-sum linear
combination of the log-weights, so the unknown common log-weight
shift is annihilated by $H^\top$ and only
the identifiable relative weights remain. Let
$\bar C=m^{-1}\sum_{\ell=1}^{m}C_\ell$ and
$\widetilde C_\ell=C_\ell-\bar C$. The empirical pooled AR estimate is
\begin{equation}
\widehat\phi=
\frac{\sum_{\ell=2}^{m}\widetilde C_{\ell-1}^{\top}
\widetilde C_\ell}
{\sum_{\ell=2}^{m}\|\widetilde C_{\ell-1}\|^2},
\label{eq:phi-hat}
\end{equation}
and temporal pre-whitening gives
\begin{equation}
E_\ell=\widetilde C_\ell-\widehat\phi\widetilde C_{\ell-1},
\qquad \ell=2,\ldots,m.
\label{eq:innovations}
\end{equation}
After subtracting the true mean and filtering with the true
$\phi$, the latent innovations $\varepsilon_\ell=H^\top e_\ell$ are i.i.d.
with covariance
\begin{equation}
\operatorname{Cov}(\varepsilon_\ell)=\gamma_e R_C(\xi),\qquad
R_C(\xi)=H^\top\Sigma_0(\xi)H,
\label{eq:innovation-cov}
\end{equation}
where $\gamma_e=(1-\phi^2)\gamma$ and
$\Sigma_0(\xi)=\{\exp(-\|\bm{x}_i-\bm{x}_j\|/\xi)\}$. Thus, the spatial
range is unchanged by pre-whitening, while stationary and innovation
variances must be distinguished.

For $n_e=m-1$ innovations, define
\[
Q(\xi)=\sum_{\ell=2}^{m}E_\ell^\top R_C(\xi)^{-1}E_\ell,
\qquad
\widehat\gamma_e(\xi)=\frac{Q(\xi)}{n_e q}.
\]
We estimate $\xi$ by minimizing the plug-in Gaussian profile objective
\begin{equation}
n_e\log|R_C(\xi)|+n_e q\log\widehat\gamma_e(\xi),
\label{eq:innovation-profile}
\end{equation}
and report the stationary variance
$\widehat\gamma=\widehat\gamma_e/(1-\widehat\phi^2)$. Since $\bar C$,
$\phi$, and the latent weights are estimated, the fitted $E_\ell$ are
residuals,
not exactly i.i.d. innovations, and \eqref{eq:innovation-profile} is therefore a
conditional quasi-likelihood. 

We use residual autocorrelation functions (ACFs) to assess its adequacy. Full
uncertainty requires a parametric or moving-block bootstrap that repeats
weight construction, AR fitting, and spatial fitting in every replicate;
intervals that omit any of these stages are labeled conditional. The
supplement gives the likelihood details and empirical diagnostics.

\section{Simulation Studies}
\label{sec:sim}

Two experiments test the mechanisms used in the empirical analysis. The first
tracks the transmission of latent spatial dependence to degree proportions;
the second isolates the common mode induced by random segment volume.
Supplementary experiments compare independent-weight baselines, examine both
degree directions, and assess weight recovery and identification of the PA
exponent. All experiments use fresh simulated fields and networks. 

\subsection{Settings and Network Generation}
Locations are drawn uniformly over $[0,10]^2\subset\R^2$ with the
Euclidean metric. Unless otherwise stated, weights are generated from
the GP-lognormal model \eqref{eq:wo}--\eqref{eq:wi} with
$\alpha_0=0.5$, $\gamma=1$, and $\xi=2$. Networks are
generated by sampling each endpoint from a categorical distribution
proportional to $w^{A}\odot(D^{A}+1)^{\alpha}$. The basic generation
procedure and a Fenwick-tree implementation with $O(T\log N)$ sampling for
larger runs are given in the supplement.

\subsection{Transmission of Latent Dependence to Degrees}
\label{sec:transmission-sim}
We first consider Theorem~\ref{thm:transfer-degrees}. For $N=80$ fixed
sites and $30$ independent latent fields, we compute the limiting
degree-proportion map \eqref{eq:transmission-map} and simulate PA networks at
$T\in\{1000,5000,20000\}$ from the same weights. Fig.~\ref{fig:transmission}
shows that the degree-level madogram-coefficient curves have the same
distance-decaying shape as the PA limit and move toward it as $T$
increases. The convergence occurs before any weights or spatial parameters are
estimated and therefore reflects the PA dynamics directly.

\begin{figure}[t]\centering
\includegraphicssafe[width=\columnwidth]{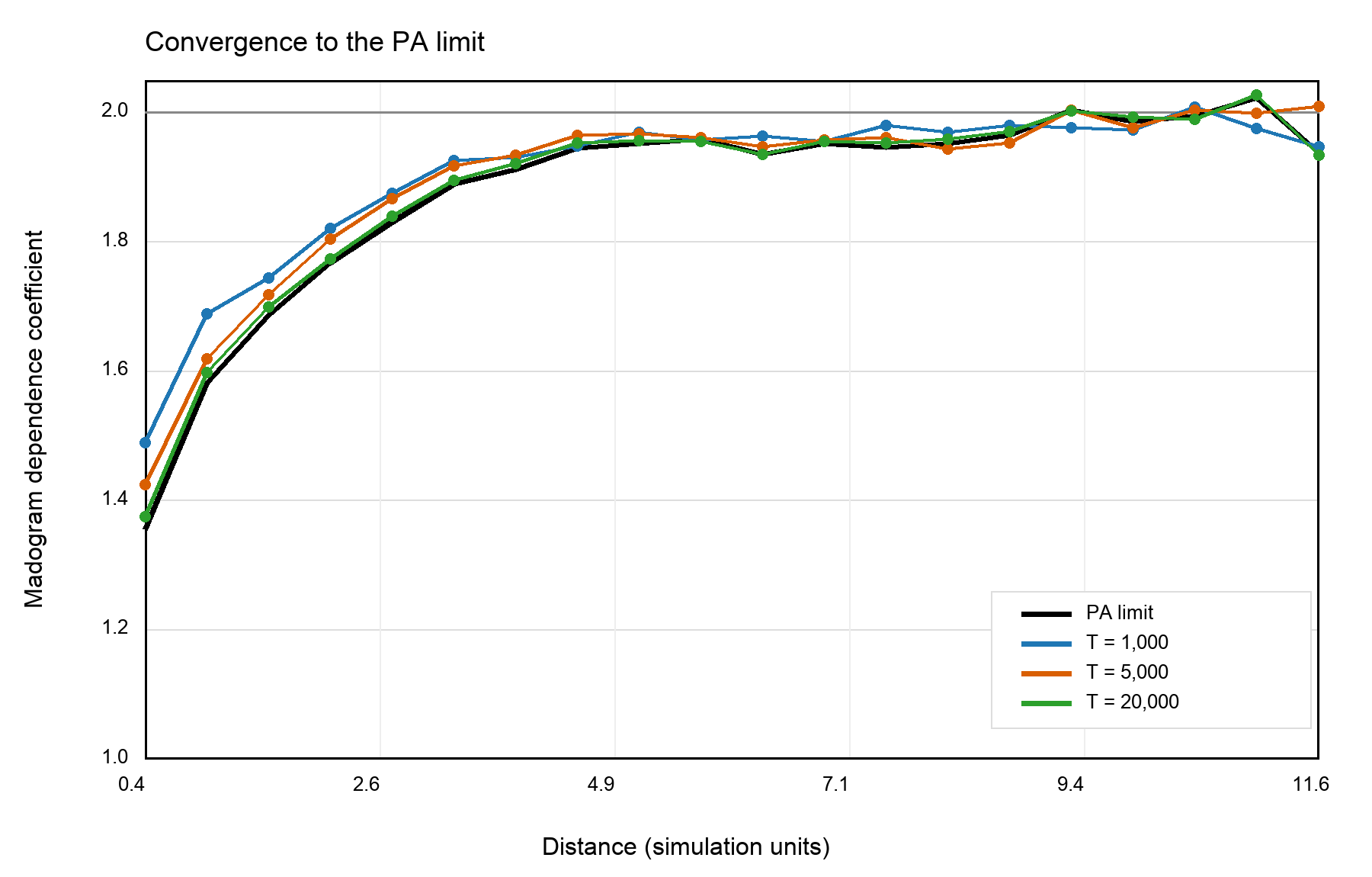}
\caption{Transmission experiment: binned madogram dependence coefficient of the PA
limit and of simulated degree proportions at increasing edge horizons.
The degree curves approach the powered-and-normalized latent-weight curve,
illustrating Theorem~\ref{thm:transfer-degrees}. The grey line marks the
independence reference $\theta_F=2$.}
\label{fig:transmission}
\end{figure}

The supplement reports the corresponding controls. Independent heavy-tailed
weights and spatially varying independent marginals produce nearly flat
profiles, whereas GP-correlated weights produce distance decay. A full
directed simulation also shows the same qualitative transmission in the
coupled in-degree process.

\subsection{Random Segment Volume}
We next test the observed-volume decomposition directly.
On the same $[0,10]^2$ domain, we generate $180$ segment replicates with
$N=80$, $\alpha=0.5$, $\gamma=0.03$, and $\xi=2$. For each segment,
we draw the spatial log-weight field independently of a random log-volume
$V_\ell=0.18Z_\ell+0.08\varepsilon_\ell$, where
$Z_\ell,\varepsilon_\ell\stackrel{\mathrm{i.i.d.}}{\sim}\Nrm(0,1)$, and set
$T_\ell$ to the nearest integer to $15000e^{V_\ell}$. The realized
edge counts have mean $15{,}191$, range $9{,}597$--$24{,}644$, and
coefficient of variation $0.199$. We then simulate the PA out-degrees at each
realized $T_\ell$. We set $\phi=0$ in this diagnostic to isolate the
marginal common-mode mechanism. Note that temporal persistence changes
sampling uncertainty but not the stationary marginal separation between
raw degrees and degree proportions.

\begin{figure}[t]\centering
\includegraphicssafe[width=\columnwidth]{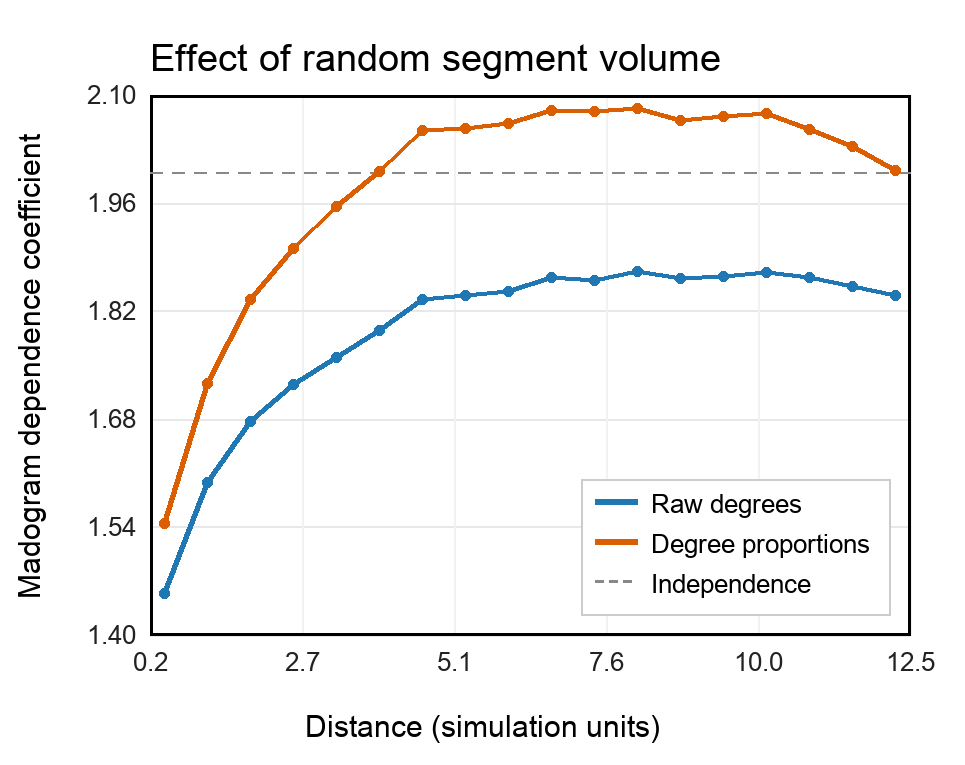}
\caption{Generative simulation of random segment volume and spatially
dependent degree proportions. Random
segment volume creates distance-independent common-mode dependence in raw PA out-degrees;
dividing by $T_\ell$ removes that common mode while retaining short-range
spatial dependence. The grey line marks the population independence
reference $\theta_F=2$.}
\label{fig:volume-sim}
\end{figure}

Fig.~\ref{fig:volume-sim} compares raw out-degrees with the degree proportions
$D_{\ell i}/T_\ell$. At distances above $8$, the raw-degree madogram
coefficient flattens at about $1.86$, whereas the degree-proportion curve reaches the
independence reference (estimated value $2.06$, which is admissible for
the finite-sample rank statistic). At distances below $2$, both
curves retain spatial dependence, with averages $1.58$ and $1.70$,
respectively. 

This experiment confirms that
spatially correlated weights determine the short-range profile, while
random $T_\ell$ creates the common mode in raw degrees.

Supplementary estimation experiments show improving MM weight recovery as the
edge horizon grows and a conservative profile estimate of $\alpha$ at moderate
horizons. The latter supports the fixed-reference treatment of $\alpha$ in the
airline analysis.

\section{Airline Network Analysis}
\label{sec:real}

The airline dataset, obtained from the U.S. Bureau of Transportation Statistics
(BTS)~\cite{btsdata}, consists of $194{,}385{,}636$ domestic flight records spanning
1987--2020 over $421$ U.S. airports. We analyze major airports,
defined as those with more than $100{,}000$ total departures and
arrivals over 1987--2020. The data are transformed into a directed
multigraph in which nodes are airports and directed edges are flight
routes. The regional panels retain flights whose origin and destination
both lie in the same division, so out-/in-degrees correspond to
intra-division departures/arrivals;
unlike undirected treatments~\cite{li2020}, this retains both
directionality and flight multiplicity. 

We define an Eastern Division
(ED) as airports between $25^\circ$N--$49^\circ$N and
$66^\circ$W--$88^\circ$W, covering most of the Eastern Time Zone and
part of the Central Time Zone (including ORD and DTW); the remaining
airports form the Other Division (OD).

\subsection{Spatial Dependence in Airport Activity}

We analyze flights from 2015--2019. Dataset~I contains $71$ airports and
$10{,}367{,}080$ intra-division flights in the Eastern Division (ED);
Dataset~II contains $61$ airports and $8{,}981{,}748$ intra-division
flights in the Other Division (OD). We divide each month into three
reporting periods: days 1--10, 11--20, and 21--end, giving $180$
observational periods. The final period lasts 8--11 days, so observed
segment volume contains a small exposure-length component in addition to
traffic intensity. For ED, dividing volume by period length changes its
coefficient of variation from $0.191$ to $0.185$; the spatial analysis uses
degree proportions, which remove this direct exposure scale. Each period
contains roughly $50{,}000$--$58{,}000$ directed edges. Geodesic distances between airports
are computed from their latitudes and longitudes.

For the ED out-degrees, the estimated madogram dependence coefficient
increases from about $1.45$ at short distances to about $1.8$ at
$2150$~km (Fig.~\ref{fig:ed}). Because smaller values indicate stronger
dependence, this pattern shows that airport activities become less
dependent as distance increases. The coefficient nevertheless remains
below the independence reference value of $2$, indicating dependence
that persists even between distant airports. A one-sided
location-permutation test confirms the increasing trend
($p=0.002$). The confidence bands are obtained from a six-period circular
moving-block bootstrap, which preserves short-term temporal dependence.

To explain the remaining dependence at large distances, we examine
variation in the log-degrees across periods. The first principal component
explains $82.6\%$ of the total variance, has positive loadings for $97\%$
of the airports, and has correlation $0.99$ with the total number of
flights per period. The latter ranges from $38{,}000$ to $80{,}000$, with
a coefficient of variation of $19\%$. Thus, the dominant fluctuation in
the raw degrees closely tracks changes in total network activity and
affects airports at all distances.

This network-wide effect is removed by replacing the raw degrees with
the out-degree proportion
\[
Q_{\ell i}^{\mathrm{out}}
=
\frac{D_{\ell i}^{\mathrm{out}}}{T_\ell}
=
\frac{D_{\ell i}^{\mathrm{out}}}
     {\sum_j D_{\ell j}^{\mathrm{out}}}.
\]
For these proportions, the madogram dependence coefficient increases to about
$1.8$ within $500$~km and levels off near the independence reference
beyond $1500$~km, where its estimate is approximately $2.06$. 

The raw-degree profile therefore combines two essential effects:
variation in total flight volume, which produces dependence at all
distances, and variation in the relative activities of individual
airports, whose dependence weakens with distance. The latter is the
short-range spatial effect described by the fitted Gaussian-process model.

\begin{figure}[t]
\centering
\includegraphicssafe[width=\columnwidth]
{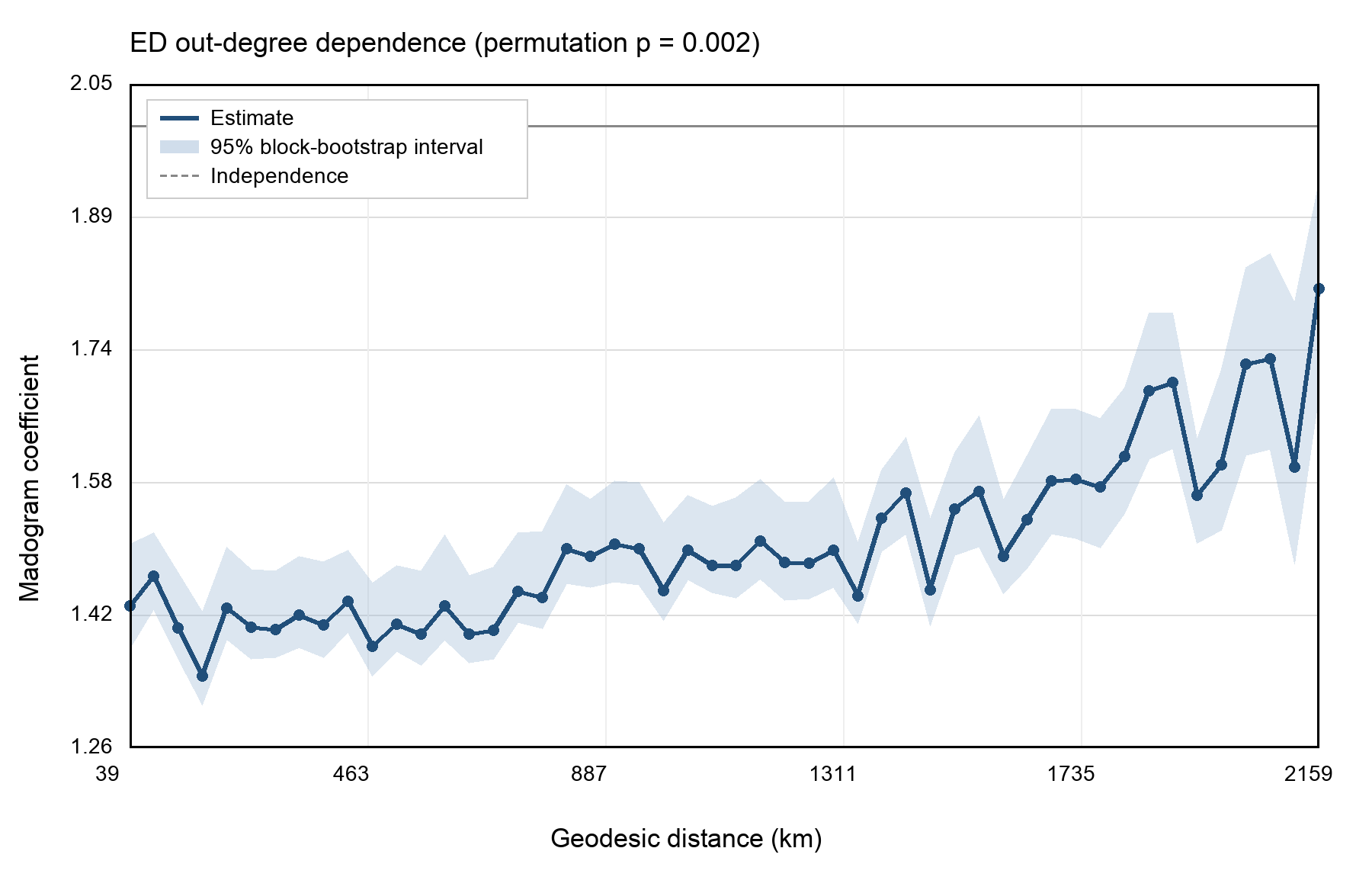}
\caption{Madogram dependence coefficient for ED out-degrees as a
function of geodesic distance, with $95\%$ confidence bands from a
six-period circular moving-block bootstrap. The coefficient increases
significantly with distance (one-sided location-permutation
$p=0.002$) but remains below the independence reference
$\theta_F=2$.}
\label{fig:ed}
\end{figure}

\subsection{Estimation Results}

The airline data provide the terminal degree counts for each period but
not a reliable ordering of the individual flights. We therefore do not
apply the ordered-history MM estimator or profile likelihood to these
data. For the out-degrees, each period contains enough edges to motivate
the large-horizon approximation in Theorem~\ref{thm:degree-tail}:
\begin{equation}
\label{eq:empirical-weight-recovery}
\log w^{\mathrm{out}}_{\ell i}
=
(1-\alpha)
\log\left(\frac{D^{\mathrm{out}}_{\ell i}}{T_\ell}\right)
+c^{\mathrm{out}}_\ell,
\end{equation}
where $c^{\mathrm{out}}_\ell$ is common to all nodes in period $\ell$.
Both $c^{\mathrm{out}}_\ell$ and the contribution of $\log T_\ell$
disappear when the common shift across nodes is removed using
\eqref{eq:contrasts}.
For the in channel, we set
$a=Q^{\mathrm{out}}_\ell$ and $p=Q^{\mathrm{in}}_\ell$ in
Proposition~\ref{prop:in-inversion}, solve
\eqref{eq:in-inverse-objective}, and construct the normalized in-weights
from \eqref{eq:in-inverse-weights}. Thus both channels respect their
respective no-self-loop limits without imposing an artificial edge order.

The calibrated attachment-rate and joint attachment--fitness checks in the
supplement place the effective exponent near $0.15$--$0.2$. We therefore use
$\alpha=0.2$ as an empirically informed reference and normalize the estimated
weights in each period so that $\sum_i w_{\ell i}=N$. Because the within-day
edge order is unavailable and terminal degree proportions do not identify
$\alpha$, this value is a reference rather than a precise estimate.
Table~\ref{tab:hubs}
reports the ten ED airports with the largest mean estimated log-weights.
ATL ranks first and ORD second, followed by CLT. The remaining positions
include several Northeast airports, including LGA, BOS, EWR, BWI, and
DCA. The out- and in-degree rankings coincide. 

\begin{table}[t]
\caption{Top ten ED airports by mean estimated log-weight at
$\alpha=0.2$. Weights are normalized within each period so that
$\sum_i w_{\ell i}=N$; the in-degree values use the exact inversion in
Proposition~\ref{prop:in-inversion}.}
\label{tab:hubs}
\centering
\scriptsize
\setlength{\tabcolsep}{1.5pt}
\begin{tabular}{@{}llcccccccccc@{}}
\toprule
$\alpha$ & & ATL & ORD & CLT & LGA & BOS & MCO & DTW & EWR & BWI & DCA\\
\midrule
$0.20$ & Out & 1.765 & 1.423 & 1.137 & 1.088 & 1.084 & 1.024 & 0.956 & 0.915 & 0.855 & 0.851\\
$0.20$ & In  & 1.859 & 1.464 & 1.153 & 1.099 & 1.095 & 1.031 & 0.959 & 0.915 & 0.853 & 0.850\\
\bottomrule
\end{tabular}
\end{table}

Repeating the out-degree analysis with the larger sensitivity value
$\alpha=0.5$ leaves $\widehat\phi$ and $\widehat\xi$ unchanged to the
reported precision. The estimated stationary variance changes by
\[
\left(\frac{1-0.5}{1-0.2}\right)^2=0.390625,
\]
as predicted by Proposition~\ref{prop:alpha-invariance}; for example,
the ED out-degree estimate changes from $\widehat\gamma=0.0274$ at
$\alpha=0.2$ to $0.0107$ at $\alpha=0.5$. Thus, across these two choices,
the estimated out-range is invariant but the variance is not. For the exact
in-degree inversion, changing $\alpha$ is not a common rescaling; nevertheless,
the fitted in-range changes only from $72.9$ to $72.2$~km for ED and from
$109.5$ to $108.5$~km for OD. Full sensitivity results are reported in the
supplement.

We next estimate temporal persistence and spatial covariance. For
numerical stability, all geodesic distances are divided by $100$ during
optimization. The estimated AR(1) coefficients are $0.934$ for both ED
degree directions and between $0.917$ and $0.920$ for OD, indicating
strong temporal persistence. After pre-whitening, the median lag-one
residual autocorrelation is $-0.05$ for ED and $0.04$ for OD, and the
corresponding 90th percentiles are $0.13$ and $0.15$, suggesting
that little lag-one dependence remains after the AR(1) adjustment.

Table~\ref{tab:region} reports the spatial parameter estimates. The
stationary variance is calculated from the innovation variance as
\[
\widehat\gamma
=
\frac{\widehat\gamma_e}{1-\widehat\phi^2}.
\]
The innovation variances are approximately $0.00350$--$0.00358$ for ED and
$0.00383$--$0.00385$ for OD. The estimated spatial range is about
$74$~km for ED and $110$~km for OD. The conditional profile intervals
are $[69.8,77.6]$~km and $[69.2,76.9]$~km for the ED out- and in-degree
analyses, respectively, and $[103.6,116.4]$~km and
$[103.3,116.0]$~km for OD. 

\begin{table}[t]
\caption{Temporal and spatial parameter estimates at $\alpha=0.2$.
The range parameter $\xi$ is reported in units of $100$~km; the in-degree
estimates use the exact no-self-loop inversion.}
\label{tab:region}
\centering
\scriptsize
\setlength{\tabcolsep}{2.5pt}
\begin{tabular}{@{}lcccccc@{}}
\toprule
Region & $\phi^{\rm out}$ & $\gamma^{\rm out}$ & $\xi^{\rm out}$ &
$\phi^{\rm in}$ & $\gamma^{\rm in}$ & $\xi^{\rm in}$\\
\midrule
ED & 0.934 & 0.0274 & 0.736 & 0.934 & 0.0281 & 0.729\\
OD & 0.917 & 0.0242 & 1.098 & 0.920 & 0.0249 & 1.095\\
\bottomrule
\end{tabular}
\end{table}

The fitted spatial range and the distance scale visible in the raw-degree
madogram have different interpretations. With
$\widehat\xi\approx74$~km, the Gaussian-copula benchmark in
Proposition~\ref{prop:latent-fmado} reaches
$\theta_{F,\mathrm{lat}}=1.95$ at approximately
\[
h=2.8\widehat\xi\approx207\ \text{km}.
\]
Thus, the fitted Gaussian process describes dependence that weakens within
a few hundred kilometers. The raw-degree madogram remains below $2$ over
much larger distances because every raw degree contains the same
period-specific total $T_\ell$. This persistent long-distance association
should therefore not be interpreted as a spatial range of several thousand
kilometers.

\subsection{Reconstruction Using Observed Flight Volume}

Note that degree proportions, $Q_{\ell i}^{\mathrm{out}}
=
\frac{D_{\ell i}^{\mathrm{out}}}{T_\ell}$, remove variation in the total number of flights,
and 
their madogram profile approaches the independence reference at large
distances, which therefore captures only the distance-dependent part of the
observed association. To determine whether variation in $T_\ell$ explains
the remaining long-distance dependence in the raw degrees, we construct
\[
D_{\ell i}^{*}
=
T_\ell^{*}Q_{\ell i}^{\mathrm{out}},
\]
where $T_\ell^{*}$ is sampled independently from the observed period
volumes. This reconstruction preserves the empirical degree proportions
but removes their observed association with total volume.

Figure~\ref{fig:modelfit} compares three profiles. The raw-degree profile
remains below the independence reference at large distances. The profile
obtained from $D_{\ell i}^{*}$ closely reproduces this behavior, whereas
the profile based on $Q_{\ell i}^{\mathrm{out}}$ approaches independence.
Hence, variation in total flight volume is sufficient to explain most of
the long-distance dependence in the ED raw degrees, while the degree
proportions retain the shorter-range dependence associated with distance.

\begin{figure}[t]
\centering
\includegraphicssafe[width=\columnwidth]{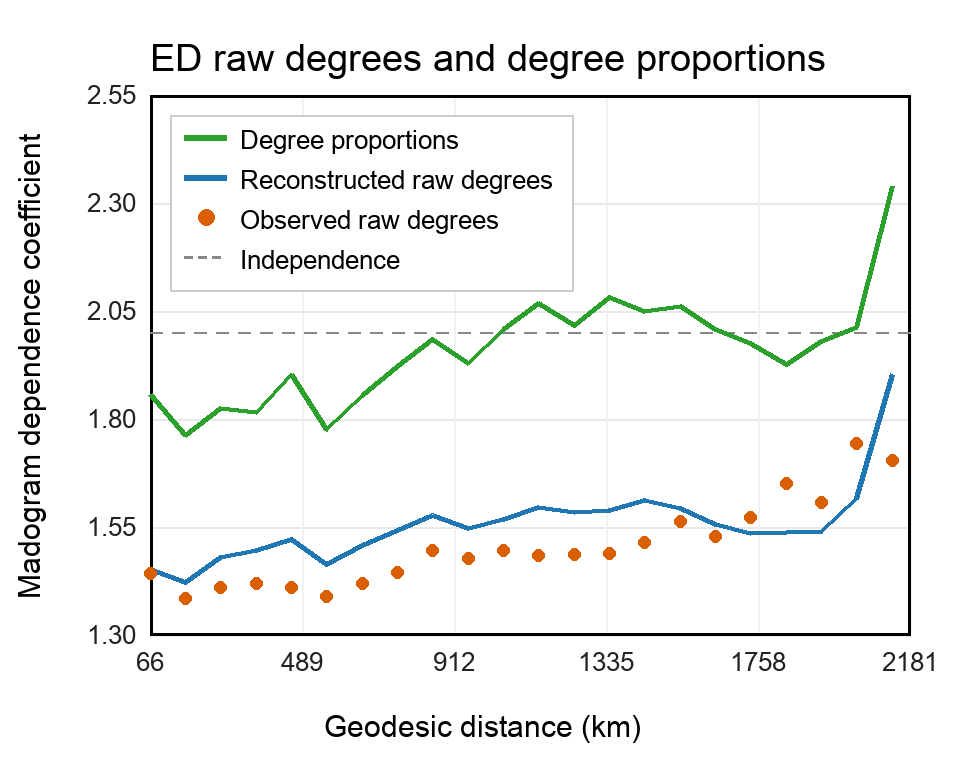}
\caption{Reconstruction of ED out-degrees using observed flight volume.
Markers show the empirical raw-degree madogram coefficient. The blue curve
is obtained from independently sampled period volumes multiplied by the
empirical degree proportions. The green curve is based on degree
proportions alone. Variation in total volume reproduces the long-distance
dependence in the raw degrees, whereas the degree proportions retain the
shorter-range dependence and approach the independence reference
$\theta_F=2$.}
\label{fig:modelfit}
\end{figure}

\subsection{Exploratory Co-exceedance Distance}
\label{sec:hub}

We then use the fitted residual covariance model to examine how the simultaneous occurrence of
unusually large degree proportions changes with distance. For the ED
out-degrees, the fitted stationary spatial parameters are
\[
(\widehat\gamma,\widehat\xi)=(0.0274,0.736),
\]
where $\widehat\xi=0.736$ corresponds to $73.6$~km because distances were
divided by $100$ during estimation. We simulate mean-zero spatial residual
fields at
the observed airport locations and transform them into limiting out-degree
proportions using \eqref{eq:transmission-map}.

Let $P_i$ denote the simulated out-degree proportion at airport $i$, and
let $q_i(u)$ be its marginal $u$-quantile. For each airport pair, define
\[
\chi(u;\bm{x}_i,\bm{x}_j)
=
\Prob\!\left\{
P_j>q_j(u)\mid P_i>q_i(u)
\right\}.
\]
We evaluate this probability at $u=0.9$ and group airport pairs by
geodesic distance, i.e.\ $\chi(0.9)$ is the probability that airport $j$
is in the upper quantile of its own degree-proportion distribution,
conditional on airport $i$ also being in its upper quantile.

For the fitted residual covariance model, the nearest-distance bin has
$\chi(0.9)\approx0.22$, compared with the independence baseline
$1-u=0.1$. Hence, conditional upper-quantile activity is approximately
$2.2$ times as likely at nearby airports as it would be under spatial
independence. The probability decreases rapidly toward $0.1$, with most
of the decline occurring by approximately $150$~km
(Fig.~\ref{fig:hub}).

As a comparison, we repeat the simulation using independent residuals with
the same marginal variance. The resulting co-exceedance probability
remains close to $0.1$ at all distances. The excess probability in the
correlated model therefore isolates the effect of spatial correlation. This
experiment does not include the estimated airport-specific mean weights.

\begin{figure}[t]
\centering
\includegraphicssafe[width=\columnwidth]{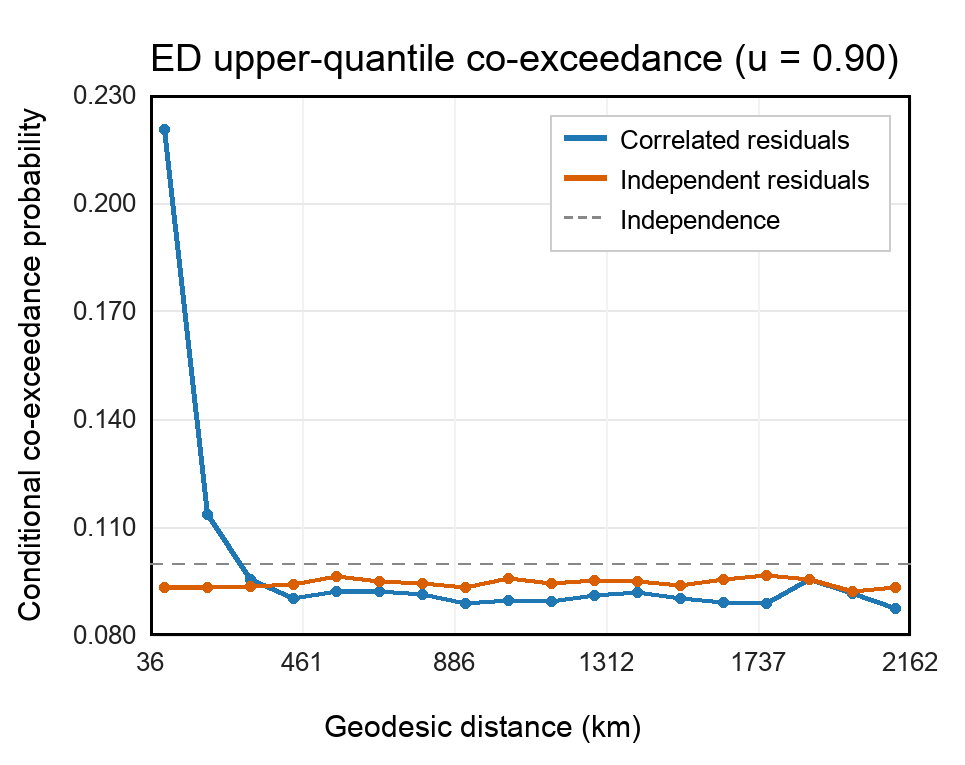}
\caption{Conditional co-exceedance probability at $u=0.9$ as a function
of geodesic distance for the ED airports. The fitted residual covariance
model is compared with independent residuals having the same marginal
variance. The grey line is the independence baseline
$1-u=0.1$. Under the correlated model, the excess co-exceedance probability
decreases sharply over approximately $150$~km.}
\label{fig:hub}
\end{figure}

\subsection{European Replication: Per-Carrier Analysis}
\label{sec:europe}

To test whether the volume/spatial decomposition generalizes beyond the
U.S. network, we apply the same diagnostics to European air traffic from
the OpenSky Network~\cite{opensky2021,openskydata}, using all flight records for 
year $2019$. Each carrier defines one directed network with airports
as nodes and flights as repeated directed edges. We use 52 complete consecutive
seven-day segments and omit the trailing day. We first retain airports in the
European International Civil Aviation Organization (ICAO) regions
\texttt{E*}/\texttt{L*} with at least three departures and three arrivals in
every segment. For each carrier, we keep up to 30 of the most active
airports and then retain only flights whose two endpoints both belong to that
set. The resulting out- and in-degrees therefore describe the same closed
directed network. 

For each carrier, we observe a similar pattern as in the U.S. case.
The raw-degree madogram dependence coefficient stays well below the
independence reference of $2$ at all distances (typically
$\theta_F\approx1.3$--$1.8$), indicating a distance-independent common
mode, while the degree-proportion coefficient rises toward $2$ with
distance, reflecting short-range spatial dependence. The first principal
component of the log-degree fluctuations explains $39$--$74\%$ of the
variance and correlates $0.83$--$0.97$ with total weekly volume, linking much
of the common mode to network-wide volume.

After $AR(1)$ pre-whitening of the log-ratio coordinates, the plug-in
Gaussian quasi-likelihood returns an interior spatial range in both directions
for Lufthansa, easyJet, Vueling, and Wizz Air
(Table~\ref{tab:europe}; Fig.~\ref{fig:europe}a). The three networks other
than Vueling contain at least fifteen airports, and their estimates span
$74$--$108$~km, which overlap the U.S. ED--OD range. British Airways and Air
France have an interior out-range but a boundary in-range, while both channels
are unidentified for KLM and Ryanair. These mixed results support replication
of the short-range component in several carriers without implying universality.
We also use the reference exponent $\alpha=0.2$, and by
Proposition~\ref{prop:alpha-invariance}, the out-channel range is invariant to
this reconstruction choice. The exact in-degree inversion has no corresponding
algebraic invariance, so its estimates are conditional on $\alpha=0.2$.

\begin{table}[h]
\caption{Per-carrier European estimates (52 complete seven-day segments,
$2019$). $N$ is the number of airports; PC1 is the common-mode variance
fraction; $\widehat\xi$ is the pre-whitened spatial range for the out-/
in-degree channels (``bd.'' marks a boundary, i.e.\ unidentified,
estimate). Hub and low-cost point-to-point (LCC) carriers are
distinguished.}
\label{tab:europe}
\centering
\scriptsize
\setlength{\tabcolsep}{3.5pt}
\begin{tabular}{@{}llccc@{}}
\toprule
Carrier & Type & $N$ & PC1 & $\widehat\xi$ out/in (km)\\
\midrule
Lufthansa       & hub & 30 & $39\%$ & $108$ / $74$\\
British Airways & hub & 30 & $46\%$ & $139$ / bd.\\
Air France      & hub & 25 & $46\%$ & $142$ / bd.\\
KLM             & hub & 30 & $53\%$ & bd. / bd.\\
Ryanair         & LCC & 30 & $59\%$ & bd. / bd.\\
easyJet         & LCC & 29 & $74\%$ & $99$ / $106$\\
Vueling         & LCC & 12 & $61\%$ & $21$ / $65$\\
Wizz Air        & LCC & 17 & $73\%$ & $88$ / $99$\\
\bottomrule
\end{tabular}
\end{table}

\begin{figure}[t]\centering
\includegraphicssafe[width=\columnwidth]{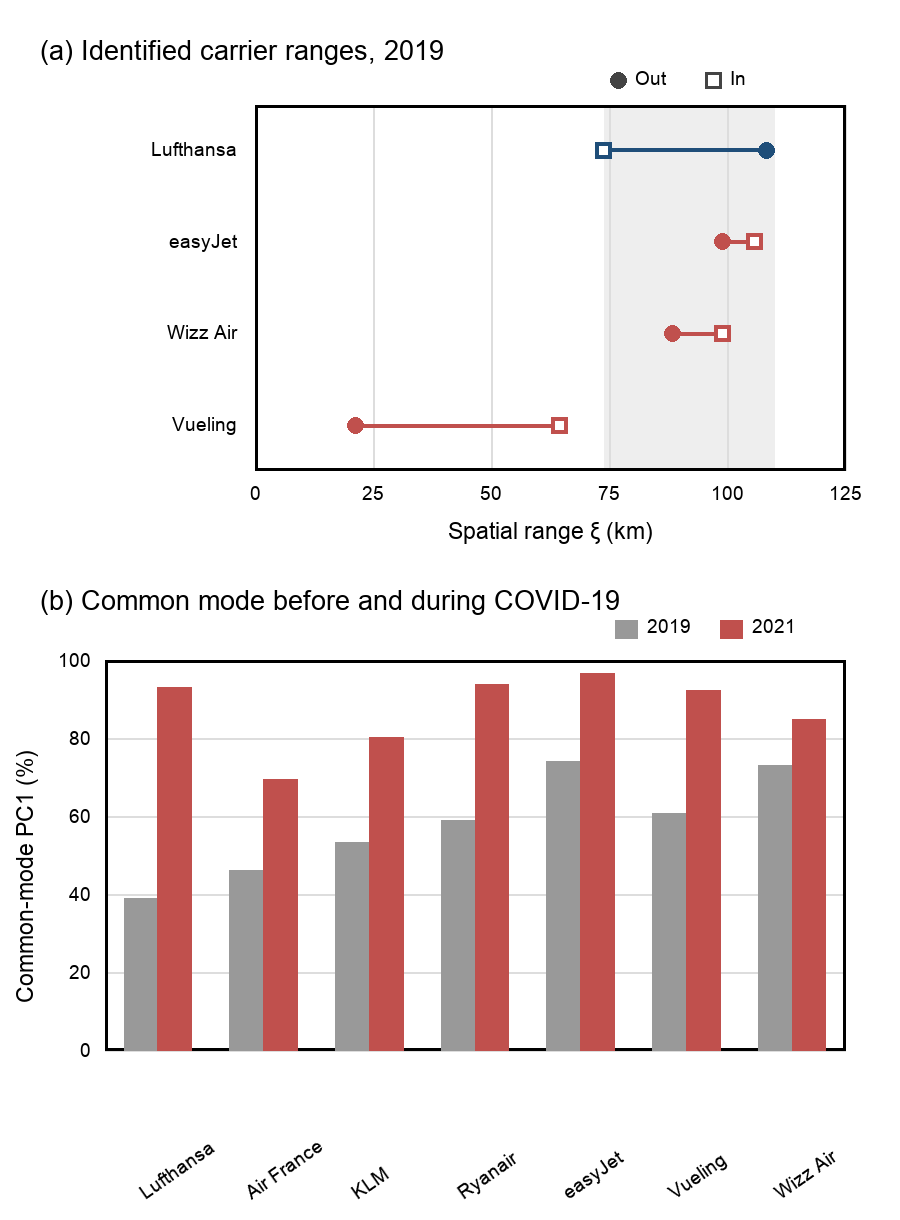}
\caption{Per-carrier European analysis. (a) Pre-whitened spatial range for the
four carriers with both channels identified (filled: out-degree, open:
in-degree; blue: hub, red: low-cost), with the shaded band marking the U.S.
ED--OD range ($74$--$110$~km). (b) Common-mode variance fraction (leading
principal component) in $2019$ versus the pandemic year $2021$. The same
airport-selection rule is applied separately to each year, so panel (b) is a
qualitative stress comparison.}
\label{fig:europe}
\end{figure}

Applying the same selection rule separately to the pandemic year $2021$
provides a temporal stress test of the common-mode mechanism
(Fig.~\ref{fig:europe}b). Network-wide volume
swings are far larger (the segment-volume coefficient of variation rises
from $0.07$--$0.26$ to $0.27$--$0.67$), and the common-mode fraction rises
accordingly, reaching $69$--$97\%$ with raw $\theta_F$ as low as $1.14$
as whole fleets were grounded and restarted together. For Ryanair the first
principal component reaches $94\%$ and correlates $1.00$ with volume. British
Airways has no airport set satisfying the regular-service criterion in $2021$,
so is not reported.

The carrier-level analysis reproduces the volume/common-mode and short-range
spatial decomposition in a separate geographic dataset. The identified ranges
are of the same order as the U.S. estimates, although the carrier-level point
estimates remain exploratory.

\section{Concluding Remarks}
\label{sec:concl}

We proposed a directed preferential attachment model with spatially
correlated, time-varying node weights. Under sublinear attachment,
out-degree proportions converge to powered and normalized out-weights,
whereas self-loop exclusion yields a coupled fixed-point limit for the
in-degree proportions. At $\alpha=1$, unique maximal weights produce
winner-take-all behavior. The proposed inference framework combines a
strictly concave in-weight inverse for terminal counts, a globally convergent MM estimator,
profile likelihood for $\alpha$, AR(1) pre-whitening, and spatial covariance
estimation. Simulations confirm that
correlated weights produce dependence that weakens with distance and also
show that $\alpha$ can be weakly identified at moderate horizons.

Empirically, degree co-movement in the U.S. airline network decomposes
into distance-independent common-mode dependence carried by observed network-wide volume and a
short-range spatial component, with pre-whitened ranges of about
$74$~km (ED) and $110$~km (OD). A semiparametric reconstruction reproduces
the common mode, resolving the gap between the latent covariance range and the
raw madogram scale. The fitted residual covariance yields an exploratory
co-exceedance scale of roughly $150$~km. A per-carrier analysis of European
air traffic (Section~\ref{sec:europe}) reproduces the same decomposition
and yields comparable interior spatial ranges in three carrier networks with
at least fifteen airports.

Since airline flows mediate long-range disease importation, the limiting
edge frequencies could serve as a mobility matrix in a meta-population
epidemic model~\cite{colizza2006,brockmann2013}. The present analysis,
however, models flight activity rather than disease transmission, so
airport co-exceedance does not directly imply synchronized outbreaks.
Embedding the fitted network in a compartmental transmission model is a
natural next step. Other extensions include nonstationary or anisotropic
covariance functions, correlated out- and in-weight fields, and scalable
inference for growing networks.

\section*{Acknowledgment}
The authors thank Yanchao Liu for conducting the preliminary exploratory
study of spatial dependence in airline networks that motivated this work.

\clearpage
\setcounter{section}{0}
\setcounter{equation}{0}
\setcounter{table}{0}
\setcounter{figure}{0}
\setcounter{theorem}{0}
\setcounter{lemma}{0}
\setcounter{proposition}{0}
\setcounter{corollary}{0}
\setcounter{definition}{0}
\setcounter{algorithm}{0}
\renewcommand{\thesection}{S\arabic{section}}
\renewcommand{\theequation}{S\arabic{equation}}
\renewcommand{\thetable}{S\arabic{table}}
\renewcommand{\thefigure}{S\arabic{figure}}
\renewcommand{\thetheorem}{S\arabic{theorem}}
\renewcommand{\thelemma}{S\arabic{lemma}}
\renewcommand{\theproposition}{S\arabic{proposition}}
\renewcommand{\thecorollary}{S\arabic{corollary}}
\renewcommand{\thedefinition}{S\arabic{definition}}
\renewcommand{\thealgorithm}{S\arabic{algorithm}}

\begin{center}
{\Large\bfseries Supplementary Material}\\[2pt]
{\itshape Equation, Proposition, Theorem, and Algorithm numbers
without the ``S'' prefix refer to the main text above.}
\end{center}

\section{Deferred Proofs}
\label{sec:proofs}

We restate, for self-containedness, the results proved here. Theorem,
Proposition, Equation, and Assumption numbers without the ``S'' prefix
refer to the main paper.

\subsection{Proof of Theorem 1 (Out- and In-Degree-Proportion Limits)}
We begin with the out-degree process. Conditional on the
history $\mathcal F_T$, the probability that the next source node is $i$ is
\[
q_i(Y(T))=\frac{\wout_i\{\Dout_i(T)+1\}^{\alpha}}
{\sum_{k=1}^{N}\wout_k\{\Dout_k(T)+1\}^{\alpha}}.
\]
Since $Y_i^{\mathrm{out}}(T)=(\Dout_i(T)+1)/(T+N)$, the common factor
$(T+N)^\alpha$ cancels, and therefore
\[
q_i(y)=\frac{\wout_i y_i^\alpha}{\sum_{k=1}^{N}\wout_k y_k^\alpha},
\qquad y\in\Delta_N,
\]
where $\Delta_N=\{y\in\mathbb R_+^N:\sum_{i=1}^{N}y_i=1\}$. The normalized
process satisfies the stochastic approximation recursion
\[
Y^{\mathrm{out}}(T+1)=Y^{\mathrm{out}}(T)+\frac{1}{T+N+1}
\{e_{I_{T+1}^{(1)}}-Y^{\mathrm{out}}(T)\},
\]
where $e_i$ is the $i$th unit vector. Taking conditional expectation gives
the mean-field drift
\[
\mathbb E[Y^{\mathrm{out}}(T+1)-Y^{\mathrm{out}}(T)\mid\mathcal F_T]
=\frac{q(Y^{\mathrm{out}}(T))-Y^{\mathrm{out}}(T)}{T+N+1}.
\]
Thus the limiting ODE is $\dot y=q(y)-y$.

Let $r=1/(1-\alpha)$ and
\[
p_i=\frac{(\wout_i)^r}{\sum_{k=1}^{N}(\wout_k)^r}.
\]
Then $\wout_i=Cp_i^{1-\alpha}$ for some $C>0$, and substituting into
$q_i(p)$ gives $q_i(p)=p_i$. Conversely, if $y=q(y)$ and $y_i>0$, then
$y_i^{1-\alpha}\propto\wout_i$, so normalization gives $y=p$. Hence the
interior equilibrium is unique.

It remains to show global attraction. Let
$V(y)=\sum_{i=1}^{N}p_i\log(p_i/y_i)$, put
$\beta=1-\alpha\in(0,1)$, and set $R_i=p_i/y_i$. Using
$\wout_i=Cp_i^\beta$,
\[
q_i(y)=\frac{p_i^\beta y_i^{1-\beta}}
{\sum_{k=1}^{N}p_k^\beta y_k^{1-\beta}}
=\frac{y_iR_i^\beta}{\sum_{k=1}^{N}y_kR_k^\beta}.
\]
Along the ODE,
\[
\frac{d}{dt}V(y(t))=1-
\frac{\sum_{i=1}^{N}y_iR_i^{\beta+1}}
{\sum_{i=1}^{N}y_iR_i^\beta}.
\]
Since $\sum_i y_iR_i=1$ and $x\mapsto x^\beta$ is increasing,
Chebyshev's sum inequality gives
$\sum_i y_iR_i^{\beta+1}\ge(\sum_i y_iR_i)(\sum_i y_iR_i^\beta)$.
Therefore $dV/dt\le0$, with equality only when all $R_i$ are equal,
which, on the simplex, implies $y=p$. Thus $V$ is a strict Lyapunov
function and $p$ is globally asymptotically stable.

To verify that the boundary equilibria do not attract this recursion,
note that, on the simplex,
\[
q_i(y)\ge
\frac{\wout_i}{(\max_k\wout_k)N^{1-\alpha}}y_i^\alpha.
\]
Since $\alpha<1$, $q_i(y)-y_i>0$ uniformly whenever $y_i$ is sufficiently
small. Thus every boundary face is repelling from the interior. Applying
the usual stopping-time truncation on
$\Delta_{N,\varepsilon}=\{y:\min_i y_i\ge\varepsilon\}$ and then letting
$\varepsilon\downarrow0$ rules out a boundary limit; on every such
truncated simplex the drift is Lipschitz. The stochastic approximation
recursion has step sizes whose sum diverges and whose squared sum is
finite, and its martingale-difference noise is bounded. The ODE method,
together with the strict Lyapunov function above and boundary avoidance,
then implies
$Y^{\mathrm{out}}(T)\to p^{\mathrm{out}}$ a.s. Since
\[
\Dout_i(T)/T=((T+N)/T)Y_i^{\mathrm{out}}(T)-1/T,
\]
we obtain $\Dout_i(T)/T\to p_i^{\mathrm{out}}$ a.s. Indicator convergence
for thresholds $u\neq p_i^{\mathrm{out}}$ and finite averaging over
$i=1,\ldots,N$ yield the convergence of $\overline F_T^{\mathrm{out}}(u)$.
Finally, if $\log\wout$ is Gaussian with covariance $\Sigma$, then
$\log\{(\wout_i)^{1/(1-\alpha)}\}=(1-\alpha)^{-1}\log\wout_i$, so the
powered latent field remains lognormal with log-covariance
$(1-\alpha)^{-2}\Sigma$.

We now prove the in-degree statement. Write
$X(T)=Y^{\mathrm{out}}(T)$ and $Y(T)=Y^{\mathrm{in}}(T)$, and define
the source-selection probability
\[
s_i(x)=\frac{\wout_i x_i^\alpha}
{\sum_{a=1}^{N}\wout_a x_a^\alpha}.
\]
For $y$ in the interior of the simplex, put
\[
B_i(y)=\sum_{k\ne i}\win_k y_k^\alpha
\]
and let the marginal probability that the next target is $j$ be
\begin{equation}
Q_j(x,y)=\win_jy_j^\alpha
\sum_{i\ne j}\frac{s_i(x)}{B_i(y)}.
\label{eq:target-drift}
\end{equation}
Indeed, this expression averages the target probability conditional on
source $i$ over the source distribution $s(x)$. The in-degree proportion
process obeys
\[
Y(T+1)=Y(T)+\frac{1}{T+N+1}
\{e_{I_{T+1}^{(2)}}-Y(T)\}.
\]
Its conditional mean drift is therefore
$Q\{X(T),Y(T)\}-Y(T)$. Since the out-degree result gives
$X(T)\to p^{\mathrm{out}}$ and
$s(p^{\mathrm{out}})=p^{\mathrm{out}}$, the limiting target ODE is
\begin{equation}
\dot y=Q(p^{\mathrm{out}},y)-y.
\label{eq:target-ode}
\end{equation}

Consider the potential
\begin{equation}
\Phi(y)=\sum_{i=1}^{N}p_i^{\mathrm{out}}\log B_i(y).
\label{eq:target-potential}
\end{equation}
Because $0<\alpha<1$, each coordinate map $y_k\mapsto y_k^\alpha$ is
strictly concave. Each term in \eqref{eq:target-potential} is concave,
and for any two distinct simplex points at least one term is strictly
concave: two coordinates must differ and, since $N\ge3$, some excluded
index differs from both. As every $p_i^{\mathrm{out}}>0$, $\Phi$ is
strictly concave on $\Delta_N$. Its maximizer is interior. At a vertex
one $B_i$ vanishes, while at any other boundary point the one-sided
derivative in a missing coordinate is infinite because
$\alpha-1<0$.

For an interior point,
\[
\frac{\partial\Phi(y)}{\partial y_j}
=\alpha\win_jy_j^{\alpha-1}
\sum_{i\ne j}\frac{p_i^{\mathrm{out}}}{B_i(y)}
=\alpha\frac{Q_j(p^{\mathrm{out}},y)}{y_j}.
\]
At the constrained maximizer this derivative equals a common Lagrange
multiplier $\lambda$. Multiplying by $y_j$, summing over $j$, and using
$\sum_jQ_j(p^{\mathrm{out}},y)=1$ gives $\lambda=\alpha$. Hence the
first-order conditions are exactly
$Q_j(p^{\mathrm{out}},y)=y_j$ for all $j$, which is the fixed-point
equation in the main paper. Strict concavity gives existence and
uniqueness of the interior fixed point $p^{\mathrm{in}}$.

The same potential proves global attraction. Along
\eqref{eq:target-ode},
\begin{align*}
\frac{d}{dt}\Phi(y(t))
&=\alpha\sum_{j=1}^{N}\frac{Q_j(p^{\mathrm{out}},y)}{y_j}
\{Q_j(p^{\mathrm{out}},y)-y_j\}\\
&=\alpha\left\{\sum_{j=1}^{N}
\frac{Q_j(p^{\mathrm{out}},y)^2}{y_j}-1\right\}\ge0,
\end{align*}
where the final inequality is Cauchy--Schwarz. Equality holds only when
$Q_j/y_j$ is constant over $j$, and the simplex constraints then imply
$Q=y$. Thus $\Phi$ is a strict Lyapunov function for the unique target
equilibrium.

Boundary faces are repelling. Since
$B_i(y)\le(\max_k\win_k)N^{1-\alpha}$,
\[
Q_j(p^{\mathrm{out}},y)\ge
\frac{\win_j(1-p_j^{\mathrm{out}})}
{(\max_k\win_k)N^{1-\alpha}}y_j^\alpha,
\]
so $Q_j-y_j>0$ whenever $y_j$ is sufficiently small. The same bound,
with a slightly smaller positive constant, holds once $X(T)$ is close to
$p^{\mathrm{out}}$. On compact subsets of the interior the drift is
Lipschitz, the martingale-difference noise is bounded, and the additional
drift perturbation caused by $X(T)-p^{\mathrm{out}}$ vanishes. The ODE
method for asymptotically autonomous stochastic approximation~\cite{benaim1999S}, together with boundary avoidance and the globally
attracting equilibrium above, yields
$Y^{\mathrm{in}}(T)\to p^{\mathrm{in}}$ a.s. Finally,
$\Din_j(T)/T=((T+N)/T)Y_j^{\mathrm{in}}(T)-1/T$, proving the claimed
in-degree-proportion limit for $0<\alpha<1$.

We finish with $\alpha=1$. For the source process, use the standard
continuous-time embedding. Let $Z_i(u)$, $i=1,\ldots,N$, be independent
Yule processes, each starting from one individual, in which every
individual of type $i$ gives birth at rate $\wout_i$. Conditional on the
current vector $Z(u)=z$, the next birth has type $i$ with probability
\[
\frac{\wout_i z_i}{\sum_k\wout_k z_k}.
\]
Consequently the vector observed at successive birth times has exactly
the law of $(\Dout_i(T)+1)_{i=1}^N$ under linear reinforcement.

For each $i$, the nonnegative Yule martingale satisfies
\[
e^{-\wout_i u}Z_i(u)\longrightarrow W_i\qquad\text{a.s.},
\]
where the $W_i$ are independent $\operatorname{Exp}(1)$ variables and
are therefore strictly positive almost surely. If $i_*$ is the unique
maximizer of the out-weight, then for every $i\ne i_*$,
\[
\frac{Z_i(u)}{Z_{i_*}(u)}
=e^{-(\wout_{i_*}-\wout_i)u}
\frac{e^{-\wout_i u}Z_i(u)}
     {e^{-\wout_{i_*}u}Z_{i_*}(u)}\longrightarrow0
\quad\text{a.s.}
\]
Evaluating at the successive birth times, which diverge almost surely,
gives $Y^{\mathrm{out}}(T)\to e_{i_*}$. More generally, if
$M=\argmax_i\wout_i$, the same calculation gives
\[
Y_i^{\mathrm{out}}(T)\longrightarrow
\mathbf1\{i\in M\}\frac{W_i}{\sum_{k\in M}W_k}.
\]
Normalized independent exponential variables have the
$\operatorname{Dirichlet}(1,\ldots,1)$ law, proving the tie statement.

Now assume $i_*$ is unique and let
$j_*=\argmax_{j\ne i_*}\win_j$ be unique. Source convergence implies
\[
\frac1T\sum_{i\ne i_*}\Dout_i(T)\longrightarrow0.
\]
The embedding gives a rate as well. If
$\rho=\max_{i\ne i_*}\wout_i/\wout_{i_*}<1$, then, for every sufficiently
small $\varepsilon>0$, the fraction of non-$i_*$ source events is
$O(T^{\rho-1+\varepsilon})$ almost surely. Consequently its contribution
to the normalized target recursion is summable because
\[
\sum_{T=1}^{\infty}T^{-1}T^{\rho-1+\varepsilon}<\infty
\qquad(\rho+\varepsilon<1).
\]
A target equal to $i_*$ can occur only when the source differs from
$i_*$; hence $\Din_{i_*}(T)/T\to0$. On the remaining target set
$H=V\setminus\{i_*\}$, all but $o(T)$ edge events have source $i_*$.
The normalized target process on $H$ is therefore a stochastic
approximation with a summable drift perturbation and limiting choice map
\[
q_j^H(y)=\frac{\win_jy_j}{\sum_{k\in H}\win_ky_k},
\qquad j\in H.
\]
Its limiting ODE is $\dot y=q^H(y)-y$. For two positive coordinates
$j,k\in H$, every interior ODE trajectory satisfies
\[
\frac{d}{du}\log\frac{y_j(u)}{y_k(u)}
=\frac{\win_j-\win_k}{\sum_{a\in H}\win_a y_a(u)}.
\]
The denominator is at most $\max_{a\in H}\win_a$. Therefore, for every
$j\ne j_*$, the ratio $y_{j_*}(u)/y_j(u)$ grows at least exponentially,
and every interior trajectory converges to $e_{j_*}$. The actual urn has
strictly positive initial counts in every coordinate. The ODE method for
stochastic approximation with summable perturbations~\cite{benaim1999S}, together with nonconvergence to the linearly unstable
boundary equilibria~\cite{pemantle1990S}, therefore gives convergence to
the unique attracting vertex: $Y^{\mathrm{in}}(T)\to e_{j_*}$ almost
surely. This proves the linear-case assertion. \hfill$\blacksquare$

\subsection{Proof of Theorem 2 (Edge-Measure Limit)}
Let $\mathcal F_t$ denote the history up to time $t$. For $i\neq j$,
define
\[
\pi_{ij}(t)=\mathbb P\{I_t^{(1)}=i,I_t^{(2)}=j\mid\mathcal F_{t-1}\}.
\]
By the model definition,
\begin{align*}
\pi_{ij}(t)
&=\frac{\wout_i\{\Dout_i(t-1)+1\}^{\alpha}}
{\sum_{a=1}^{N}\wout_a\{\Dout_a(t-1)+1\}^{\alpha}}\\
&\quad\times
\frac{\win_j\{\Din_j(t-1)+1\}^{\alpha}}
{\sum_{b\neq i}\win_b\{\Din_b(t-1)+1\}^{\alpha}}.
\end{align*}
Because $(D_i^A(t-1)+1)/t\to p_i^A$ for
$A\in\{\mathrm{out},\mathrm{in}\}$ and all weights are positive,
continuity gives $\pi_{ij}(t)\to K_{ij}$ a.s. Now write
\[
A_{ij}(T)=\sum_{t=1}^{T}\mathbf{1}\{I_t^{(1)}=i,I_t^{(2)}=j\}.
\]
The difference
$M_{ij}(T)=A_{ij}(T)-\sum_{t=1}^{T}\pi_{ij}(t)$ is a martingale with
bounded increments. Hence, by the martingale strong law,
$M_{ij}(T)/T\to0$ a.s. Since Cesaro convergence gives
$T^{-1}\sum_{t=1}^{T}\pi_{ij}(t)\to K_{ij}$, it follows that
$A_{ij}(T)/T\to K_{ij}$ a.s. for each ordered pair. There are finitely
many ordered pairs, so the convergence holds simultaneously for all
$i\neq j$, and weak convergence of the finite atomic measures follows.
For $\alpha=1$ under unique maximizers, every edge other than
$i_*\to j_*$ either has source different from $i_*$ or target different
from $j_*$. Therefore
\begin{align*}
0\le1-\frac{A_{i_*j_*}(T)}{T}
&\le\frac{T-\Dout_{i_*}(T)}{T}
+\frac{T-\Din_{j_*}(T)}{T}\\
&\longrightarrow0\qquad\text{a.s.}
\end{align*}
All other edge frequencies vanish, proving convergence to the stated
point mass.
\hfill$\blacksquare$

\subsection{Proof of Theorem 3 (Spatial-Dependence Transfer)}
By Theorem~1 of the main paper, for every node $i$,
\[
X_i^{(T)}=\frac{D_i^{\mathrm{out}}(T)}{T}\to
\frac{(W_i^{\mathrm{out}})^r}{\sum_{k=1}^{N}(W_k^{\mathrm{out}})^r}
=P_i\qquad \text{a.s.}
\]
Therefore $X^{(T)}\to P$ a.s., and in particular
$(X_i^{(T)},X_j^{(T)})\to(P_i,P_j)$ a.s.
Let $F_{i,T}^{\mathrm{deg}}$ be the marginal distribution of
$X_i^{(T)}$, and let $F_i^P$ be that of $P_i$. Since
$X_i^{(T)}\Rightarrow P_i$ and $F_i^P$ is continuous, Polya's theorem
gives uniform convergence of the marginal distribution functions:
$\sup_x|F_{i,T}^{\mathrm{deg}}(x)-F_i^P(x)|\to0$. Consequently,
\begin{align*}
&|F_{i,T}^{\mathrm{deg}}(X_i^{(T)})-F_i^P(P_i)|\\
&\quad\le
\sup_x|F_{i,T}^{\mathrm{deg}}(x)-F_i^P(x)|
+|F_i^P(X_i^{(T)})-F_i^P(P_i)|\to0
\end{align*}
almost surely, where the last term vanishes by continuity of $F_i^P$.
The same holds for node $j$. The transformed absolute difference
therefore converges almost surely and is bounded by one, so dominated
convergence gives
\[
\frac12\mathbb E|F_{i,T}^{\mathrm{deg}}(X_i^{(T)})-
F_{j,T}^{\mathrm{deg}}(X_j^{(T)})|
\to
\frac12\mathbb E|F_i^P(P_i)-F_j^P(P_j)|.
\]
The madogram-coefficient convergence follows from the continuous
transformation $\theta_F=(1+2v_F)/(1-2v_F)$ when $v_F<1/2$.
\hfill$\blacksquare$

\subsection{Proof of Proposition 1 (Gaussian-Copula F-Madogram)}
Because $g$ is strictly increasing, the probability integral transforms
$U_i=F\{Z(\bm{x}_i)\}$ have a Gaussian copula with correlation
$\rho=\rho(h)$. Using
$\max(U_1,U_2)=\tfrac12(U_1+U_2)+\tfrac12|U_1-U_2|$,
\[
v_F=\tfrac12\E|U_1-U_2|
=\E\{\max(U_1,U_2)\}-\tfrac12.
\]
If $C$ is the Gaussian copula distribution function, then
\[
\E\{\max(U_1,U_2)\}=1-\int_0^1 C(t,t)\,dt.
\]
Substituting $t=\Phi(z)$ and introducing an independent
$Z_3\sim\Nrm(0,1)$ gives
\[
\int_0^1 C(t,t)\,dt
=\Prob\{Z_1\le Z_3,\;Z_2\le Z_3\}.
\]
The vector $(Z_1-Z_3,Z_2-Z_3)$ is centered bivariate normal with equal
variances and correlation $(1+\rho)/2$. Sheppard's orthant formula then
gives
\[
\Prob\{Z_1-Z_3\le0,\;Z_2-Z_3\le0\}
=\frac14+\frac{1}{2\pi}\arcsin\!\left(\frac{1+\rho}{2}\right),
\]
which proves the Gaussian-copula F-madogram proposition in the main
paper; its transformed dependence-coefficient form follows immediately
from the stated F-madogram mapping.
\hfill$\blacksquare$

\subsection{Proof of Proposition 3 (Exact In-Degree Inversion)}

For $v\in\R^N$, define the conditional target probabilities
\[
\pi_{ij}(v)=
\frac{\mathbf 1\{j\ne i\}e^{v_j}}
{\sum_{k\ne i}e^{v_k}},
\qquad
q_j(v)=\sum_{i=1}^{N}a_i\pi_{ij}(v).
\]
The gradient of the criterion in Proposition~3 of the main paper is
\[
\nabla\mathcal J_{a,p}(v)=p-q(v),
\]
and its negative Hessian is
\[
-\nabla^2\mathcal J_{a,p}(v)
=\sum_{i=1}^{N}a_i
\{\operatorname{diag}(\pi_i)-\pi_i\pi_i^\top\}.
\]
Thus, for any $h\in\R^N$,
\[
-h^\top\nabla^2\mathcal J_{a,p}(v)h
=\sum_{i=1}^{N}a_i\operatorname{Var}_{\pi_i(v)}(h_j)\ge0.
\]
Equality requires $h$ to be constant on every set
$\{1,\ldots,N\}\setminus\{i\}$. Since $N\ge3$, these conditions imply
$h=c\onev$. Hence the objective is strictly concave on
$\onev^\perp$.

It remains to establish existence. For nonzero $d\in\onev^\perp$,
\[
\lim_{t\to\infty}\frac{\mathcal J_{a,p}(td)}{t}
=p^\top d-\sum_{i=1}^{N}a_i\max_{k\ne i}d_k.
\]
If the maximum coordinate $d_j$ is unique and $d_{(2)}$ is the
second-largest coordinate, then
\[
\sum_i a_i\max_{k\ne i}d_k
=(1-a_j)d_j+a_jd_{(2)},
\]
whereas
\[
p^\top d\le p_jd_j+(1-p_j)d_{(2)}.
\]
Their difference is at least
$(1-a_j-p_j)(d_j-d_{(2)})>0$. If the maximum is attained at least twice,
the second term equals $\max_jd_j$, while positivity of $p$ and
nonconstancy of $d$ give $p^\top d<\max_jd_j$. Therefore the displayed
limit is negative in every nonzero direction in $\onev^\perp$.
The limiting slope is continuous on the unit sphere in $\onev^\perp$;
compactness therefore gives a uniform negative upper bound. Hence
$\mathcal J_{a,p}(v)\to-\infty$ as $\|v\|\to\infty$ within that subspace,
so it attains a unique maximum $\widehat v$.

Because $\mathcal J_{a,p}(v+c\onev)=\mathcal J_{a,p}(v)$ and the entries
of its gradient sum to zero, the constrained first-order condition at
$\widehat v$ is $p=q(\widehat v)$. Set
$z_j=e^{\widehat v_j}$ and $w_j=z_jp_j^{-\alpha}$. Then
\begin{align*}
q_j(\widehat v)
&=z_j\sum_{i\ne j}\frac{a_i}{\sum_{k\ne i}z_k}\\
&=w_jp_j^\alpha
\sum_{i\ne j}\frac{a_i}
{\sum_{k\ne i}w_kp_k^\alpha},
\end{align*}
which is exactly the target fixed-point equation in the main paper.
Conversely, any positive weights producing $p$ make
$v_j=\log(w_jp_j^\alpha)$ a stationary point after centering. Strict
concavity makes this point unique modulo a common shift, proving the
proportionality claim. Normalizing the weights to sum to $N$ gives the
formula stated in Proposition~3. \hfill$\blacksquare$

\subsection{Proof of Theorem 4 (Global Optimality in Log-Weights)}
We use the following facts.

\begin{lemma}[Log-sum-exp convexity~\cite{boyd2004S}]
\label{lem:lse}
$f(\bm{x})=\log\sum_{i=1}^{n}e^{x_i}$ is convex on $\R^n$ with Hessian
$\nabla^2 f(\bm{x})=\frac{1}{(\onev^\top z)^2}\bigl((\onev^\top z)
\,\mathrm{diag}(z)-zz^\top\bigr)$, $z=(e^{x_1},\dots,e^{x_n})$.
\end{lemma}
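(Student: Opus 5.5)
The plan is to prove convexity of $f(\bm{x})=\log\sum_i e^{x_i}$ by computing its Hessian directly and showing it is positive semidefinite. This gives both the displayed Hessian formula and the convexity claim at once.

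First I will compute derivatives. Write $z_i=e^{x_i}$ and $S=\onev^\top z$. Then $\partial f/\partial x_i=z_i/S$, which is the softmax vector $\pi=z/S$. Differentiating once more with the quotient rule gives
\[
\frac{\partial^2 f}{\partial x_i\partial x_j}
=\frac{\delta_{ij}z_i}{S}-\frac{z_iz_j}{S^2}
=\frac{1}{S^2}\bigl(S\,\delta_{ij}z_i-z_iz_j\bigr).
\]
In matrix form this is $\nabla^2 f(\bm{x})=S^{-2}\{S\,\mathrm{diag}(z)-zz^\top\}$, which is the stated formula. Equivalently, $\nabla^2 f=\mathrm{diag}(\pi)-\pi\pi^\top$.

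Second, I will verify positive semidefiniteness. For any $h\in\R^n$,
\[
h^\top\nabla^2 f\,h=\sum_i\pi_ih_i^2-\Bigl(\sum_i\pi_ih_i\Bigr)^2=\operatorname{Var}_{\pi}(h)\ge0.
\]
This is Jensen's inequality (equivalently Cauchy--Schwarz, $(\sum_i z_ih_i)^2\le(\sum_i z_i)(\sum_i z_ih_i^2)$), since $\pi$ is a probability vector with strictly positive entries. A twice-differentiable function on the convex set $\R^n$ with PSD Hessian everywhere is convex, which finishes the argument.

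I will also record the equality case, because the proofs of Theorem~\ref{thm:opt} and Proposition~\ref{prop:in-inversion} will use it. Since all $\pi_i>0$, equality holds exactly when $h$ is constant, i.e.\ $h\in\operatorname{span}(\onev)$. Hence $f$ is strictly convex on $\onev^\perp$.

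No step is a real obstacle; the only care needed is bookkeeping in the quotient-rule computation. For the later application, the generalized form with positive coefficients, $\log\sum_k c_ke^{z_k}$ with $c_k=(D_k(t-1)+1)^\alpha$ or with some $c_k$ masked to zero, follows by the substitution $x_k=z_k+\log c_k$. Convexity is preserved under an affine change of variables, and a masked coordinate simply drops out of the sum.
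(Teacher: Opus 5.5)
Your proof is correct and takes essentially the same route as the paper, which gives no argument of its own and defers to Boyd and Vandenberghe, whose proof is exactly this one: compute the Hessian by the quotient rule, then show $h^\top\nabla^2 f\,h\ge0$ by Cauchy--Schwarz, which is your variance identity. Your equality case $h\in\operatorname{span}(\onev)$ and the affine substitution $x_k=z_k+\log c_k$ for positive coefficients also match what the paper's generalized lemma asserts ``by direct differentiation and the Cauchy--Schwarz equality condition.''
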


\begin{lemma}[Generalized version]
\label{lem:glse}
For $\lambda=(\lambda_1,\dots,\lambda_n)\neq0$, $\lambda_i\ge0$,
$f(\bm{x})=\log\sum_i\lambda_i e^{x_i}$ is convex on $\R^n$. If all
$\lambda_i>0$, the null vectors $v\neq0$ with
$v^\top\nabla^2 f(\bm{x})v=0$ are exactly $v=c\onev$,
$c\in\R\setminus\{0\}$; if some $\lambda_k=0$, the null space is
$\{v:v_i=c\ \forall i\in I;\ v_j\ \text{free}\ \forall j\notin I\}$
with $I=\{i:\lambda_i>0\}$, by direct differentiation and the
Cauchy--Schwarz equality condition.
\end{lemma}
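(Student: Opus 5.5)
The plan is to reduce the generalized function to the ordinary log-sum-exp of Lemma~\ref{lem:lse}, and then read off the null space from an explicit variance representation of the Hessian quadratic form. Let $I=\{i:\lambda_i>0\}$, which is nonempty because $\lambda\neq0$. Terms with $\lambda_k=0$ contribute nothing, so
\[
f(\bm{x})=\log\sum_{i\in I}e^{x_i+\log\lambda_i}.
\]
Thus $f$ is the ordinary log-sum-exp of Lemma~\ref{lem:lse}, applied to the affine image $(x_i+\log\lambda_i)_{i\in I}$, and it does not depend on the coordinates $x_j$ with $j\notin I$. Convexity follows at once, since convexity is preserved under composition with an affine map and under adding dummy variables.

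For the null-space claim I will compute the Hessian directly rather than transport the one in Lemma~\ref{lem:lse}. Put $S=\sum_i\lambda_ie^{x_i}$ and $\pi_i=\lambda_ie^{x_i}/S$. Then $\pi$ is a probability vector supported exactly on $I$, with $\pi_i>0$ if and only if $i\in I$. Differentiating twice gives
\[
\partial_if=\pi_i,\qquad
\nabla^2f(\bm{x})=\operatorname{diag}(\pi)-\pi\pi^\top .
\]
Hence, for any $v\in\R^n$,
\[
v^\top\nabla^2f(\bm{x})\,v=\sum_i\pi_iv_i^2-\Bigl(\sum_i\pi_iv_i\Bigr)^2=\operatorname{Var}_\pi(v).
\]
Here $\operatorname{Var}_\pi(v)$ is the variance of $v_J$ when the index $J$ is drawn with law $\pi$. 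By Cauchy--Schwarz,
\[
\Bigl(\sum_i\sqrt{\pi_i}\cdot\sqrt{\pi_i}\,v_i\Bigr)^2\le\sum_i\pi_i\cdot\sum_i\pi_iv_i^2 ,
\]
and since $\sum_i\pi_i=1$ this shows the quadratic form is nonnegative.

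Equality in Cauchy--Schwarz holds exactly when the vectors $(\sqrt{\pi_i}\,v_i)_i$ and $(\sqrt{\pi_i})_i$ are proportional, that is, when $v_i=c$ for every $i$ with $\pi_i>0$. This is precisely the condition $v_i=c$ for all $i\in I$. Coordinates $v_j$ with $j\notin I$ carry zero $\pi$-mass and are therefore unconstrained. So the null space is $\{v:v_i=c\ \forall i\in I\}$ with the $v_j$, $j\notin I$, free. When $I=\{1,\dots,n\}$ this collapses to the line spanned by $\onev$, and the nonzero null vectors are $c\onev$ with $c\neq0$.

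I expect no real obstacle in this argument. The only point requiring care is the bookkeeping: the support of $\pi$ must equal $I$ exactly, which holds because $e^{x_i}>0$ for every $i$. This fact is what later makes the masked in-degree denominators in \eqref{eq:lli} flat along the coordinate of the excluded source.
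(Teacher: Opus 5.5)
Your proof is correct and takes the route the lemma itself names: you differentiate directly to get $\nabla^2 f=\operatorname{diag}(\pi)-\pi\pi^\top$, so the quadratic form is $\operatorname{Var}_\pi(v)$, and the Cauchy--Schwarz equality condition gives the null set $\{v: v_i=c\ \forall i\in I\}$ because $\pi$ is supported exactly on $I$. The reduction to Lemma~\ref{lem:lse} via an affine map is a harmless extra, since the positive semidefinite Hessian already gives convexity.
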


\begin{lemma}[First-order convexity condition~\cite{boyd2004S}]
\label{lem:foc}
A differentiable $f:\R^n\to\R$ is convex iff $f(y)\ge f(x)+\nabla
f(x)^\top(y-x)$ for all $x,y$.
\end{lemma}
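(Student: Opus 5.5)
The plan is to prove the two implications separately, working along line segments so that everything reduces to elementary one-variable inequalities. The only place differentiability enters is in identifying a one-sided directional difference quotient with $\nabla f(x)^\top(y-x)$.

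For the forward direction (convexity implies the tangent-plane inequality), fix $x,y\in\R^n$ and $t\in(0,1]$. Convexity gives
\[
f\bigl(x+t(y-x)\bigr)\le(1-t)f(x)+tf(y).
\]
Subtracting $f(x)$ and dividing by $t>0$ gives
\[
\frac{f\bigl(x+t(y-x)\bigr)-f(x)}{t}\le f(y)-f(x).
\]
Letting $t\downarrow0$, the left side converges to the directional derivative of $f$ at $x$ in direction $y-x$. Because $f$ is differentiable, this directional derivative equals $\nabla f(x)^\top(y-x)$, and the inequality $f(y)\ge f(x)+\nabla f(x)^\top(y-x)$ follows. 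This limit step is the only point needing care. Differentiability (not just existence of partial derivatives) is exactly what makes the directional derivative linear in the direction. I would state it via the first-order expansion $f(x+h)=f(x)+\nabla f(x)^\top h+o(\|h\|)$ with $h=t(y-x)$.

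For the reverse direction, assume the tangent-plane inequality holds at every base point. Fix $x,y$ and $\theta\in[0,1]$, and set $z=\theta x+(1-\theta)y$. Applying the inequality at base point $z$ twice gives
\[
f(x)\ge f(z)+\nabla f(z)^\top(x-z),\qquad f(y)\ge f(z)+\nabla f(z)^\top(y-z).
\]
Multiply the first by $\theta$, the second by $1-\theta$, and add. The gradient terms combine to
\[
\nabla f(z)^\top\bigl(\theta x+(1-\theta)y-z\bigr)=0,
\]
which leaves $\theta f(x)+(1-\theta)f(y)\ge f(z)$. This is the definition of convexity.

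Neither step presents a real obstacle. This is the standard result cited from~\cite{boyd2004S}. In the proof of Theorem~\ref{thm:opt}, it will be applied to the concave log-likelihood in $z$, with the sign reversed, to conclude that every stationary point is a global optimizer.
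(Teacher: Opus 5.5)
Your proof is correct and follows the standard argument for this result; the paper gives no proof of its own and simply cites Boyd and Vandenberghe. That source uses the same two steps, proving the case $n=1$ and lifting to $\R^n$ by restriction to lines, whereas you work directly in $\R^n$, which is equally valid since the domain is all of $\R^n$.
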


Substituting $z^{(1)}=\log\wout$ into the out-weight log-likelihood of
the main paper gives
\begin{align*}
\ell(z^{(1)})
&=\sum_i\Dout_i(T)z^{(1)}_i\\
&\quad-\sum_{t=1}^{T}\log\!\sum_{k\in[N]}
(\Dout_k(t-1)+1)^{\alpha}e^{z^{(1)}_k}+A.
\end{align*}
Since $(\Dout_k(t-1)+1)^{\alpha}>0$, Lemma~\ref{lem:glse}
(case $\lambda>0$) gives each $\nabla^2 f^{(1)}_t\succeq0$, hence
$\nabla^2\ell(z^{(1)})=-\sum_t\nabla^2 f^{(1)}_t\preceq0$, so $\ell$ is
concave. Under Assumption~1 of the main paper, every terminal count is
positive, so on the normalized simplex the objective tends to $-\infty$
when any weight approaches zero. It therefore attains its maximum in the
interior. By Lemma~\ref{lem:foc}, any $z^{(1)}_*$ with
$\nabla\ell(z^{(1)}_*)=0$ is a global maximizer; the transformation
$\wout\mapsto z^{(1)}$ is one-to-one, so the corresponding $\wout_*$
optimizes $\ell(\wout)$ and solves the likelihood equations. For the
in-degree case the summation index $k\neq I^{(1)}_t$ removes one term,
so the relevant coefficient vanishes; Lemma~\ref{lem:glse} (case
$\lambda\neq0$, $\lambda_i\ge0$) still yields concavity and the same
conclusion. Non-uniqueness: if $(\wout_*,\win_*)$ solves the
equations, so does $(c_1\wout_*,c_2\win_*)$, $c_1,c_2\in\R_+$, by
homogeneity. A second-order Taylor expansion with Lagrange remainder,
together with the null-space characterization in
Lemma~\ref{lem:glse} and Assumption~1 of the main paper, shows all optima
differ by an additive constant in log-scale, i.e.\ by a positive
multiplicative constant in the original scale. Constraining
$\|\bm{w}\|_1$ then yields uniqueness. \hfill$\blacksquare$

\subsection{Proof of Proposition 4 (MM Minorizer)}
Throughout this proof write, for the out-degree case,
\begin{equation}
a_{i,t}:=(\Dout_i(t-1)+1)^{\alpha}>0,
\label{eq:ashort}
\end{equation}
(the in-degree case is identical with $\Dout$ replaced by $\Din$ and
the index restricted to $i\neq I^{(1)}_t$). Proposition~4 of the
main paper states that, at iteration $s$, the surrogate
\begin{align}
g_1\bigl(\wout\mid\wout(s)\bigr)
&=\sum_{i\in[N]}\Dout_i(T)\log\wout_i\nonumber\\
&\quad-\sum_{t=1}^{T}\log\Bigl(\sum_{i\in[N]}\wout_i(s)\,a_{i,t}\Bigr)
\nonumber\\
&\quad-\sum_{t=1}^{T}\frac{\sum_{i}\wout_i\,a_{i,t}}
{\sum_{i}\wout_i(s)\,a_{i,t}}+T
\label{eq:g1}
\end{align}
minorizes $\ell(\wout)$ and is tangent at $\wout(s)$, with an
analogous statement for $g_2(\win\mid\win(s))$ over the restricted
index set $i\neq I^{(1)}_t$.

\emph{Tangency.} Substituting $\wout=\wout(s)$ into \eqref{eq:g1}, the
ratio in the third term equals $1$ for every $t$, so that term equals
$-\sum_{t=1}^{T}1=-T$, which cancels the trailing $+T$. Hence
\begin{align*}
g_1\bigl(\wout(s)\mid\wout(s)\bigr)
&=\sum_{i\in[N]}\Dout_i(T)\log\wout_i(s)\\
&\quad-\sum_{t=1}^{T}\log\Bigl(\sum_{i\in[N]}\wout_i(s)\,a_{i,t}\Bigr)\\
&=\ell\bigl(\wout(s)\bigr).
\end{align*}

\emph{Minorization.} Define
\begin{equation}
\Psi_1(t)=\frac{\sum_{i\in[N]}\wout_i\,a_{i,t}}
{\sum_{i\in[N]}\wout_i(s)\,a_{i,t}}>0.
\label{eq:psi1}
\end{equation}
Adding and subtracting
$\sum_{t}\log\sum_{i}\wout_i\,a_{i,t}$ in \eqref{eq:g1} and using
\eqref{eq:psi1} gives
\begin{align}
g_1\bigl(\wout\mid\wout(s)\bigr)
&=\ell(\wout)+\sum_{t=1}^{T}\bigl[\log\Psi_1(t)-\Psi_1(t)+1\bigr].
\label{eq:g1ineq}
\end{align}
The scalar inequality $\log x-x+1\le0$ for all $x>0$ (with equality
iff $x=1$) applied to each $\Psi_1(t)>0$ yields
$g_1(\wout\mid\wout(s))\le\ell(\wout)$ for all $\wout$, i.e.\ $g_1$
minorizes $\ell$. Together with the tangency identity this verifies
the two MM conditions. The argument for $g_2$ is identical after
restricting the inner summation to $i\neq I^{(1)}_t$, which only
removes the source node's term and leaves the inequality
$\log x-x+1\le0$ applicable. \hfill$\blacksquare$

Maximizing \eqref{eq:g1} in $\wout$ (and the analogous $g_2$ in
$\win$) by setting the gradient to zero gives the closed-form
MM updates of the main paper; monotonicity of
the resulting sequence follows from the standard MM sandwich
\begin{align*}
\ell\bigl(\wout(s{+}1)\bigr)
&\ge g_1\bigl(\wout(s{+}1)\mid\wout(s)\bigr)\\
&\ge g_1\bigl(\wout(s)\mid\wout(s)\bigr)=\ell\bigl(\wout(s)\bigr).
\end{align*}

\subsection{Proof of Theorem 5 (MM Convergence)}
The normalization after each closed-form update requires a small
qualification to the usual MM argument. Let $U(v)$ denote the unique
unconstrained maximizer of the surrogate $g(\cdot\mid v)$; this is the
positive closed-form update in the main paper. Let
\[
R(u)=\frac{N u}{\|u\|_1},\qquad M(v)=R\{U(v)\},
\]
so the implemented iteration is $w^{(s+1)}=M(w^{(s)})$. The original
log-likelihood is scale invariant, hence
$\ell\{R(u)\}=\ell(u)$. Minorization and maximization therefore give
\begin{align*}
\ell\{M(v)\}=\ell\{U(v)\}
&\ge g\{U(v)\mid v\}\\
&\ge g(v\mid v)=\ell(v).
\end{align*}
The second inequality is strict unless $U(v)=v$.

It remains to check that normalization does not create a spurious fixed
point. If $M(v)=v$, then $U(v)=cv$ for some $c>0$. Write
$b_i(v)=\sum_t a_{i,t}/\sum_kv_ka_{k,t}$, with the same source masks in
the in-degree case. The surrogate first-order equations at $U(v)=cv$
give $D_i(T)/(cv_i)=b_i(v)$. Multiplying by $v_i$ and summing yields
\[
T/c=\sum_i v_i b_i(v)=T,
\]
because every event contributes one to the masked normalizer. Thus
$c=1$, so $U(v)=v$. By gradient matching, every fixed point of $M$ is a
stationary point of $\ell$.

Under Assumption~1, every terminal count is positive and the normalized
log-likelihood tends to $-\infty$ at the boundary of
$\{w>0:\|w\|_1=N\}$. A non-decreasing MM sequence consequently remains
in a compact upper level set $K$ contained in the positive simplex. The
maps $U$, $R$, and $M$ are continuous on $K$, and the objective values
converge because they are non-decreasing and bounded above. If a
subsequence converges to $w^\star$, continuity gives
$\ell\{M(w^\star)\}=\ell(w^\star)$. Strict improvement away from fixed
points forces $M(w^\star)=w^\star$. Hence every accumulation point is
stationary and, by Theorem~4 of the main paper, equals the unique
normalized optimizer. Compactness then implies convergence of the full
iterate sequence to that optimizer. \hfill$\blacksquare$

\section{Pre-whitening and the Spatial Quasi-Likelihood}
\label{sec:innovation-likelihood}
Let $C_\ell=H^\top\log\widehat{\bm w}_\ell$ be the orthonormal
log-ratio coordinates of the normalized segment weights, with
$H^\top H=I$ and $H^\top\onev=0$. Let
$\bar C=m^{-1}\sum_{\ell=1}^{m}C_\ell$ and
$\widetilde C_\ell=C_\ell-\bar C$. The pooled criterion
\[
\sum_{\ell=2}^{m}\|\widetilde C_\ell-\phi
\widetilde C_{\ell-1}\|^2
\]
gives the pooled AR estimate in the main paper. At the true parameters the
innovations are i.i.d. $\Nrm\{0,\gamma_eR_C(\xi)\}$, where
$R_C(\xi)=H^\top\Sigma_0(\xi)H$.

For the fitted residuals
$E_\ell=\widetilde C_\ell-\widehat\phi\widetilde C_{\ell-1}$, the plug-in Gaussian
log-likelihood, up to constants, is
\[
-\frac{n_e}{2}\log|R_C(\xi)|
-\frac{n_e q}{2}\log\gamma_e
-\frac{1}{2\gamma_e}\sum_{\ell=2}^{m}
E_\ell^\top R_C(\xi)^{-1}E_\ell,
\]
with $n_e=m-1$ and $q=N-1$. Profiling gives
\[
\widehat\gamma_e(\xi)=\frac{1}{n_e q}
\sum_{\ell=2}^{m}E_\ell^\top R_C(\xi)^{-1}E_\ell.
\]
The stationary latent-field variance is
$\widehat\gamma=\widehat\gamma_e/(1-\widehat\phi^2)$, whereas the
spatial range is unchanged. This distinction is important when comparing
innovation-scale estimates with the stationary covariance model in the
main paper. Since $\bar C$, $\phi$, and the weights entering $C_\ell$ are
estimated, the residual vectors are not exactly independent Gaussian
innovations. The displayed likelihood is a conditional quasi-likelihood;
unconditional uncertainty must repeat every estimation stage.

\section{Fenwick-Tree Network Generation}
\label{sec:fenwick}

The diagnostic simulations use the basic categorical generator of
Algorithm~\ref{alg:gen-supp}, which recomputes a
normalized probability vector and performs a linear scan at every
step, costing $O(N)$ per step and $O(TN)$ in total, which is
expensive for large $N$ or long horizons. We accelerate node
selection with a Fenwick tree (binary indexed tree)~\cite{fenwick1994S} via inverse-transform sampling, preserving the
sampling distribution exactly while reducing the per-step cost to
$O(\log N)$.

\begin{algorithm}[h]
\caption{Spatial-network generation (basic categorical sampler)}
\label{alg:gen-supp}
\begin{algorithmic}[1]
\STATE \textbf{Input:} $V=\{1,\dots,N\}$, $T$, $\alpha$, initial $G$,
$\bm{\mu}^{\mathrm{out}},\bm{\mu}^{\mathrm{in}}$,
$(\gamma^{\mathrm{out}},\xi^{\mathrm{out}})$,
$(\gamma^{\mathrm{in}},\xi^{\mathrm{in}})$.
\STATE Sample positions $X$ on $[0,10]^2$;
$\Sigma^{A}\leftarrow\{\gamma^{A}\exp(-\|\bm{x}_i-\bm{x}_j\|/\xi^{A})\}$.
\STATE $w^{A}\sim\mathrm{Lognormal}(\bm{\mu}^{A},\Sigma^{A})$.
\FOR{$t=1$ to $T$}
  \STATE $I^{(1)}_t\!\leftarrow\!\mathrm{Categorical}\!\bigl(
  \wout\!\odot\!(\Dout(t{-}1)+1)^{\alpha}/\|\cdot\|_1\bigr)$
  \STATE set the source mass to zero, then
  $I^{(2)}_t\!\leftarrow\!\mathrm{Categorical}\!\bigl(
  \win\!\odot\!(\Din(t{-}1)+1)^{\alpha}/\|\cdot\|_1\bigr)$
  \STATE $G[I^{(1)}_t,I^{(2)}_t]\leftarrow G[I^{(1)}_t,I^{(2)}_t]+1$
\ENDFOR
\STATE \textbf{return} edge list $\{e_t\}_{t=1}^{T}$.
\end{algorithmic}
\end{algorithm}

\begin{definition}[Prefix sums]
\label{def:prefix}
With node masses $m_i=w_i(D_i+1)^{\alpha}$, let
$S_k=\sum_{i=1}^{k}m_i$ for $k=1,\dots,N$, $S_0:=0$, and
$S_N=\sum_{i=1}^{N}m_i$.
\end{definition}

\begin{lemma}[Interval-length lemma]
\label{lem:interval}
Let $U\sim\mathrm{Unif}(0,S_N)$ and define the random index
$I:=\min\{k:S_k\ge U\}$. Then for every $i\in\{1,\dots,N\}$,
\[
\Prob(I=i)=\Prob(S_{i-1}<U\le S_i)=\frac{m_i}{S_N}.
\]
\end{lemma}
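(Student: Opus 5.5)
The statement to prove is the interval-length lemma (Lemma~\ref{lem:interval}). The plan is to compute $\Prob(I=i)$ directly from the uniform law of $U$ on $(0,S_N)$. This needs only the monotonicity of the prefix sums and a careful treatment of nodes with zero mass.

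First, the masses $m_i=w_i(D_i+1)^{\alpha}$ are nonnegative. They are strictly positive except for the source node, whose mass is set to zero in the target step of Algorithm~\ref{alg:gen-supp}. Hence $0=S_0\le S_1\le\dots\le S_N$, and $S_N>0$ because at least one mass is positive. Since $S_N\ge U$ almost surely, the set $\{k:S_k\ge U\}$ is nonempty and $I$ is well defined.

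Next, I would establish the event identity $\{I=i\}=\{S_{i-1}<U\le S_i\}$ for every $i$. For the forward inclusion, suppose $I=i$. Then $S_i\ge U$ by definition. If $i\ge2$, minimality gives $S_{i-1}<U$. If $i=1$, the bound $S_0=0<U$ holds almost surely, since $\Prob(U=0)=0$. For the reverse inclusion, suppose $S_{i-1}<U\le S_i$. Monotonicity gives $S_k\le S_{i-1}<U$ for all $k<i$, so no smaller index qualifies and therefore $I=i$.

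Finally, because $U$ is uniform on $(0,S_N)$ and $(S_{i-1},S_i]\subset(0,S_N]$, I would compute
\[
\Prob(S_{i-1}<U\le S_i)=\frac{S_i-S_{i-1}}{S_N}=\frac{m_i}{S_N}.
\]
This step is where zero masses matter. A node with $m_i=0$ gives an empty interval and probability $0$, consistent with the formula, so the masked source node is never selected.

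There is no real obstacle. The only points needing care are the almost-sure exclusion of $U=0$ and the degenerate empty intervals, both handled above. One could add that the intervals $(S_{i-1},S_i]$ partition $(0,S_N]$, so the probabilities sum to one; this serves as a consistency check. The conclusion is that inverse-transform sampling with Fenwick prefix sums reproduces the categorical law exactly.
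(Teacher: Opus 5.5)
Your proposal is correct and follows the same route as the paper: identify $\{I=i\}$ with $\{S_{i-1}<U\le S_i\}$ and use the uniform law to get probability $(S_i-S_{i-1})/S_N=m_i/S_N$. You also verify the event identity via monotonicity and minimality and treat zero-mass (masked source) intervals explicitly. The paper asserts the identity without that verification and covers zero masses only in the remark after Corollary~\ref{cor:equiv}.
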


\begin{proof}
Since $U\sim\mathrm{Unif}(0,S_N)$, for any interval
$A\subset[0,S_N]$, $\Prob(U\in A)=|A|/S_N$. The event $\{I=i\}$ is
exactly $\{S_{i-1}<U\le S_i\}$, whose length is
$|S_i-S_{i-1}|=m_i$, giving $\Prob(I=i)=m_i/S_N$.
\end{proof}

\begin{corollary}[Equivalence to categorical sampling]
\label{cor:equiv}
At any step $t$, conditional on the history $\mathcal{F}_{t-1}$, the
index $I_t$ produced by the Fenwick-tree inverse-transform draw
satisfies
\[
\Prob(I_t=i\mid\mathcal{F}_{t-1})
=\frac{m_i(t)}{\sum_{j=1}^{N}m_j(t)},
\]
which coincides with the categorical probability used in
Algorithm~\ref{alg:gen-supp}.
\end{corollary}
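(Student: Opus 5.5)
The claim follows from Lemma~\ref{lem:interval} together with the correctness of the Fenwick-tree prefix-sum search. I will condition on $\mathcal F_{t-1}$ throughout. Given this history, the masses $m_i(t)=w_i(D_i(t-1)+1)^{\alpha}$ are deterministic. For the target draw I use the masked masses, with the source mass set to zero.

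\textbf{Step 1: the tree stores the current masses.} The Fenwick tree must hold exactly the current masses $m_i(t)$ when step $t$ begins. I would argue by induction on $t$. After each draw only one node's degree changes, and the standard point update adds $m_i^{\mathrm{new}}-m_i^{\mathrm{old}}$ along the update path. This keeps every internal node equal to the sum of its responsibility range. The source masking for the in-channel is a temporary point update setting $m_{I^{(1)}_t}$ to zero, which is undone after the draw. That restores the invariant.

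\textbf{Step 2: the descent returns the right index.} The Fenwick descent for a value $U\in(0,S_N]$ returns $\min\{k:S_k\ge U\}$. This is the usual binary-lifting search. Its invariant is that the accumulated prefix sum stays strictly below $U$ while the candidate index advances by decreasing powers of two. With positive masses, ties at $S_{k}=U$ have probability zero, and zero-mass nodes (the masked source) occupy an empty interval, so the selection of $k$ is unaffected by either.

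\textbf{Step 3: apply Lemma~\ref{lem:interval}.} Draw $U\sim\mathrm{Unif}(0,S_N)$ independently of $\mathcal F_{t-1}$. Lemma~\ref{lem:interval}, applied conditionally, then gives $\Prob(I_t=i\mid\mathcal F_{t-1})=m_i(t)/\sum_j m_j(t)$. This is exactly the categorical probability used in Algorithm~\ref{alg:gen-supp}, for both the source draw and the masked target draw.

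The main obstacle is Step~2, the correctness of the descent search. It is classical, so I would cite~\cite{fenwick1994S} for it rather than reprove it.
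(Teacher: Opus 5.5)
Your proposal is correct and follows the paper's route. The paper's proof is just the conditional application of Lemma~\ref{lem:interval}, giving $(S_i(t)-S_{i-1}(t))/S_N(t)=m_i(t)/\sum_j m_j(t)$; it takes as given, from the surrounding text, that the Fenwick tree maintains the current prefix sums and that its search returns $\min\{k:S_k\ge U\}$. Your Steps 1--2 make those implementation facts explicit, with one small correction: for the masked target draw you should condition on the current source $I^{(1)}_t$ as well as on $\mathcal F_{t-1}$, since the masked masses depend on it.
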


\begin{proof}
Immediate from Lemma~\ref{lem:interval}:
$\Prob(I_t=i\mid\mathcal{F}_{t-1})
=\bigl(S_i(t)-S_{i-1}(t)\bigr)/S_N(t)
=m_i(t)/\sum_{j}m_j(t)$, which is exactly the categorical pmf.
\end{proof}

To avoid self-loops the target-node mass $m_{I^{(1)}_t}$ is
temporarily set to zero, which is equivalent to assigning the interval
$(S_{I^{(1)}_t-1},S_{I^{(1)}_t}]$ length zero; Lemma~\ref{lem:interval}
and Corollary~\ref{cor:equiv} continue to hold on
$\{1,\dots,N\}\setminus\{I^{(1)}_t\}$. The Fenwick tree finds
$I=\min\{k:S_k\ge U\}$ by binary lifting over the implicit prefix-sum
array without materializing it, equivalent to a binary search on the
CDF, in $O(\log N)$ time; a single-point update after each new edge
is also $O(\log N)$. Algorithm~\ref{alg:fenwick} summarizes the
procedure.

\begin{algorithm}[t]
\caption{Fenwick-tree spatial-network generation}
\label{alg:fenwick}
\begin{algorithmic}[1]
\STATE \textbf{Input:} $V=\{1,\dots,N\}$, horizon $T$, exponent
$\alpha$, weights $\bm{w}^{\mathrm{out}},\bm{w}^{\mathrm{in}}$.
\STATE Set $\Dout(0)=\Din(0)=0$,
$m^{A}_i(0)=w^{A}_i(D^{A}_i(0)+1)^{\alpha}$,
$A\in\{\mathrm{out},\mathrm{in}\}$; build Fenwick trees on
$\{m^{\mathrm{out}}_i(0)\}$ and $\{m^{\mathrm{in}}_i(0)\}$.
\FOR{$t=1$ to $T$}
  \STATE $S^{\mathrm{out}}\leftarrow\sum_i m^{\mathrm{out}}_i(t-1)$;
  draw $U\sim\mathrm{Unif}(0,S^{\mathrm{out}})$; obtain $I^{(1)}_t$ by
  Fenwick prefix-sum inversion
  \STATE temporarily set the $I^{(1)}_t$ in-mass to $0$; set
  $S^{\mathrm{in}}_{-}=S^{\mathrm{in}}-m^{\mathrm{in}}_{I^{(1)}_t}$,
  draw $V\sim\mathrm{Unif}(0,S^{\mathrm{in}}_{-})$, obtain
  $I^{(2)}_t$ by inversion, and restore the mass
  \STATE
  $\Dout_{I^{(1)}_t}\!\mathrel{+}=\!1$,\;
  $\Din_{I^{(2)}_t}\!\mathrel{+}=\!1$; update
  $m^{\mathrm{out}}_{I^{(1)}_t}$ and $m^{\mathrm{in}}_{I^{(2)}_t}$ in
  the Fenwick trees
\ENDFOR
\STATE \textbf{return} edge list $\{e_t\}_{t=1}^{T}$.
\end{algorithmic}
\end{algorithm}

\paragraph{Complexity}
Each step performs a constant number of $O(\log N)$ prefix-sum
queries, inversions, and single-point updates, so the total cost is
$O(T\log N)$, versus $O(TN)$ for the categorical generator. The
speed-up is by orders of magnitude for moderate-to-large $N$
(Table~\ref{tab:timing}): e.g.\ at $N=1000$ with $30{,}000$ edges the
mean generation time drops from $3165.1$~ms to $4.1$~ms.

\begin{table*}[t]
\caption{Generation time (ms) over $10$ runs: categorical
``Sample'' vs.\ ``Fenwick'' generators.}
\label{tab:timing}
\centering
\footnotesize
\begin{tabular}{@{}llrrrrr@{}}
\toprule
Algorithm & Nodes $N$ & Edges $|E|$ & Min & Mean & Median & Max\\
\midrule
Sample  & 50   & 30000  & 225.2834 & 243.4091 & 231.9460 & 343.7936\\
Fenwick & 50   & 30000  & 2.6693   & 2.9426   & 2.8767   & 3.6472\\
Sample  & 50   & 50000  & 372.5650 & 383.3666 & 383.1850 & 391.7366\\
Fenwick & 50   & 50000  & 4.4940   & 4.9647   & 4.7005   & 7.4118\\
Sample  & 50   & 100000 & 750.7992 & 788.4982 & 763.8286 & 981.2024\\
Fenwick & 50   & 100000 & 8.9907   & 10.3229  & 9.4652   & 18.3140\\
Sample  & 100  & 30000  & 302.8406 & 352.7672 & 318.7164 & 570.9076\\
Fenwick & 100  & 30000  & 3.0014   & 3.1387   & 3.0712   & 3.5138\\
Sample  & 100  & 50000  & 511.3196 & 550.4037 & 523.9248 & 774.9083\\
Fenwick & 100  & 50000  & 4.9517   & 5.2704   & 5.1831   & 5.8662\\
Sample  & 100  & 100000 & 1031.0781& 1081.7682& 1051.4940& 1304.7946\\
Fenwick & 100  & 100000 & 10.0438  & 10.4392  & 10.2312  & 12.0284\\
Sample  & 300  & 30000  & 703.3597 & 773.1725 & 715.3936 & 1105.1344\\
Fenwick & 300  & 30000  & 3.4771   & 4.0965   & 3.6928   & 5.8707\\
Sample  & 300  & 50000  & 1162.3771& 1245.3561& 1176.2700& 1466.6472\\
Fenwick & 300  & 50000  & 5.8081   & 6.0837   & 5.9988   & 6.5744\\
Sample  & 300  & 100000 & 2344.5202& 2429.4927& 2363.0491& 2880.3861\\
Fenwick & 300  & 100000 & 11.5492  & 12.0827  & 11.7371  & 14.8995\\
Sample  & 500  & 30000  & 1213.2895& 1377.8721& 1319.4865& 1785.6402\\
Fenwick & 500  & 30000  & 3.6893   & 4.1103   & 3.8266   & 5.4324\\
Sample  & 500  & 50000  & 2000.8517& 2094.6660& 2018.3291& 2556.4067\\
Fenwick & 500  & 50000  & 6.1826   & 6.3159   & 6.2856   & 6.4884\\
Sample  & 1000 & 10000  & 1021.8911& 1067.1172& 1029.8842& 1281.3651\\
Fenwick & 1000 & 10000  & 1.3554   & 1.4041   & 1.4049   & 1.4703\\
Sample  & 1000 & 30000  & 3057.0497& 3165.1426& 3140.5722& 3353.8776\\
Fenwick & 1000 & 30000  & 4.0474   & 4.1408   & 4.1291   & 4.3095\\
\bottomrule
\end{tabular}
\end{table*}

\section{Independent-Weight Baselines}
\label{sec:b2}

We compare three PA-limit fields on the same $N=80$ locations
(Fig.~\ref{fig:ablation-supp}); for reference we label
them B2 (i.i.d.\ lognormal weights), B3 (spatially varying marginals
with independent weights), and B4 (Gaussian-process (GP) correlated
lognormal weights),
with B1 denoting standard PA. This section gives the precise
specification of the two independent baselines (B2, B3).

Baseline B2 isolates the effect of the spatial correlation by removing
it while keeping heavy tails. It uses exactly the proposed setup of
B4, that is, the same $N=80$ locations, the same zero mean, and the same
generator and F-madogram and madogram-coefficient estimators, with the
single change that the exponential covariance of the main paper is
replaced by a diagonal one,
\begin{equation}
\log\bm{w}(\bm{x})\sim\Nrm\!\bigl(\bm{\mu}(\bm{x}),\,\gamma\,I_N\bigr),
\qquad\gamma=1.0,
\label{eq:b2}
\end{equation}
i.e.\ $\Sigma_{ij}=\gamma\,\delta_{ij}$, equivalently the
exponential covariance with $\xi\to0^{+}$. The marginals are
unchanged (lognormal, hence heavy-tailed), so any change in the
diagnostics relative to B4 is attributable solely to the loss of
spatial correlation. Empirically the F-madogram and the madogram
coefficient for \eqref{eq:b2} are flat in the lag distance,
confirming that heavy-tailedness alone does not generate
distance-decaying finite-level dependence. Baseline B3 keeps independent
(diagonal) weights but adds a smooth, deterministic spatial mean
surface to the log-weights,
\begin{align}
\log\bm{w}(\bm{x})
&\sim\Nrm\!\bigl(\bm{\mu}_0,\,\gamma I_N\bigr),\nonumber\\
\{\bm{\mu}_0\}_i
&=0.25(x_{i1}-\bar{x}_1)-0.20(x_{i2}-\bar{x}_2).
\label{eq:b3}
\end{align}
so the marginals vary smoothly across space while the weights remain
spatially independent. This mimics a spatially varying mean without a
joint spatial-dependence structure; its diagnostics are likewise
flat, confirming that spatially varying marginals alone do not
generate distance-decaying finite-level dependence. Baseline B1 (standard
PA) has no node coordinates, so the diagnostics are undefined and that
comparison is conceptual.

\begin{figure}[t]\centering
\includegraphics[width=\columnwidth]{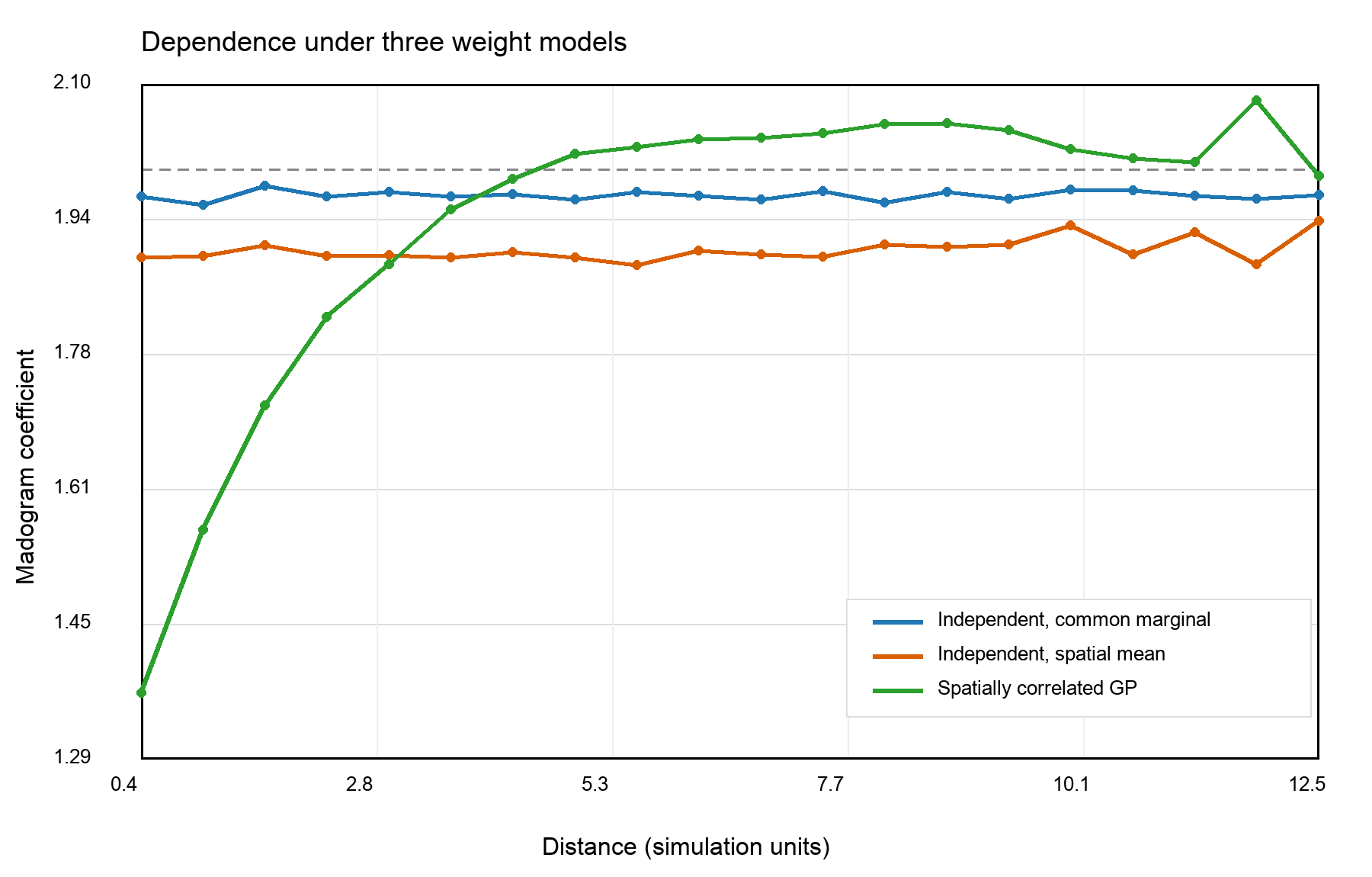}
\caption{Model comparison at the PA limit. Heavy-tailed independent weights
and spatially varying but independent weights produce nearly flat dependence
profiles, whereas GP-correlated weights yield dependence that decays with
distance. The grey line marks the independence reference $\theta_F=2$.}
\label{fig:ablation-supp}
\end{figure}

\section{Weight Recovery and PA-Exponent Identification}
\label{sec:weight-recovery}
We assess the estimation steps at a scale large enough to show the trends but
small enough to reproduce easily. For weight recovery we use
$N\in\{50,75,100\}$, $T\in\{5000,10000,20000,50000\}$, six independent
replications per setting, and the mean squared error
$N^{-1}\sum_i(\widehat w_i-w_i)^2$. Figure~\ref{fig:weight-mse} shows monotone
improvement as $T$ increases, with the larger network requiring more edges for
the same accuracy; this supports using the MM weights as inputs to the spatial
covariance step.

\begin{figure}[t]\centering
\includegraphics[width=\columnwidth]{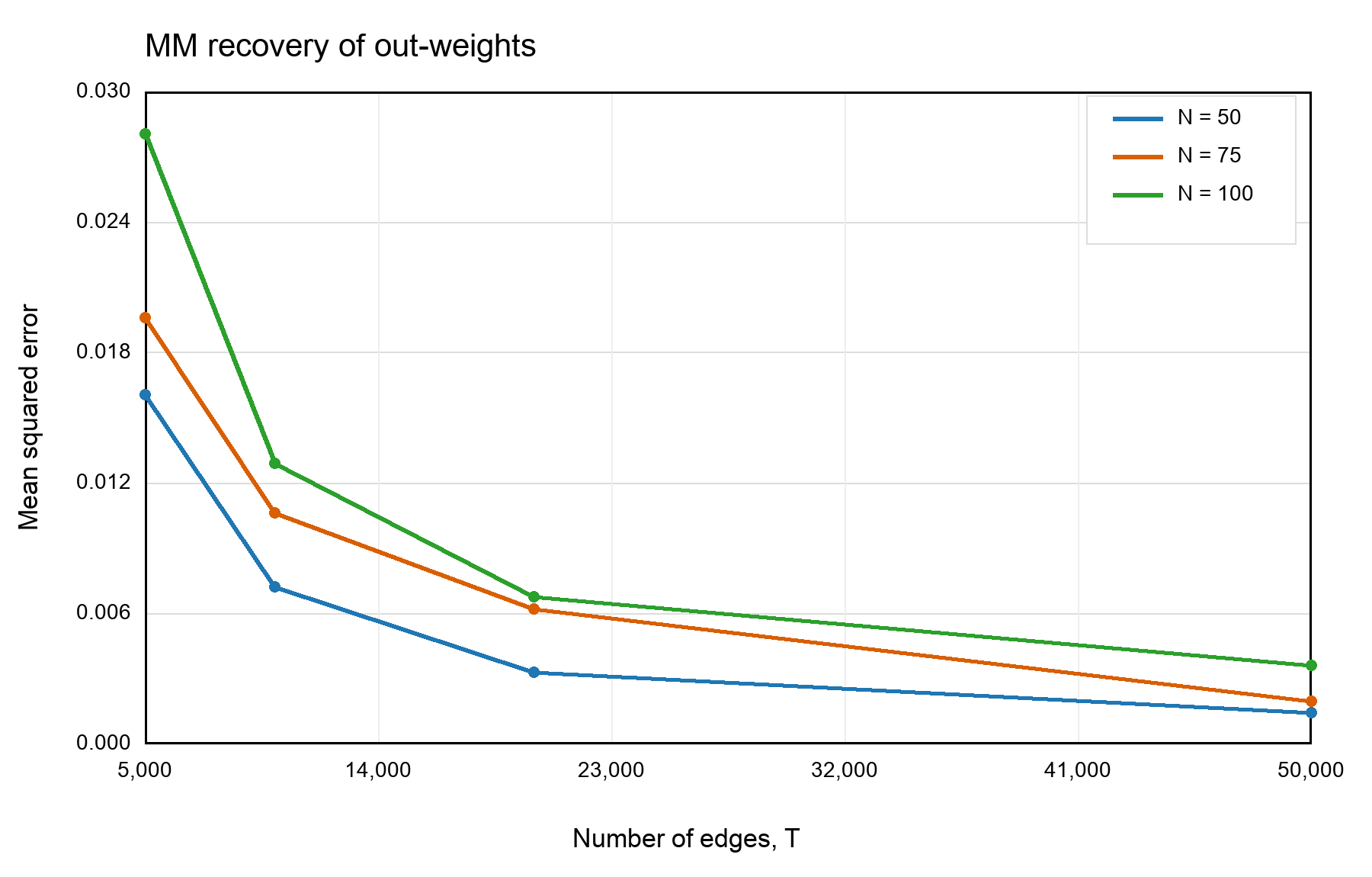}
\caption{MM out-weight recovery for $N\in\{50,75,100\}$ and increasing edge
horizons. MSE decreases with $T$, and larger networks are harder at the same
edge count.}
\label{fig:weight-mse}
\end{figure}

For the PA exponent, Figure~\ref{fig:alpha} profiles $\ell_p(\alpha)$ at
$N=50$ for $T=20000$ and $T=50000$. The profiles are smooth and single-peaked,
but the maximizers, about $0.33$ and $0.43$, remain below the true value
$\alpha_0=0.5$. Once the degree proportions approach their stationary
counterpart, the data identify the powered weight vector more strongly than
$\alpha$ itself, so $\widehat\alpha$ is a useful profile summary rather than an
accurately recovered parameter at these horizons.

\begin{figure}[t]\centering
\includegraphics[width=\columnwidth]{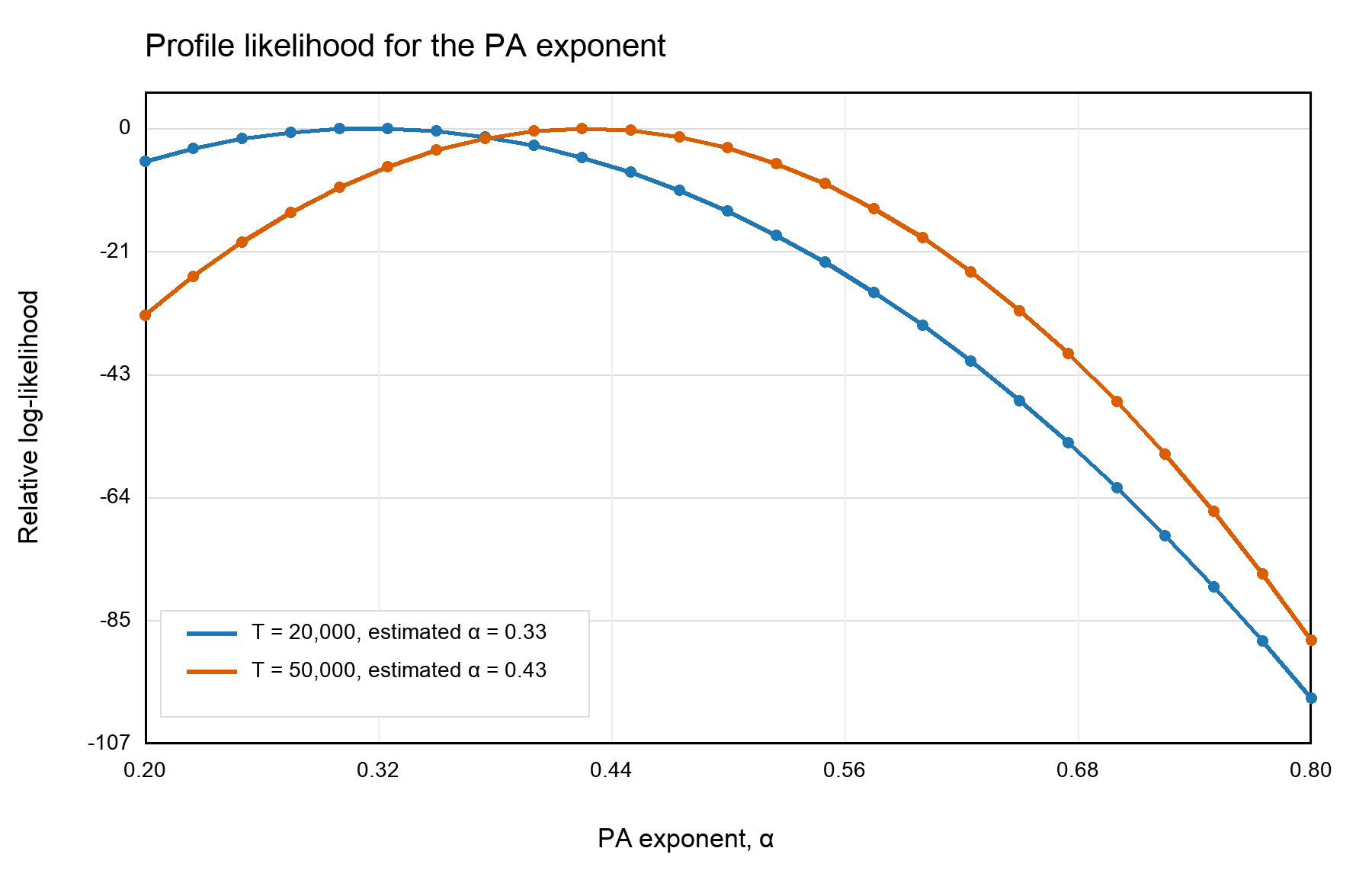}
\caption{Profile likelihood for $\alpha$ at $N=50$. The profile is smooth but
conservative at moderate horizons, moving from $\widehat\alpha\approx0.33$ at
$T=20000$ to $\widehat\alpha\approx0.43$ at $T=50000$.}
\label{fig:alpha}
\end{figure}

\section{Transmission to the Degree Structure}
\label{sec:degdep}
Beyond the weight-level comparison, we verify that the GP spatial
dependence of the proposed field (B4) is inherited by the
\emph{generated network}. This complements the transmission experiment
of Section~VI-B of the main paper by generating the full directed
multigraph, that is, both source (out) and target (in) selection with
self-loop exclusion, and separating the out- and in-degree channels.
Using the canonical dependence-diagnostic setup ($N=80$, zero mean,
$(\gamma^{A},\xi^{A})=(1.0,2.0)$, $\alpha=0.5$, $A\in\{\mathrm{out},
\mathrm{in}\}$, $30$ replicates of $T=20{,}000$ edges),
Fig.~\ref{fig:degdep} reports the madogram dependence coefficient and F-madogram
of the simulated out- and in-degrees against lag distance. Both
estimators rise from strong short-range dependence toward the
independence reference and the out- and in-degree diagnostics nearly
coincide. This confirms that finite-level spatial dependence is
transmitted from the latent weights through the preferential-attachment
aggregation to the observable out-degree proportions. The in-degree curve is
a simulation result for the coupled target process, consistent with the
implicit in-degree-proportion limit rather than the powered out-weight formula. Its
near agreement with the out-channel in this setting illustrates the
small-excluded-mass approximation.

\begin{figure}[t]\centering
\includegraphics[width=\columnwidth]{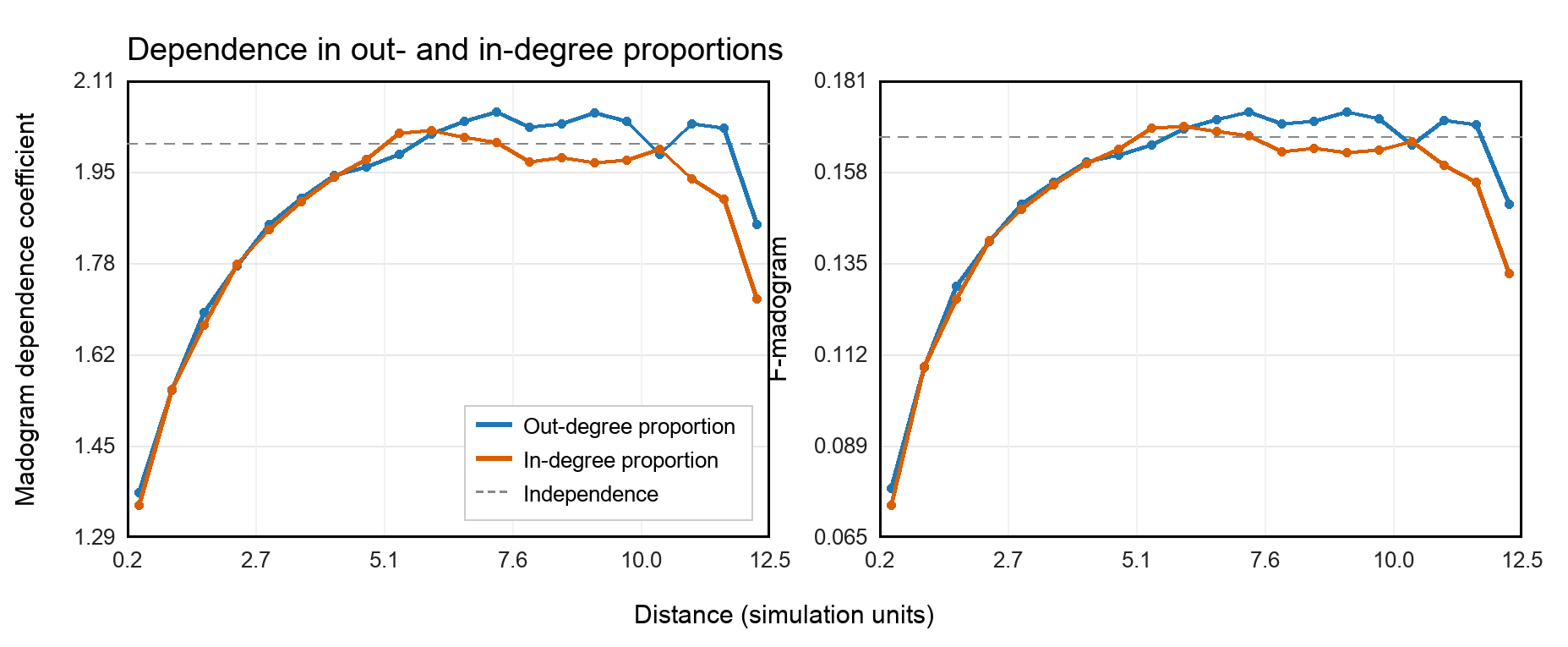}
\caption{Spatial dependence of the generated out- and in-degree proportions:
madogram coefficient (left) and F-madogram (right) vs lag
distance in simulation units. Full directed PA multigraph, $N=80$, lognormal field with
zero mean, $\gamma=1.0$, $\xi=2.0$, $\alpha=0.5$; grey lines mark the
independence reference values $\theta_F=2$ and $v_F=1/6$.}
\label{fig:degdep}\end{figure}

\section{Full Analysis of Dataset~II (Other Division)}
\label{sec:od}

Dataset~II (the Other Division, OD) comprises the $61$ major airports
outside the Eastern Division, with $8{,}981{,}748$ flight records over
2015--2019 whose endpoints both lie in OD, processed into three
within-month observational units as in the main
paper.

\subsection{Spatial Dependence in Airport Activity}
Fig.~\ref{fig:od_dep} reports the binned madogram coefficient and
F-madogram for the $61$ OD airports against geodesic distance, computed
on the raw degrees as in the main paper. The qualitative behavior
matches the ED network---both curves increase monotonically with
distance and plateau below the independence reference---reflecting the same
two ingredients: short-range spatial dependence and variation in observed
segment volume. The OD points reach somewhat
higher values at the longest lags (approaching $1.9$ near $3000$~km),
consistent with a slightly larger fitted spatial range
(Table~\ref{tab:regionS}), though, as for ED, the long-range level is
governed mainly by the common mode rather than by spatial proximity.
Both fitted spatial components are short-range, although the pre-whitened
OD estimate is longer than ED's (Section~\ref{sec:prewhite-diagnostics}).

\begin{figure}[t]\centering
\includegraphics[width=\columnwidth]{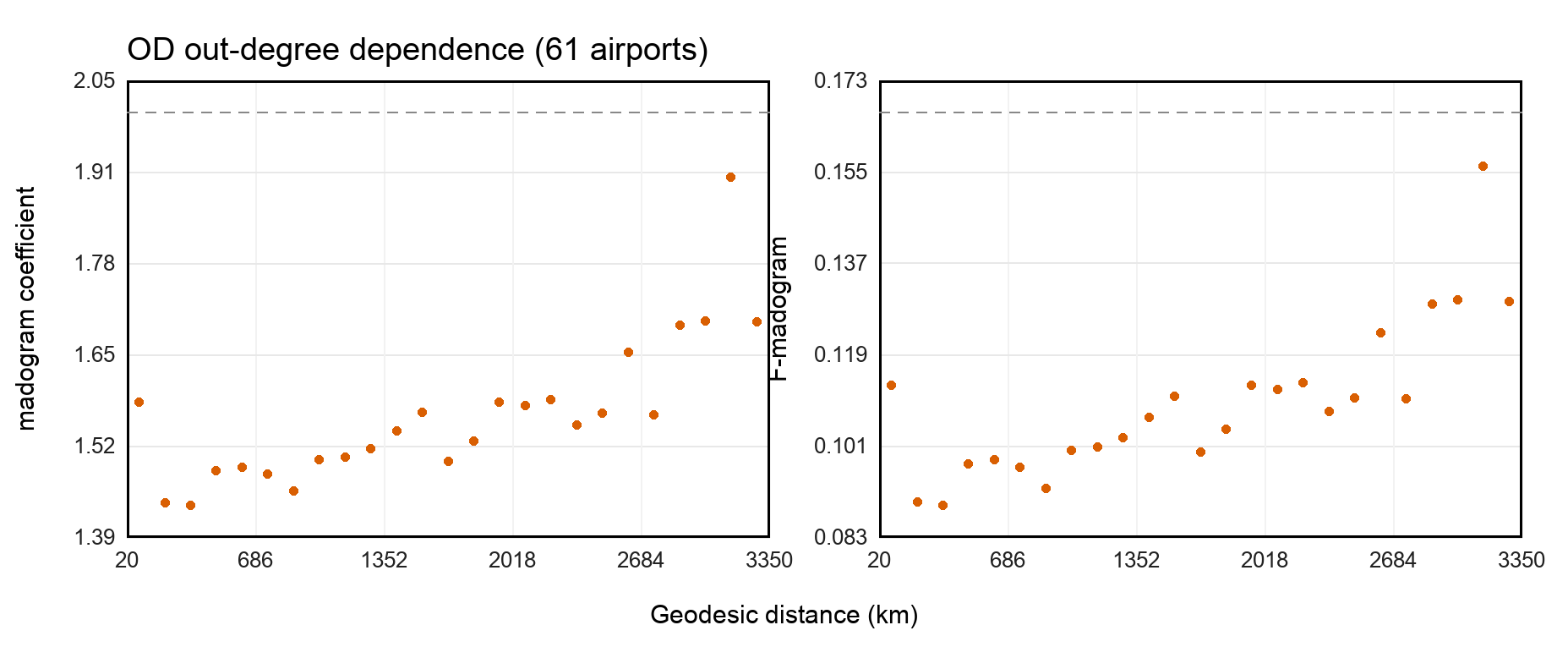}
\caption{Empirical madogram coefficient (left) and F-madogram
(right) vs geodesic distance (km) for the 61 OD airports.}
\label{fig:od_dep}\end{figure}

\subsection{Observed-Volume Reconstruction}
Fig.~\ref{fig:od_modelfit} repeats the observed-volume decomposition of
the main paper for the OD network. The first principal component of the
segment-to-segment log out-degrees accounts for $64.8\%$ of
the variance and correlates $0.77$ with the per-segment total volume,
a weaker common volume mode than in ED ($82.6\%$ and $0.99$).
Working with degree proportions removes the direct volume scale and lifts the madogram
coefficient to the independence reference, whereas recombining the degree proportions
with independently resampled volumes reproduces the raw-degree common mode, as
in the ED reconstruction. The OD common mode is thus less volume-dominated and relatively
more spatial than the ED common mode, consistent with the slightly larger
fitted spatial range of the OD network.

\begin{figure}[t]\centering
\includegraphics[width=\columnwidth]{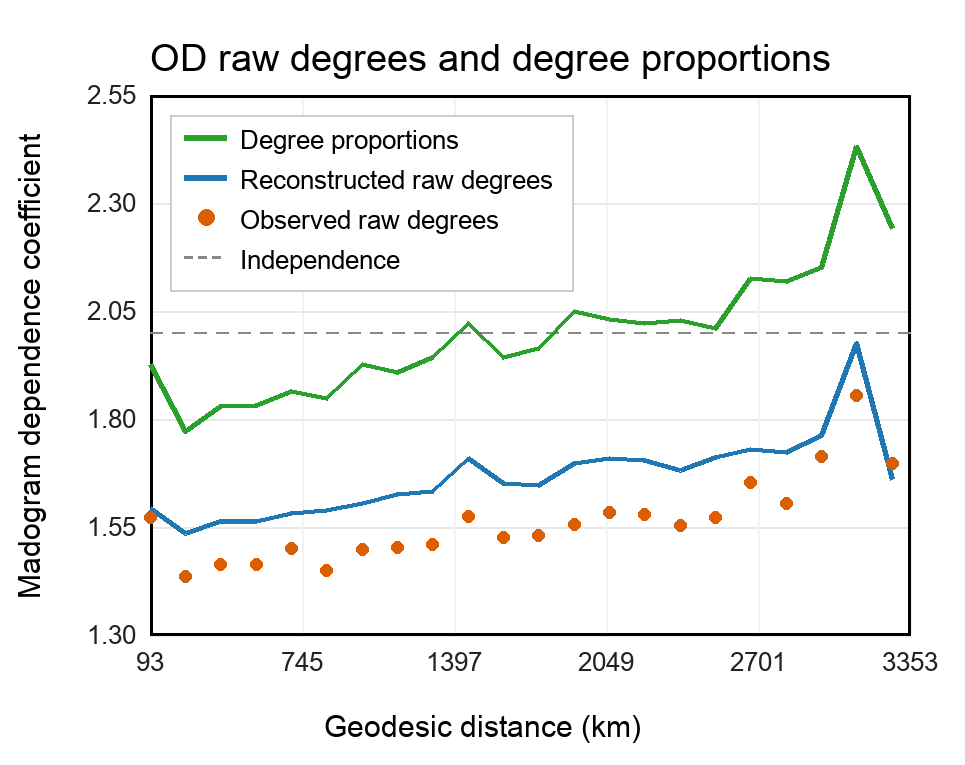}
\caption{Observed-volume reconstruction for the OD network. Markers: madogram
coefficient of the raw out-degrees. Blue: the
semiparametric reconstruction (degree proportions recombined with independently
resampled segment volumes), which reproduces the long-distance dependence. Green: the degree proportions, which rise toward the independence reference $\theta_F=2$.}
\label{fig:od_modelfit}\end{figure}

\subsection{Parameter Inference}
At the reference exponent $\alpha=0.2$, the exact large-horizon out-degree
proportion inversion is transformed to orthonormal log-ratio coordinates and pre-whitened
as in Section~\ref{sec:innovation-likelihood}. The in-weights are recovered
by the exact no-self-loop inversion in Proposition~3 of the main paper and
are then processed identically.
The OD estimates are
$\widehat\phi=0.917$ and $0.920$ for the out- and in-fields, with
innovation variances $0.00385$ and $0.00383$. Converting to stationary
variance gives $0.0242$ and $0.0249$, while the spatial e-folding ranges
are $109.8$ and $109.5$~km, with conditional profile intervals
$[103.6,116.4]$ and $[103.3,116.0]$~km. These intervals condition on the
constructed weights and fitted autoregressive (AR) parameters. The median
residual autocorrelation functions (ACFs) are
$0.038$ and $0.034$, supporting the plug-in innovation approximation. Table
\ref{tab:regionS} reproduces the cross-region comparison.

\begin{table}[t]
\caption{Pre-whitened spatial-component estimates by region at
$\alpha=0.2$ (cf.\ Table~II of the main paper). $\gamma$ is stationary.}
\label{tab:regionS}
\centering
\scriptsize
\setlength{\tabcolsep}{2.5pt}
\begin{tabular}{@{}lcccccc@{}}
\toprule
Region & $\phi^{\rm out}$ & $\gamma^{\rm out}$ & $\xi^{\rm out}$ &
$\phi^{\rm in}$ & $\gamma^{\rm in}$ & $\xi^{\rm in}$\\
\midrule
ED & 0.934 & 0.0274 & 0.736 & 0.934 & 0.0281 & 0.729\\
OD & 0.917 & 0.0242 & 1.098 & 0.920 & 0.0249 & 1.095\\
\bottomrule
\end{tabular}
\end{table}

\subsection{Cross-Region Comparison}
The two regions share the same qualitative structure: short-range spatial
variation superposed on observed segment-volume variation. The
pre-whitened spatial e-folding ranges are approximately $74$~km for ED and
$110$~km for OD. Within each region the
out-/in-parameters remain close, descriptively reflecting the structural
symmetry of the U.S. domestic network. A finer regional comparison would require
either a richer spatiotemporal model or a longer record.

\section{Temporal Pre-whitening Diagnostics}
\label{sec:prewhite-diagnostics}

Table~\ref{tab:prewhite} reports the complete two-stage estimates. The
large AR coefficients reflect persistent flight schedules, while the
residual ACFs are close to zero after filtering. The conditional
95\% profile intervals are $[69.8,77.6]$ and $[69.2,76.9]$~km for ED,
and $[103.6,116.4]$ and $[103.3,116.0]$~km for OD; they do not include
uncertainty from weight construction or AR fitting. The Gaussian
innovation model applies to the latent residual process; fitted residuals
are used through the conditional quasi-likelihood. The innovation variance
$\gamma_e$ is much smaller than the
stationary variance because
$\gamma_e=(1-\phi^2)\gamma$. 

\begin{table}[t]
\caption{Temporal pre-whitening and spatial innovation diagnostics.
Spatial ranges are in km.}
\label{tab:prewhite}
\centering
\scriptsize
\begin{tabular}{@{}llrrrrr@{}}
\toprule
Region & Field & $\widehat\phi$ & $\widehat\gamma_e$ &
$\widehat\gamma$ & $\widehat\xi$ & Median ACF(1)\\
\midrule
ED & Out & 0.934 & 0.00350 & 0.0274 & 73.6 & $-0.051$\\
ED & In  & 0.934 & 0.00358 & 0.0281 & 72.9 & $-0.054$\\
OD & Out & 0.917 & 0.00385 & 0.0242 & 109.8 & 0.038\\
OD & In  & 0.920 & 0.00383 & 0.0249 & 109.5 & 0.034\\
\bottomrule
\end{tabular}
\end{table}

\section{Cross-Checks on the PA Exponent}
\label{sec:alpha-checks}

The main paper uses $\alpha=0.2$ as an empirically informed reference for
the airline analysis and establishes asymptotic range invariance for compatible
transformed out-weights. As an additional
check on the strength of preferential attachment, we estimate $\alpha$
by two methods that use the edge dynamics directly rather than the
stationary profile likelihood, and we calibrate each against synthetic
networks with a known exponent and the empirical hub heterogeneity.

\subsection{Methods}
The first method is the attachment-rate regression of~\cite{jeongS}. For a target node currently at degree $k$, the
per-opportunity attachment rate is estimated as the number of selections
of a degree-$k$ node divided by the node-events spent at degree $k$, and
$\alpha$ is the slope of $\log A(k)$ on $\log(k+1)$. This estimator
ignores node fitness, so it is biased upward when fitness is
heterogeneous: on synthetic networks with the empirical hub heterogeneity
and a true $\alpha_0=0.5$ it returns about $0.85$.

The second method is a PAFit-style joint estimator~\cite{phamS}. The
attachment function $A(k)$ and the node fitness are estimated jointly by
alternating multiplicative updates, which removes most of the fitness
confound. A curvature penalty on $\log A(k)$ over geometric degree bins
regularizes the sparse high-degree tail. Degrees are reset per segment,
matching the per-segment weight estimation, and the fitness is shared
across segments.

\subsection{Calibration and Results}
On synthetic data with the empirical hub heterogeneity, the PAFit-style
estimator recovers a known $\alpha_0=0.5$ as about $0.56$, far closer
than the attachment-rate value of $0.85$. The curvature penalty lowers
the level by roughly $0.02$ without materially changing sensitivity to
the true exponent. Because the attachment likelihood is conditional on
the segment edge count, observed volume does not require an additional
latent factor in this cross-check.

Table~\ref{tab:alpha-calib} reports the PAFit-style calibration across a
range of true exponents, without and with the penalty, together with the
real ED in-degree reading. The estimate increases with $\alpha_0$ but
with a strongly attenuated slope of about $0.25$ rather than the ideal
slope of one, and it saturates in the upper range, so large exponents are
only weakly distinguished. The real reading sits at the level of the
$\alpha_0=0.2$ row.

\begin{table}[t]
\caption{PAFit-style estimate $\widehat\alpha$ versus the true exponent
$\alpha_0$ on synthetic networks with the empirical hub heterogeneity,
without and with the curvature penalty, and the real ED in-degree
reading.}
\label{tab:alpha-calib}
\centering
\begin{tabular}{@{}lcc@{}}
\toprule
True $\alpha_0$ & $\widehat\alpha$ (no penalty) & $\widehat\alpha$ (penalty)\\
\midrule
$0.0$ & 0.433 & 0.408\\
$0.2$ & 0.489 & 0.466\\
$0.4$ & 0.543 & 0.520\\
$0.5$ & 0.564 & 0.540\\
$0.6$ & 0.581 & 0.556\\
$0.7$ & 0.606 & 0.582\\
$0.8$ & 0.671 & 0.648\\
\midrule
real ED in-degree & 0.488 & 0.455\\
\bottomrule
\end{tabular}
\end{table}

\subsection{Interpretation}
Both estimators, once calibrated, favor a small exponent of about
$0.15$ to $0.2$ and provide no support for strong preferential
attachment. The saturation of the calibration in the upper range means
the data weakly distinguish large exponents, which is the same weak
identifiability seen in the profile likelihood. These cross-checks are
exploratory: the day-aggregated records do not resolve the within-segment
edge order, which biases the apparent attachment downward, and the
PAFit-style estimator retains a residual fitness bias that the
calibration accounts for. They are consistent with the weak-identification
analysis of the main paper. The invariance proposition concerns the
asymptotic range; finite-horizon variances and rankings can remain
sensitive to the chosen exponent. 

\subsection{Sensitivity of the Spatial Estimates to $\alpha$}
\label{sec:alpha-sens}
Proposition~2 of the main paper predicts that, in the large-horizon
out-channel inversion $\log w^{\mathrm{out}}_{\ell i}=(1-\alpha)
\log(D^{\mathrm{out}}_{\ell i}/T_\ell)+c^{\mathrm{out}}_\ell$, changing
$\alpha$ rescales every log-ratio coordinate by the common factor $(1-\alpha)$.
Both the pooled AR estimator and the spatial profile objective are
invariant to a common positive scaling of the coordinates, so $\widehat\phi$
and the e-folding range $\widehat\xi$ are unchanged, while the stationary
variance $\widehat\gamma$ rescales by $\{(1-\alpha')/(1-\alpha)\}^2$.
For the exact in-degree inverse, $\widehat v$ in Proposition~3 depends only
on the observed out- and in-degree proportions, but
$\log\widehat w_j^{\mathrm{in}}(\alpha)
=\widehat v_j-\alpha\log p_j+c(\alpha)$ is not generally a common
rescaling. Its range must therefore be re-estimated.

Table~\ref{tab:alpha-sens} compares the reference $\alpha=0.2$ with
$\alpha=0.5$. The out-range and $\widehat\phi$ are invariant, while the
out-variance scales by $\{(1-0.5)/(1-0.2)\}^2=0.390625$. Under the exact
inversion, the ED in-range changes from $72.9$ to $72.2$~km and the OD
in-range from $109.5$ to $108.5$~km. Thus the exact algebraic invariance is
specific to the out channel, but the empirical in-range is also insensitive
over these two values. The variance remains the parameter most affected by
$\alpha$.

\begin{table}[t]
\caption{Spatial-component estimates at the reference $\alpha=0.2$ and the
sensitivity value $\alpha=0.5$. The out channel uses the explicit inversion;
the in channel uses the exact no-self-loop inversion. Ranges are in km.}
\label{tab:alpha-sens}
\centering
\scriptsize
\setlength{\tabcolsep}{3pt}
\begin{tabular}{@{}llcccccc@{}}
\toprule
& & \multicolumn{3}{c}{$\alpha=0.2$} & \multicolumn{3}{c}{$\alpha=0.5$}\\
\cmidrule(lr){3-5}\cmidrule(lr){6-8}
Region & Field & $\widehat\phi$ & $\widehat\gamma$ & $\widehat\xi$ &
$\widehat\phi$ & $\widehat\gamma$ & $\widehat\xi$\\
\midrule
ED & Out & 0.934 & 0.0274 & 73.6  & 0.934 & 0.0107 & 73.6\\
ED & In  & 0.934 & 0.0281 & 72.9  & 0.934 & 0.0111 & 72.2\\
OD & Out & 0.917 & 0.0242 & 109.8 & 0.917 & 0.0095 & 109.8\\
OD & In  & 0.920 & 0.0249 & 109.5 & 0.920 & 0.0098 & 108.5\\
\bottomrule
\end{tabular}
\end{table}

\end{document}